\documentclass{article}

\usepackage{geometry}
\usepackage{graphicx}
\usepackage{amsmath,amsthm,amssymb}
\usepackage{bbm}
\usepackage{url}
\usepackage{tikz}
\usetikzlibrary{graphs,graphs.standard,quotes,arrows.meta}
\usepackage{xcolor}
\usepackage{natbib}
\usepackage{hyperref}
\usepackage{mathtools}
\newcommand{\E}{\mathbb{E}}

\newcommand{\codeg}{\mathrm{codeg}}
\newcommand{\match}{\mathrm{match}}
\newcommand{\pop}{\mathrm{pop}}
\newcommand{\op}{\mathrm{op}}
\DeclareMathOperator{\Deg}{Deg} 

\newcommand{\Ind}{\mathbbm{1}}

\newcommand{\ones}{\mathbf{1}}

\newcommand{\gap}{\mathrm{gap}}

\DeclareMathOperator{\diag}{diag}

\DeclareMathOperator{\sep}{sep}

\newtheorem{theorem}{Theorem}[section]
\newtheorem{assumption}[theorem]{Assumption}
\newtheorem{corollary}[theorem]{Corollary}
\newtheorem{lemma}[theorem]{Lemma}
\newtheorem{proposition}[theorem]{Proposition}
\theoremstyle{remark}
\newtheorem{remark}{Remark}

\title{Lin–Lu–Yau Ricci Reweighting in the SBM: Uniform Curvature Concentration and Finite-Horizon Tracking}
\author{Varun Kotharkar}

\begin{document}
\maketitle

\begin{abstract}
We study curvature-driven edge reweighting for community recovery in the balanced two-block stochastic block model.
Given a graph $G$ with initial weights $W^{(0)}=A$, we iteratively update edge weights by Lin--Lu--Yau
(Ollivier-type) Ricci curvature \citep{ollivier2009,LinLuYau2011}, while all transportation costs are computed in the
\emph{unweighted} graph metric, $W^{(t+1)}_{xy} := \kappa_{W^{(t)}}(x,y)\,\Ind\{\{x,y\}\in E\}$. In a moderate-density regime, we prove \emph{uniform} concentration of edge curvatures and show that a single Ricci
reweighting step produces a two-level weighting that amplifies within-block connectivity relative to across-block
connectivity. As a consequence, spectral clustering on the reweighted graph has a strictly larger population eigengap, and we obtain corresponding nonasymptotic perturbation bounds and Davis--Kahan misclustering guarantees; moreover, we give an explicit sufficient condition under which the one-step misclustering bound improves\citep{vonluxburg2007,rohe2011,lei2015}.
We further analyze a fixed finite horizon $T$ of iterated reweighting: the random iterates track a deterministic
two-weight recursion uniformly over $t\le T$, yielding a principled finite-horizon “curvature flow” interpretation
for community detection in a canonical random graph model.
\end{abstract}

\section{Introduction}\label{sec:intro}

Ollivier--Ricci curvature and its Lin--Lu--Yau (LLY) limit provide a transportation-based notion of
local geometry on graphs~\citep{ollivier2009,LinLuYau2011}. These curvatures have been used both as descriptive network
statistics and as primitives for algorithmic procedures, including graph reweighting and curvature-driven
denoising/regularization and related discrete curvature flows~\citep{weber2017,ni2019,sia2019}.
This paper studies empirical LLY curvature in the balanced two-block stochastic block model (SBM) and analyzes a simple
\emph{Ricci reweighting} scheme: each observed edge is assigned a weight derived from its empirical LLY curvature, producing
a weighted graph whose normalized Laplacian is then used for spectral clustering~\citep{vonluxburg2007,rohe2011,lei2015}.
Throughout, all curvatures are computed using the \emph{unweighted} graph distance.

Unlike prior curvature-driven heuristics and network Ricci-flow procedures that are primarily empirical and often
evolve the underlying metric/graph weights without finite-sample guarantees \citep{ni2019,sia2019,weber2017},
we give a probabilistic, nonasymptotic analysis in a canonical SBM: (i) uniform-over-edges curvature concentration,
(ii) a deterministic two-weight mean-field proxy, and (iii) finite-horizon tracking of the iterates with explicit rates.

\paragraph{Model.}
We consider the balanced SBM on $2n$ vertices with two communities of size $n$.
Edges inside a community appear with probability $p_0=p_0(n)$ and edges across communities appear
with probability $p_1=p_1(n)$, with $0<p_1<p_0<1$. Write $\bar p:=\tfrac12(p_0+p_1)$.

\paragraph{Contributions.}
\begin{itemize}
\item \textbf{Uniform curvature concentration and a two-level deterministic proxy.}
In a moderately dense regime, we prove that empirical LLY curvature $\kappa(i,j)$ concentrates
\emph{uniformly over all edges} around two deterministic levels:
a within-community level $w_{\mathrm{in}}^{(n)}$ and a cross-community level $w_{\mathrm{out}}^{(n)}$.

\item \textbf{One-step Ricci reweighting amplifies community contrast.}
Using curvature concentration, we show that one-step reweighting increases the separation between
within- and cross-community interactions in the normalized Laplacian: at the population level this
yields a larger eigengap, and at the sample level we obtain corresponding perturbation bounds and
improved spectral-clustering guarantees~\citep{rohe2011,lei2015,yu2015}.

\item \textbf{Finite-horizon iterated reweighting follows a deterministic two-scalar recursion.}
For fixed $T$, we analyze iterated curvature reweighting and show that the empirical weighted
Laplacians track deterministic ``two-weight'' benchmarks uniformly over $t\le T$.
The benchmark weights evolve via an explicit two-dimensional mean-field map; in particular,
the induced \emph{benchmark} contrast and benchmark eigengap are monotone in $t$.
\end{itemize}

\paragraph{Standing moderately dense window.}
To avoid re-stating slightly different sparsity conditions in each section, we adopt a single
standing window (Assumption~\ref{ass:MDT}) that is strong enough to support all arguments in the
paper (uniform degree/co-degree control, matching-based transport lower bounds, and stability along
a finite number of reweighting steps). Earlier one-step statements remain valid under weaker
conditions; we record such minimal requirements when it improves clarity.

\paragraph{Organization.}
Section~\ref{sec:prelim} introduces transportation distance, Ollivier and LLY curvature, and the
standing SBM window. Section~\ref{sec:curv-sbm} proves uniform LLY curvature concentration on edges.
Section~\ref{sec:one-step} analyzes one-step Ricci reweighting and its spectral consequences.
Section~\ref{sec:iterated-ricci} extends the analysis to finite-horizon iterated reweighting.

\paragraph{Scope and non-scope.}
\textbf{What we do:}
(i) we analyze $\alpha$-lazy Ollivier curvature and its Lin--Lu--Yau limit on the balanced two-block SBM~\citep{ollivier2009,LinLuYau2011};
(ii) we study the Ricci-driven edge reweighting update $W^{(t+1)}_{xy}=\kappa_{W^{(t)}}(x,y)$ on the \emph{fixed} edge set $E(G)$;
(iii) we prove uniform concentration and finite-horizon ($t\le T$) tracking by a deterministic two-weight recursion;
(iv) we quantify the spectral consequences for normalized Laplacians and the resulting misclustering guarantees~\citep{vonluxburg2007,rohe2011,lei2015,yu2015}. This design choice isolates the effect of \emph{reweighting} from metric evolution and keeps transport local/stable
over finite horizons, enabling uniform concentration and iterate stability estimates.

\paragraph{Related work.}
Coarse Ollivier--Ricci curvature was introduced for Markov chains on metric spaces by \citet{ollivier2009},
and the Lin--Lu--Yau formulation adapts this perspective to graphs and a discrete-time limit \citep{LinLuYau2011}.
Curvature has since been used as a network descriptor and as a primitive for geometric denoising and flow-like
procedures; see, e.g., Forman--Ricci curvature and geometric flows in complex networks \citep{weber2017} and
Ricci-flow-inspired community detection heuristics \citep{ni2019,sia2019}.

On the statistical side, community recovery in SBMs has a mature theory with sharp thresholds and algorithmic
phase transitions; see the survey \citet{abbe2018} and references therein. For exact recovery thresholds in
the sparse $\log n/n$ regime, see \citet{abbe2016} and related work on optimal reconstruction and belief
propagation \citep{mossel2015,mossel2016}. Spectral methods for SBMs and their perturbation analyses are also
well understood, including normalized Laplacian guarantees \citep{rohe2011,lei2015} and standard eigenspace
perturbation tools \citep{davis1970,yu2015}.

Our contribution connects these lines by providing a finite-sample analysis of curvature-driven \emph{iterated}
edge reweighting in the balanced two-block SBM: we prove uniform curvature concentration, identify an explicit
two-weight deterministic proxy, and show finite-horizon tracking of the iterates with quantitative spectral and
misclustering consequences.

\section{Preliminaries}\label{sec:prelim}

\paragraph{Asymptotic regime and constants.}
Throughout, $n\to\infty$. All probabilistic statements are with respect to the randomness of the
graph model under consideration (typically the SBM), and unless stated otherwise, probabilities
and expectations are taken \emph{conditional on the community labels} $\sigma$.
We write ``w.h.p.'' for events whose probability tends to $1$ as $n\to\infty$.
In SBM results, edge probabilities may depend on $n$: we write $p_0=p_0(n)$ and $p_1=p_1(n)$ with
$0<p_1<p_0<1$. We write $\bar p:=\tfrac12(p_0+p_1)$.
A constant is ``absolute'' if it is independent of $n$; it may depend on fixed parameters declared once
(e.g.\ a fixed horizon $T$) and on uniform model bounds (e.g.\ the constant $\rho$ in
Assumption~\ref{ass:BC}).

\paragraph{Indexing and asymptotic notation.}
For an integer $m\ge 1$ write $[m]:=\{1,\dots,m\}$.
For nonnegative sequences $(a_n),(b_n)$:
\begin{itemize}
\item $a_n = O(b_n)$ means $\exists\,C<\infty$ such that $a_n \le C b_n$ for all large $n$.
\item $a_n = o(b_n)$ means $a_n/b_n \to 0$.
\item $a_n \lesssim b_n$ means $a_n \le C b_n$ for a constant $C$ independent of $n$.
\item $a_n \gtrsim b_n$ means $b_n \lesssim a_n$.
\item $a_n \asymp b_n$ means both $a_n \lesssim b_n$ and $b_n \lesssim a_n$.
\item $a_n \sim b_n$ means $a_n/b_n \to 1$.
\end{itemize}

\paragraph{Indicators and all-ones objects.}
We write $\Ind\{\mathcal E\}$ for the indicator of an event $\mathcal E$.
Let $\ones_m\in\mathbb R^m$ denote the all-ones vector and let $J_m:=\ones_m\ones_m^\top$ be the $m\times m$
all-ones matrix. (When the dimension is clear we drop the subscript.)

\begin{table}[t]
\centering
\small
\begin{tabular}{p{0.27\textwidth}p{0.68\textwidth}}
\hline
Symbol & Meaning \\
\hline
$G=(V,E)$ & finite simple (unweighted) graph with vertex set $V$ and edge set $E$ \\
$d(\cdot,\cdot)$ & unweighted graph distance (possibly $\infty$ if disconnected) \\
$\Gamma(x)$ & neighbor set $\{v:\{x,v\}\in E\}$ \\
$D_x$ & (unweighted) degree $|\Gamma(x)|$ \\
$\Gamma_{xy}$ & common neighborhood $\Gamma(x)\cap\Gamma(y)$ \\
$D_{xy}$ & (unweighted) co-degree $|\Gamma_{xy}|$ \\
$W$ & symmetric nonnegative weight matrix supported on $E$ \\
$d_x(W)$ & weighted degree $\sum_u W_{xu}$ \\
$\Deg(W)$ & weighted degree matrix $\diag\big((d_x(W))_{x\in V}\big)$ \\
$L(W)$, $S(W)$ & normalized Laplacian $I-\Deg(W)^{-1/2}W\Deg(W)^{-1/2}$ and normalized adjacency $S(W)=\Deg(W)^{-1/2}W\Deg(W)^{-1/2}$ \\
$m_x^\alpha$, $m_{x,W}^\alpha$ & $\alpha$-lazy neighbor measures (unweighted / weighted) \\
$W_1(\cdot,\cdot)$ & $1$--Wasserstein distance w.r.t.\ $d(\cdot,\cdot)$ \\
$\kappa_\alpha(x,y)$, $\kappa(x,y)$ & $\alpha$--Ollivier curvature and Lin--Lu--Yau (LLY) curvature \\
$\sigma$, $B_1,B_2$ & SBM labels and blocks; $|B_1|=|B_2|=n$ \\
$p_0,p_1$, $\bar p$ & within-/between-block edge probabilities; $\bar p=\tfrac12(p_0+p_1)$ \\
$\varepsilon_n$, $\eta_n$ & rates $\varepsilon_n=\sqrt{\log n/(n\bar p)}$, $\eta_n=\sqrt{\log n/(n\bar p^{\,3})}$ \\
\hline
\end{tabular}
\caption{Frequently used notation}
\label{tab:notation}
\end{table}

\paragraph{Normalized Laplacian as an operator.}
For any symmetric weight matrix $W\ge 0$ supported on $E(G)$ with $d_x(W)>0$ for all $x\in V$, define
\[
\Deg(W):=\diag\big((d_x(W))_{x\in V}\big),
\qquad
L(W):=I_{|V|}-\Deg(W)^{-1/2}W\Deg(W)^{-1/2},
\]
\[
S(W):=\Deg(W)^{-1/2}W\Deg(W)^{-1/2}.
\]
(We reserve $D_x,D_{xy}$ for \emph{unweighted} degree/co-degree, and $d_x(W),\Deg(W)$ for their weighted analogues.)

\subsection{A standing SBM window used throughout}\label{subsec:window}

\begin{assumption}[Balanced contrast + not-too-dense]\label{ass:BC}
There exists a constant $\rho\in(0,1/2)$ such that for all sufficiently large $n$,
\[
\rho\ \le\ \frac{p_1}{p_0}\ \le\ 1-\rho,
\qquad\text{and}\qquad
p_0\ \le\ 1-\rho.
\]
\end{assumption}

\begin{remark}[Why Assumption~\ref{ass:BC} is imposed]
The comparability bound implies $p_0\asymp_\rho p_1\asymp_\rho \bar p$, where $\bar p:=\tfrac12(p_0+p_1)$.
The constraint $p_0\le 1-\rho$ excludes the nearly-complete-graph corner; in particular it ensures that
for typical vertex pairs, the \emph{exclusive neighborhoods} have linear size on the $n\bar p$ scale,
which is essential for matching-based transport constructions.
\end{remark}

\begin{assumption}[Moderately dense window for a fixed horizon $T$]\label{ass:MDT}
Fix $T\in\mathbb N$. Assume Assumption~\ref{ass:BC} and
\begin{equation}\label{eq:MDT}
\frac{n\,\bar p^{\,3}}{\log n}\ \longrightarrow\ \infty\qquad\text{as }n\to\infty.
\end{equation}
Equivalently, $n\,\bar p^{\,3}\gg \log n$.
\end{assumption}

\begin{remark}[Immediate consequences of Assumption~\ref{ass:MDT}]\label{rem:MDT-conseq}
Under Assumption~\ref{ass:BC}, $p_0\asymp_\rho p_1\asymp_\rho \bar p$, hence
\[
n p_0 p_1\ \asymp_\rho\ n\bar p^{\,2},
\qquad
n p_1 \bar p\ \asymp_\rho\ n\bar p^{\,2}.
\]
Since $\bar p\le 1$, \eqref{eq:MDT} implies $n\bar p^{\,2}\gg \log n$ and $n\bar p\gg \log n$.
Thus uniform degree and co-degree concentration (over all vertices/pairs) holds w.h.p., and the
matching-based arguments used for curvature lower bounds (and their iterated analogues) can be run uniformly
for $t\le T$.
\end{remark}

\paragraph{Rates.}
Define
\[
\varepsilon_n := \sqrt{\frac{\log n}{n\bar p}},
\qquad
\eta_n := \sqrt{\frac{\log n}{n\bar p^{\,3}}}=\frac{\varepsilon_n}{\bar p}.
\]
Under Assumption~\ref{ass:MDT}, $\varepsilon_n=o(\bar p)$ and $\eta_n=o(1)$.

\subsection{Graphs, measures, and transportation distance}\label{subsec:transport}

\paragraph{Graphs and distances.}
Let $G=(V,E)$ be a finite, simple graph. Unless stated otherwise, $G$ is \emph{unweighted} and all
transport distances are computed in the unweighted graph metric $d(\cdot,\cdot)$.
If $G$ is disconnected we interpret $d(x,y)=\infty$ across components, so $W_1$ may be infinite in general.
In this paper we only apply $W_1$ (and hence curvature) to pairs with $d(x,y)=1$, i.e.\ edges.

For $x\in V$, define the neighbor set and (unweighted) degree
\[
\Gamma(x):=\{v\in V:\{x,v\}\in E\},
\qquad
D_x := |\Gamma(x)|.
\]
For distinct $x,y\in V$, define the common neighborhood and (unweighted) co-degree
\[
\Gamma_{xy}:=\Gamma(x)\cap\Gamma(y),
\qquad
D_{xy}:=|\Gamma_{xy}|.
\]

\paragraph{Probability measures and couplings.}
A probability measure on $V$ is a function $m:V\to[0,1]$ such that $\sum_{v\in V} m(v)=1$.
Given two probability measures $m_1,m_2$ on $V$, a \emph{coupling} of $(m_1,m_2)$ is a probability measure
$\pi$ on $V\times V$ with marginals $m_1$ and $m_2$:
\[
\sum_{y\in V}\pi(x,y)=m_1(x),\qquad
\sum_{x\in V}\pi(x,y)=m_2(y),\qquad (x,y\in V).
\]

\paragraph{The $1$--Wasserstein distance.}
The $1$--Wasserstein distance between $m_1$ and $m_2$ is
\[
W_1(m_1,m_2)
\ :=\
\inf_{\pi}\ \sum_{x,y\in V}\pi(x,y)\,d(x,y),
\]
where the infimum is over all couplings $\pi$ of $(m_1,m_2)$ (possibly $+\infty$ if the measures charge
different components).

\paragraph{Kantorovich--Rubinstein duality.}
A function $f:V\to\mathbb R$ is $1$--Lipschitz if $|f(x)-f(y)|\le d(x,y)$ for all $x,y\in V$.
By Kantorovich--Rubinstein duality \cite[Thm.~6.1]{Villani2009},
\begin{equation}\label{eq:KR}
W_1(m_1,m_2)
\ =\
\sup_{\|f\|_{\mathrm{Lip}}\le 1}\ \sum_{v\in V} f(v)\,\big(m_1(v)-m_2(v)\big).
\end{equation}

\paragraph{Total variation.}
For probability measures $m_1,m_2$ on $V$, define the total variation distance by
\[
\|m_1-m_2\|_{\mathrm{TV}}:=\frac12\sum_{v\in V}\big|m_1(v)-m_2(v)\big|.
\]

\subsection{Ollivier--Ricci curvature and Lin--Lu--Yau curvature}\label{subsec:curv-defs}

\paragraph{$\alpha$--lazy random walk measures.}
Fix $\alpha\in[0,1)$. For $x\in V$ with $D_x\ge 1$, define
\begin{equation}\label{eq:lazy-measure}
m_x^\alpha
\ :=\
\alpha\,\delta_x\ +\ \frac{1-\alpha}{D_x}\sum_{u\in\Gamma(x)}\delta_u.
\end{equation}
(If $D_x=0$ we set $m_x^\alpha:=\delta_x$, but we never use curvature at isolated vertices.)

\paragraph{$\alpha$--Ollivier curvature.}
For distinct $x,y\in V$ with $d(x,y)<\infty$, define
\begin{equation}\label{eq:kappa-alpha}
\kappa_\alpha(x,y)
\ :=\
1-\frac{W_1(m_x^\alpha,m_y^\alpha)}{d(x,y)}.
\end{equation}

\paragraph{Lin--Lu--Yau curvature (LLY).}
Following \citet{LinLuYau2011}, define
\begin{equation}\label{eq:LLY}
\kappa(x,y)
\ :=\
\lim_{\alpha\uparrow 1}\frac{\kappa_\alpha(x,y)}{1-\alpha},
\end{equation}
where the limit exists for finite unweighted graphs \cite{LinLuYau2011}.
If $\{x,y\}\in E$, then $d(x,y)=1$ and $\kappa_\alpha(x,y)=1-W_1(m_x^\alpha,m_y^\alpha)$, hence
\begin{equation}\label{eq:LLY-edge}
\kappa(x,y)
\ =\
\lim_{\alpha\uparrow 1}\frac{1-W_1(m_x^\alpha,m_y^\alpha)}{1-\alpha},
\qquad \{x,y\}\in E.
\end{equation}

\paragraph{Weighted graphs.}
Later we consider nonnegative symmetric edge weights $W=(W_{xy})_{x,y\in V}$ supported on $E$.
When weights are present, define the weighted degree and degree matrix
\[
d_x(W):=\sum_{u\in V} W_{xu},
\qquad
\Deg(W):=\diag\big((d_x(W))_{x\in V}\big),
\]
and (for $d_x(W)>0$) the weighted neighbor measure
\[
m_{x,W}:=\frac{1}{d_x(W)}\sum_{u\in V} W_{xu}\,\delta_u,
\qquad
m_{x,W}^\alpha:=\alpha\delta_x+(1-\alpha)m_{x,W}.
\]
We denote by $\kappa_{\alpha,W}(x,y)$ and $\kappa_W(x,y)$ the corresponding $\alpha$--Ollivier and LLY curvatures,
computed using the \emph{unweighted} graph distance $d(\cdot,\cdot)$ unless stated otherwise.

\section{Curvature concentration in the SBM}\label{sec:curv-sbm}

Throughout this section, all graph-transport notions ($W_1$, $m_x^\alpha$, $\kappa_\alpha$, $\kappa$)
are as in Section~\ref{sec:prelim}.

\subsection{Model and basic quantities}\label{subsec:sbm-model}

\paragraph{Balanced two-block SBM.}
Fix a partition $\sigma:[2n]\to\{1,2\}$ with blocks $B_a:=\sigma^{-1}(a)$ and $|B_1|=|B_2|=n$.
Let $G\sim\mathrm{SBM}(2n,p_0,p_1,\sigma)$ with $0<p_1<p_0<1$, meaning that for $x<y$,
\[
\mathbbm{1}\{\{x,y\}\in E\}\ \sim\
\begin{cases}
\mathrm{Ber}(p_0), & \sigma(x)=\sigma(y),\\
\mathrm{Ber}(p_1), & \sigma(x)\neq\sigma(y),
\end{cases}
\]
independently over unordered pairs.
Write $\bar p:=\tfrac12(p_0+p_1)$ and define the common expected degree
\begin{equation}\label{eq:d0-def}
d_0\ :=\ \E[D_x\mid\sigma]\ =\ (n-1)p_0+np_1\ \asymp_\rho\ n\bar p,
\end{equation}
where the last relation uses Assumption~\ref{ass:BC}.

\subsection{Concentration tools}\label{subsec:conc-tools}

\begin{lemma}[Bernstein--Chernoff for Bernoulli sums]\label{lem:chernoff}
Let $X_1,\ldots,X_m$ be independent with $X_i\sim \mathrm{Ber}(p_i)$ and set
$S=\sum_{i=1}^m X_i$ and $\mu=\E S=\sum_i p_i$.
Then for any $\lambda\ge 0$,
\[
\Pr(S\le \mu-\lambda)\le \exp\!\Big(-\frac{\lambda^2}{2\mu}\Big),
\qquad
\Pr(S\ge \mu+\lambda)\le \exp\!\Big(-\frac{\lambda^2}{2\mu+\frac{2}{3}\lambda}\Big).
\]
\end{lemma}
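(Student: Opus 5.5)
The plan is the standard exponential-moment (Cramér--Chernoff) argument, applied separately to each tail. The two ingredients are independence of the $X_i$ and the elementary inequality $1+u\le e^u$. For any $s\in\mathbb R$, independence gives
\[
\E e^{sS}=\prod_{i=1}^m\big(1+p_i(e^s-1)\big)\le \exp\!\big(\mu(e^s-1)\big).
\]

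\emph{Lower tail.} For $s=-\theta$ with $\theta>0$, Markov's inequality applied to $e^{-\theta S}$ gives
\[
\Pr(S\le\mu-\lambda)\le \exp\!\big(\mu(e^{-\theta}-1)+\theta(\mu-\lambda)\big).
\]
The choice $\theta=-\log(1-\lambda/\mu)$ (for $\lambda<\mu$) yields the bound $\exp(-\mu\,h(-\lambda/\mu))$, where $h(u):=(1+u)\log(1+u)-u$. It then remains to check $h(-x)\ge x^2/2$ for $x\in[0,1]$. Both sides vanish at $0$ together with their first derivatives, and $h''(-x)=1/(1-x)\ge 1$, so the inequality follows by integrating twice. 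The case $\lambda=\mu$ follows by continuity, or directly from $\Pr(S=0)=\prod(1-p_i)\le e^{-\mu}\le e^{-\mu/2}$. The case $\lambda>\mu$ is trivial because $S\ge 0$.

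\emph{Upper tail.} Take $\theta>0$. The same computation gives
\[
\Pr(S\ge\mu+\lambda)\le \exp\!\big(-\mu\,h(\lambda/\mu)\big)
\]
with $\theta=\log(1+\lambda/\mu)$. This is Bennett's inequality with variance proxy $\mu$, which is legitimate since $\mathrm{Var}(X_i)\le p_i$. The remaining step is the classical estimate $h(x)\ge \frac{x^2}{2(1+x/3)}$ for $x\ge0$. Substituting $x=\lambda/\mu$ turns this into exactly $\lambda^2/(2\mu+\tfrac23\lambda)$.

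The only genuinely non-routine step is this last scalar inequality. I would prove it by setting $g(x):=h(x)-\frac{3x^2}{6+2x}$ and noting $g(0)=g'(0)=0$. It then suffices to show $g''\ge0$, i.e.
\[
\frac{1}{1+x}\ \ge\ \frac{d^2}{dx^2}\,\frac{3x^2}{6+2x}=\frac{108}{(6+2x)^3}=\frac{27}{2(3+x)^3}.
\]
This is equivalent to $2(3+x)^3\ge 27(1+x)$, which expands to $2x^3+18x^2+27x\ge0$ and clearly holds for $x\ge0$. Alternatively, I could cite the standard Chernoff/Bernstein bounds for sums of independent Bernoulli variables, e.g.\ Janson--\L uczak--Ruci\'nski, Theorem~2.1 and Corollary~2.3.
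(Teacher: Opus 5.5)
Your proof is correct and establishes the lemma exactly as stated, apart from one arithmetic slip in the last step. Since $\frac{3x^2}{6+2x}=\frac32\bigl(x-3+\frac{9}{3+x}\bigr)$, its second derivative is $\frac{27}{(3+x)^3}$, not $\frac{27}{2(3+x)^3}$. As a sanity check, the function is $\sim x^2/2$ at the origin, so its second derivative there must be $1=h''(0)$. The condition $g''\ge 0$ is therefore $(3+x)^3\ge 27(1+x)$, i.e.\ $x^3+9x^2\ge 0$, which holds for $x\ge 0$. Your expansion also dropped the constant term. With the corrected derivative the check is immediate, so the conclusion stands.

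The paper's proof is a two-line citation. Bernstein's inequality for $\sum_i(X_i-p_i)$, with $|X_i-p_i|\le 1$ and variance proxy $\mu$, gives the right tail, and ``the left tail follows by applying the right-tail bound to $\mu-S$.'' Your upper-tail argument is that Bernstein bound unpacked (Cram\'er--Chernoff, then Bennett, then $h(x)\ge x^2/(2(1+x/3))$), so there the two routes coincide.

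For the lower tail the routes genuinely differ, and yours is the one that actually yields the statement. Applying the right-tail Bernstein bound to $\mu-S$ only produces $\exp\bigl(-\lambda^2/(2\mu+\tfrac23\lambda)\bigr)$. The sharper denominator $2\mu$ needs the extra fact that downward deviations of a sum of nonnegative variables are sub-Gaussian with proxy $\mu$. That is precisely your estimate $h(-x)\ge x^2/2$, obtained from $h''(-x)=1/(1-x)\ge 1$.

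The paper's shortcut is harmless downstream. It is only used with $\lambda=C\sqrt{n\bar p\log n}=o(\mu)$, where the two denominators agree up to constants. Your self-contained derivation, however, proves the sharper left tail as written and handles the boundary cases $\lambda\ge\mu$ explicitly.
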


\begin{proof}
Apply Bernstein's inequality to $\sum_i (X_i-p_i)$ using $|X_i-p_i|\le 1$ and
$\sum_i \mathrm{Var}(X_i)\le \sum_i p_i=\mu$. The left tail follows by applying the right-tail bound to $\mu-S$.
\end{proof}

\begin{lemma}[Degree concentration]\label{lem:degree-concen}
Assume $n\bar p\ge c\log n$ for a sufficiently large absolute constant $c>0$.
Then there exists an absolute constant $C>0$ such that with probability at least $1-n^{-8}$,
simultaneously for all $x\in[2n]$,
\[
\big|D_x-d_0\big|\ \le\ C\sqrt{n\bar p\,\log n}.
\]
In particular, on this event, $\min_x D_x\ge \tfrac12 d_0$ and $\max_x D_x\le 2d_0$ for all large $n$.
\end{lemma}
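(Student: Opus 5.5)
For each fixed vertex $x\in[2n]$, the degree $D_x=\sum_{u\neq x}\Ind\{\{x,u\}\in E\}$ is a sum of $2n-1$ independent Bernoulli variables: $n-1$ of them with parameter $p_0$ (partners in the same block) and $n$ with parameter $p_1$. Its mean is exactly $\mu=d_0=(n-1)p_0+np_1$. We apply Lemma~\ref{lem:chernoff} with $\mu=d_0$ and deviation $\lambda=C\sqrt{n\bar p\log n}$, then take a union bound over the $2n$ vertices. The dependence between the $D_x$ is irrelevant for the union bound.

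First compare $d_0$ with $n\bar p$ using only $p_0,p_1\le 1$ and no assumption:
\[
d_0\le np_0+np_1=2n\bar p,\qquad d_0\ge 2n\bar p-1.
\]
Hence $d_0\le 2n\bar p$ always. The lower tail bound is therefore at most $\exp(-\lambda^2/(4n\bar p))=n^{-C^2/4}$. For the upper tail, the denominator is
\[
2\mu+\tfrac23\lambda\le 4n\bar p+\tfrac23 C\sqrt{n\bar p\log n}.
\]
Since $n\bar p\ge c\log n$, we have $\sqrt{n\bar p\log n}\le n\bar p/\sqrt c$, so the denominator is at most $(4+\tfrac{2C}{3\sqrt c})n\bar p$. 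The upper tail bound is then at most $\exp\!\big(-C^2\log n/(4+2C/(3\sqrt c))\big)$. Choose $C=10$ and then $c\ge 1$. Both tails are then at most $n^{-10}$ or smaller, for instance $n^{-100/(4+20/3)}\le n^{-9}$. A union bound over $2n$ vertices and two tails gives a failure probability of at most $4n\cdot n^{-9}\le n^{-8}$ for $n\ge 4$.

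For the "in particular" part, we need $C\sqrt{n\bar p\log n}\le \tfrac12 d_0$. Using $d_0\ge 2n\bar p-1\ge n\bar p$, it suffices that $C\sqrt{\log n/(n\bar p)}\le 1/2$. This holds once $n\bar p\ge 4C^2\log n$, so we enlarge $c$ to $4C^2=400$. Then $D_x\ge d_0/2$ and $D_x\le \tfrac32 d_0\le 2d_0$ on the event.

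The only mildly delicate step is the constant bookkeeping in the Bernstein denominator. It is handled by the hypothesis $n\bar p\ge c\log n$, which makes the linear term $\tfrac23\lambda$ dominated by $\mu$. Assumption~\ref{ass:BC} is not needed. The "for all large $n$" clause only absorbs the requirement $n\ge 4$ in the union bound.
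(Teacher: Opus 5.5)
Your proof matches the paper's: a per-vertex application of Lemma~\ref{lem:chernoff} with $\lambda=C\sqrt{n\bar p\log n}$ and $d_0\le 2n\bar p$, then a union bound over the $2n$ vertices, with the ``in particular'' clause following from $d_0\ge n\bar p$ once $c\ge 4C^2$.

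There is one arithmetic slip in the union bound. First, $4n\cdot n^{-9}=4n^{-8}$, which is not $\le n^{-8}$. Second, with only $c\ge 1$ the upper tail is $n^{-9.375}$, not $n^{-10}$. This is harmless: since you later take $c\ge 400$, the upper-tail exponent becomes $100/(13/3)>23$. The union bound then gives $4n\cdot n^{-23}=4n^{-22}\le n^{-8}$, and nothing else changes.
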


\begin{proof}
Fix $x$. The degree $D_x$ is a sum of independent Bernoulli variables with mean $d_0\le 2n\bar p$.
Apply Lemma~\ref{lem:chernoff} with $\lambda=C\sqrt{n\bar p\log n}$ and take $C$ large enough so each tail is
$\le n^{-10}$. A union bound over $2n$ vertices gives probability at least $1-n^{-8}$.
\end{proof}

\begin{lemma}[Co-degree concentration]\label{lem:codegree-concen}
Assume $n\bar p^{\,2}\ge c\log n$ for a sufficiently large absolute constant $c>0$.
Then there exists an absolute constant $C>0$ such that with probability at least $1-n^{-8}$,
simultaneously for all unordered pairs $\{x,y\}$,
\[
\big|D_{xy}-\E[D_{xy}\mid\sigma]\big|\ \le\ C\sqrt{n\bar p^{\,2}\log n}.
\]
Moreover,
\[
\E[D_{xy}\mid\sigma]=
\begin{cases}
(n-2)p_0^2 + n p_1^2, & \sigma(x)=\sigma(y),\\
(2n-2)p_0p_1, & \sigma(x)\neq\sigma(y).
\end{cases}
\]
\end{lemma}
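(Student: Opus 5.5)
We prove the two displayed claims in the order they are stated: first the exact formula for $\E[D_{xy}\mid\sigma]$, then the uniform deviation bound. The argument mirrors the proof of Lemma~\ref{lem:degree-concen}.

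\emph{The mean.} Fix an unordered pair $\{x,y\}$ and write
\[
D_{xy}=\sum_{v\notin\{x,y\}}\Ind\{\{x,v\}\in E\}\,\Ind\{\{y,v\}\in E\}.
\]
For each $v\notin\{x,y\}$ the two indicators concern distinct unordered pairs $\{x,v\}$ and $\{y,v\}$, so they are independent. Hence the summand $\xi_v$ is Bernoulli with parameter $q_v=p_{xv}p_{yv}$. Distinct $v$ involve disjoint sets of pairs, so the $\xi_v$ are mutually independent. The edge $\{x,y\}$ itself never enters, so no conditioning on adjacency is needed.

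The mean then follows by counting:
\begin{itemize}
\item If $\sigma(x)=\sigma(y)$, there are $n-2$ vertices $v$ in the common block, each with $q_v=p_0^2$, and $n$ vertices in the other block, each with $q_v=p_1^2$. This gives $(n-2)p_0^2+np_1^2$.
\item If $\sigma(x)\neq\sigma(y)$, each of the $n-1$ remaining vertices in $x$'s block has $q_v=p_0p_1$, and likewise for the $n-1$ remaining vertices in $y$'s block. This gives $(2n-2)p_0p_1$.
\end{itemize}
In either case $\mu_{xy}:=\E[D_{xy}\mid\sigma]\le 2n\max(p_0,p_1)^2\le 8n\bar p^{\,2}$, since $p_0,p_1\le 2\bar p$. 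This upper bound is the only property of $\mu_{xy}$ needed below, so Assumption~\ref{ass:BC} is not required for the upper bound.

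\emph{Concentration.} Apply Lemma~\ref{lem:chernoff} to $S=D_{xy}$ with $\lambda=C\sqrt{n\bar p^{\,2}\log n}$.
\begin{itemize}
\item \emph{Lower tail.} The bound is $\exp(-\lambda^2/(2\mu_{xy}))\le\exp(-C^2\log n/16)$.
\item \emph{Upper tail.} The denominator is $2\mu_{xy}+\tfrac23\lambda\le 16n\bar p^{\,2}+\tfrac23 C\sqrt{n\bar p^{\,2}\log n}$. Under the hypothesis $n\bar p^{\,2}\ge c\log n$ we have $\sqrt{n\bar p^{\,2}\log n}\le n\bar p^{\,2}/\sqrt c$. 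Choosing $c\ge C^2$ makes the second term at most $n\bar p^{\,2}$, so the exponent is at least $C^2\log n/17$.
\end{itemize}
Taking $C$ large, say $C^2\ge 17\cdot 11$, makes each tail at most $n^{-11}$.

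\emph{Union bound.} There are at most $\binom{2n}{2}\le 2n^2$ unordered pairs. Summing both tails over all pairs gives failure probability at most $4n^{-9}\le n^{-8}$ for $n\ge4$. This establishes the uniform bound simultaneously over all pairs.

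\emph{Main obstacle.} The only delicate point is the upper-tail term $\tfrac23\lambda$ in Bernstein's inequality. It is kept subdominant to the variance term precisely by the assumption $n\bar p^{\,2}\ge c\log n$ with $c$ chosen after $C$. Without that assumption the bound degrades to a Poisson-type rate.
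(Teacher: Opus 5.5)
Your proposal is correct and follows essentially the same route as the paper: independent Bernoulli indicators $\Ind\{z\in\Gamma_{xy}\}$ over $z\notin\{x,y\}$, a case count for the mean, Lemma~\ref{lem:chernoff} with $\lambda=C\sqrt{n\bar p^{\,2}\log n}$, and a union bound over the $\binom{2n}{2}$ pairs. One small improvement: you bound $\E[D_{xy}\mid\sigma]\le 8n\bar p^{\,2}$ without Assumption~\ref{ass:BC}, whereas the paper invokes that assumption to get $\lesssim_\rho n\bar p^{\,2}$, so your version makes $C$ genuinely absolute as the statement claims.
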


\begin{proof}
Fix $\{x,y\}$. For each $z\notin\{x,y\}$, the indicator $\mathbbm{1}\{z\in\Gamma_{xy}\}$ is Bernoulli and
independent over $z$.
Compute its success probability by cases to obtain the stated formula and note
$\E[D_{xy}\mid\sigma]\lesssim_\rho n\bar p^2$ under Assumption~\ref{ass:BC}.
Apply Lemma~\ref{lem:chernoff} with $\lambda=C\sqrt{n\bar p^{\,2}\log n}$ and union bound over $\binom{2n}{2}$ pairs.
\end{proof}

\begin{lemma}[Blockwise degree concentration]\label{lem:blockwise-deg}
For a block $B\in\{B_1,B_2\}$ define the blockwise degree $D_x^{(B)}:=|\Gamma(x)\cap B|$.
Conditional on $\sigma$, with probability at least $1-Cn^{-8}$, uniformly over all vertices $x$,
if $\sigma(x)=1$ then
\[
D_x^{(B_1)} = (n-1)p_0 \pm C\sqrt{n\bar p\log n},
\qquad
D_x^{(B_2)} = np_1 \pm C\sqrt{n\bar p\log n},
\]
and the analogous statement holds when $\sigma(x)=2$ (swap $B_1,B_2$).
\end{lemma}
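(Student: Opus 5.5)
The plan is to mirror the proof of Lemma~\ref{lem:degree-concen}, applied separately to each of the two blockwise counts of a fixed vertex, and then take a union bound. The statement does not repeat the density hypothesis. I will therefore use the one already in force, namely $n\bar p\ge c\log n$ for a large absolute constant $c$; this is implied by Assumption~\ref{ass:MDT} through Remark~\ref{rem:MDT-conseq}.

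Fix a vertex $x$ and, by symmetry, take $\sigma(x)=1$.
\begin{itemize}
\item The count $D_x^{(B_1)}=\sum_{u\in B_1\setminus\{x\}}\Ind\{\{x,u\}\in E\}$ is a sum of $n-1$ independent $\mathrm{Ber}(p_0)$ variables. Its mean is $\mu_1=(n-1)p_0$.
\item The count $D_x^{(B_2)}=\sum_{u\in B_2}\Ind\{\{x,u\}\in E\}$ is a sum of $n$ independent $\mathrm{Ber}(p_1)$ variables. Its mean is $\mu_2=np_1$.
\item Independence in both sums holds because the edge indicators are independent over unordered pairs, conditional on $\sigma$.
\item Both means satisfy $\mu_i\le np_0\le 2n\bar p$. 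This uses only $p_1<p_0$; Assumption~\ref{ass:BC} is not needed here.
\end{itemize}

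Next apply Lemma~\ref{lem:chernoff} to each count with $\lambda=C\sqrt{n\bar p\log n}$.

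For the lower tail, the bound is $\exp(-\lambda^2/(2\mu_i))\le\exp(-C^2\log n/4)$.

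For the upper tail, the exponent is $\lambda^2/(2\mu_i+\tfrac23\lambda)$. The hypothesis $n\bar p\ge c\log n$ gives $\lambda\le (C/\sqrt c)\,n\bar p$. The denominator is therefore at most $(4+\tfrac{2C}{3\sqrt c})\,n\bar p$, and the exponent is at least $C^2\log n/(4+2C/(3\sqrt c))$.

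This is the only mildly delicate point: the constants must be chosen compatibly. First take $C$ large, for instance $C\ge 10$. Then require $c\ge C^2$, so that $\tfrac{2C}{3\sqrt c}\le 1$. The exponent is then at least $C^2\log n/5\ge 10\log n$. Each of the four tails (two counts, two directions) is thus at most $n^{-10}$.

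Finally, take a union bound over the $2n$ vertices, the two blocks and the two tail directions. The total failure probability is at most $8n\cdot n^{-10}\le Cn^{-8}$. On the complementary event, both displayed expansions hold simultaneously for all $x$.

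The case $\sigma(x)=2$ is identical with the roles of $B_1$ and $B_2$ swapped. There is no substantive obstacle; the only care needed is to state explicitly the density condition $n\bar p\gtrsim\log n$ that the lemma inherits from Lemma~\ref{lem:degree-concen}.
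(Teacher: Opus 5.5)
Your proposal is correct and follows the paper's own proof: the Bernoulli tail bound of Lemma~\ref{lem:chernoff} is applied to the $\mathrm{Bin}(n-1,p_0)$ and $\mathrm{Bin}(n,p_1)$ blockwise counts with deviation $C\sqrt{n\bar p\log n}$, followed by a union bound over the $2n$ vertices and the two blocks. You also make explicit the density hypothesis $n\bar p\ge c\log n$, which the paper leaves implicit and which is genuinely needed to control the $\tfrac23\lambda$ term in the upper tail; your constant bookkeeping for that tail ($c\ge C^2$) is correct.
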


\begin{proof}
Conditionally on $\sigma$, for each fixed vertex $x$ the blockwise degrees
$D_x^{(B_{\sigma(x)})}\sim \mathrm{Bin}(n-1,p_0)$ and
$D_x^{(B_{\sigma(x)}^c)}\sim \mathrm{Bin}(n,p_1)$.
Apply Lemma~\ref{lem:chernoff} to each with deviation $C\sqrt{n\bar p\log n}$ and take a union bound over
$2n$ vertices and the two blocks.
\end{proof}

\begin{lemma}[Blockwise co-degree concentration]\label{lem:blockwise-codeg}
Assume $n\bar p^{\,2}\ge c\log n$. For a block $B\in\{B_1,B_2\}$ define
$D^{(B)}_{xy}:=|\Gamma(x)\cap\Gamma(y)\cap B|$.
Conditional on $\sigma$, with probability at least $1-Cn^{-8}$,
uniformly over all unordered pairs $\{x,y\}$ and each block $B$,
\[
D^{(B)}_{xy}
=
\E\!\left[D^{(B)}_{xy}\mid\sigma\right]\ \pm\ C\sqrt{n\bar p^{\,2}\log n}.
\]
\end{lemma}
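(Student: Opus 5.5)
The plan mirrors the proof of Lemma~\ref{lem:codegree-concen}, now localized to a single block. Fix an unordered pair $\{x,y\}$ and a block $B\in\{B_1,B_2\}$. Conditional on $\sigma$,
\[
D^{(B)}_{xy}=\sum_{z\in B\setminus\{x,y\}}\Ind\{z\in\Gamma_{xy}\}.
\]
For $z\notin\{x,y\}$, the indicator $\Ind\{z\in\Gamma_{xy}\}=\Ind\{\{x,z\}\in E\}\Ind\{\{y,z\}\in E\}$ depends only on the two vertex pairs $\{x,z\},\{y,z\}$. These pairs are distinct across different $z$, so the summands are independent Bernoulli variables. The success probability is $p_{\sigma(x)\sigma(z)}\,p_{\sigma(y)\sigma(z)}$, where $p_{ab}=p_0$ if $a=b$ and $p_1$ otherwise. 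Hence the mean is
\[
\mu^{(B)}_{xy}=\E[D^{(B)}_{xy}\mid\sigma]=|B\setminus\{x,y\}|\cdot p_{\sigma(x)b}\,p_{\sigma(y)b},
\]
where $b$ is the label of $B$. This is at most $n\max(p_0,p_1)^2\le 4n\bar p^{\,2}$, using only $p_1<p_0\le 2\bar p$. It is also at least a constant times $n\bar p^{\,2}$ under Assumption~\ref{ass:BC}, though only the upper bound is needed.

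Next apply Lemma~\ref{lem:chernoff} with $\lambda=C\sqrt{n\bar p^{\,2}\log n}$.
\begin{itemize}
\item \emph{Lower tail:} the bound is $\exp(-\lambda^2/(2\mu))\le \exp(-C^2\log n/8)$.
\item \emph{Upper tail:} the denominator is $2\mu+\tfrac23\lambda\le 8n\bar p^{\,2}+\tfrac23 C\sqrt{n\bar p^{\,2}\log n}$. Since $n\bar p^{\,2}\ge c\log n$, we have $\sqrt{n\bar p^{\,2}\log n}\le n\bar p^{\,2}/\sqrt c$. The denominator is therefore at most $(8+\tfrac{2C}{3\sqrt c})\,n\bar p^{\,2}$, and the tail is at most $\exp\!\big(-C^2\log n/(8+2C/(3\sqrt c))\big)$.
\end{itemize}
If $\mu$ is very small relative to $\lambda$, the lower tail is trivial whenever $\lambda>\mu$, and the upper bound above still applies, so no case split is really needed.

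To make each tail at most $n^{-12}$, choose $C$ large and then $c$ large relative to $C$. This ordering of constants is the only delicate point: $c$ must dominate $C$ for the linear term $\tfrac23\lambda$ in Bernstein to be absorbed. Alternatively, fix $c$ first and take $C$ large, since the exponent grows like $C\sqrt c$ anyway. Finally, union bound over at most $\binom{2n}{2}\cdot 2\le 4n^2$ choices of pair and block, with two tails each. The total failure probability is at most $16n^{2}\cdot n^{-12}\le Cn^{-8}$, which gives the claimed uniform statement.
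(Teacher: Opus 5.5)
Your proposal is correct and follows the paper's proof. Both arguments write $D^{(B)}_{xy}$ as a sum of independent Bernoulli indicators over $z\in B$ with product success probabilities, apply Lemma~\ref{lem:chernoff} at deviation $C\sqrt{n\bar p^{\,2}\log n}$, and take a union bound over $O(n^2)$ pairs and the two blocks; your version adds the explicit constant bookkeeping that the paper leaves implicit. On the ordering of constants, your second option is the right one: for any fixed $c$ the tail exponent grows like $\tfrac32 C\sqrt c$, so taking $C$ large suffices and $c$ never needs to dominate $C$.
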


\begin{proof}
Each $D^{(B)}_{xy}$ is a sum of independent Bernoulli indicators of events
$\{(x,z)\in E,\ (y,z)\in E\}$ over $z\in B$, with success probability
$p_{xz}p_{yz}\in\{p_0^2,p_0p_1,p_1^2\}$. Apply Lemma~\ref{lem:chernoff} and union bound over
$O(n^2)$ pairs and the two blocks.
\end{proof}

\begin{lemma}[Ratio expansion under concentration]\label{lem:ratio-expansion}
Let $b_n\to\infty$ and write $A=\mu_A+\delta_A$, $B=\mu_B+\delta_B$ with $\mu_B\ge c b_n$ and
$|\delta_A|,|\delta_B|\le C\sqrt{b_n\log n}$ for constants $c,C>0$.
Assume also $|\mu_A|\le C_A b_n$ for some $C_A<\infty$.
Then for all sufficiently large $n$,
\[
\left|\frac{A}{B}-\frac{\mu_A}{\mu_B}\right|\ \le\ C'\sqrt{\frac{\log n}{b_n}},
\]
for a constant $C'$ depending only on $(c,C,C_A)$.
\end{lemma}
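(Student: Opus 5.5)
The plan is to write the difference of ratios over a common denominator and bound numerator and denominator separately. Concretely,
\[
\frac{A}{B}-\frac{\mu_A}{\mu_B}
=\frac{A\mu_B-\mu_A B}{B\mu_B}
=\frac{(\mu_A+\delta_A)\mu_B-\mu_A(\mu_B+\delta_B)}{B\mu_B}
=\frac{\delta_A\mu_B-\mu_A\delta_B}{B\,\mu_B}.
\]
This identity is purely algebraic, so all the work is in controlling the denominator from below and the numerator from above.

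\textbf{Denominator.} Since $b_n\to\infty$ we have $\sqrt{b_n\log n}/b_n=\sqrt{\log n/b_n}$. We implicitly need this to tend to $0$; in every application, $b_n\in\{n\bar p,\,n\bar p^{\,2}\}$ and this quantity is $\varepsilon_n$ or $\sqrt{\log n/(n\bar p^2)}$, both $o(1)$ under Assumption~\ref{ass:MDT}. This is the one point needing care: the statement as written only assumes $b_n\to\infty$. If $\log n/b_n\not\to 0$, the right-hand side of the claimed bound is at least a constant, and the claim should be read as trivial or vacuous in that case. So I will either note that the bound is only meaningful when $\log n/b_n\to0$, or add this as an implicit hypothesis. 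Granting it, for large $n$ we get $|\delta_B|\le C\sqrt{b_n\log n}\le \tfrac{c}{2}b_n\le \tfrac12\mu_B$. Hence $B\ge \mu_B/2>0$ (in particular $A/B$ is well defined), and $B\mu_B\ge \mu_B^2/2$.

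\textbf{Numerator.} By the triangle inequality,
\[
|\delta_A\mu_B-\mu_A\delta_B|\le |\delta_A|\,\mu_B+|\mu_A|\,|\delta_B|.
\]
Dividing by $\mu_B^2/2$ gives
\[
\left|\frac{A}{B}-\frac{\mu_A}{\mu_B}\right|\le \frac{2|\delta_A|}{\mu_B}+\frac{2|\mu_A|\,|\delta_B|}{\mu_B^2}
\le \frac{2C\sqrt{b_n\log n}}{c\,b_n}+\frac{2C_A b_n\cdot C\sqrt{b_n\log n}}{c^2 b_n^2}.
\]
Both terms are at most a multiple of $\sqrt{\log n/b_n}$. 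Therefore the bound holds with
\[
C'=\frac{2C}{c}+\frac{2CC_A}{c^2},
\]
which depends only on $(c,C,C_A)$, as required.
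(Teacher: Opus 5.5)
Your argument is correct and is the paper's proof: you use the same common-denominator identity, the same lower bound $B\ge \mu_B/2$ for large $n$, and then carry out the ``direct estimation'' the paper leaves implicit, obtaining the explicit constant $C'=2C/c+2CC_A/c^2$. The extra hypothesis you flag, $\log n=o(b_n)$, is exactly what the paper's ``for large $n$'' silently uses. It is genuinely necessary, not just what makes the bound meaningful: if, e.g., $b_n\le (C/c)^2\log n$, one may take $\mu_B=cb_n$ and $\delta_B=-\mu_B$, so $B=0$ and the inequality fails rather than becoming vacuous.
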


\begin{proof}
Write
\[
\frac{A}{B}-\frac{\mu_A}{\mu_B}
=\frac{\delta_A\mu_B-\mu_A\delta_B}{B\mu_B}.
\]
Since $\mu_B\ge c b_n$ and $|\delta_B|\le C\sqrt{b_n\log n}$, for large $n$ we have
$|B|\ge \mu_B-|\delta_B|\ge \frac12\mu_B$.
The claimed bound follows by direct estimation.
\end{proof}

\subsection{Deterministic curvature bounds}\label{subsec:det-bounds}

\begin{lemma}[Deterministic KR upper bound]\label{lem:curv-upper}
Let $\{x,y\}\in E$. Then
\[
\kappa(x,y)\ \le\ \frac{D_{xy}+2}{D_x},
\qquad
\kappa(x,y)\ \le\ \frac{D_{xy}+2}{D_y}.
\]
\end{lemma}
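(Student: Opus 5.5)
We use Kantorovich--Rubinstein duality \eqref{eq:KR}: any $1$-Lipschitz test function $f$ gives the lower bound $W_1(m_x^\alpha,m_y^\alpha)\ge \sum_v f(v)\,(m_x^\alpha(v)-m_y^\alpha(v))$. This yields $\kappa_\alpha(x,y)\le 1-\sum_v f\,(m_x^\alpha-m_y^\alpha)$. Dividing by $1-\alpha$ and letting $\alpha\uparrow 1$ via \eqref{eq:LLY-edge} then gives the upper bound on $\kappa$. By symmetry of $\kappa(x,y)=\kappa(y,x)$, it suffices to prove $\kappa(x,y)\le (D_{xy}+2)/D_x$; the second inequality follows by swapping the roles of $x$ and $y$.

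For the test function we use the distance to $y$, truncated so that all values lie in $\{0,1,2\}$. Concretely, $f(v)=\min\{d(v,y),2\}$. This is $1$-Lipschitz as a truncation of a distance function. On the supports, $f(y)=0$, $f(x)=1$, $f(u)=1$ for $u\in\Gamma_{xy}$, and $f(u)=2$ for $u\in\Gamma(x)\setminus(\Gamma_{xy}\cup\{y\})$. Under $m_y^\alpha$, every point lies within distance $1$ of $y$, so $f\le 1$, with $f(y)=0$.

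Now we compute the two integrals. For $m_x^\alpha$,
\[
\sum f\,m_x^\alpha=\alpha+\tfrac{1-\alpha}{D_x}\big(0+D_{xy}+2(D_x-1-D_{xy})\big).
\]
For $m_y^\alpha$,
\[
\sum f\,m_y^\alpha=\alpha\cdot 0+\tfrac{1-\alpha}{D_y}(D_y)=1-\alpha.
\]
Here $f(x)=1$, the points of $\Gamma(y)\setminus\{x\}$ have $f=1$, and $y\notin\Gamma(y)$. Subtracting,
\[
W_1\ge \alpha-(1-\alpha)+(1-\alpha)\frac{2D_x-2-D_{xy}}{D_x}=1-(1-\alpha)\frac{D_{xy}+2}{D_x}.
\]
Hence $\kappa_\alpha(x,y)\le(1-\alpha)(D_{xy}+2)/D_x$. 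Dividing by $1-\alpha$ and passing to the limit (which exists by \citet{LinLuYau2011}) gives the claim.

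The main point needing care is the bookkeeping of which neighbors of $x$ sit at distance $0$, $1$ or $2$ from $y$. One must check that $y\in\Gamma(x)$ contributes $f=0$ and that common neighbors contribute $f=1$. One must also confirm that truncating at $2$ preserves the Lipschitz property, since $\min$ of a $1$-Lipschitz function and a constant is $1$-Lipschitz. No probabilistic input is needed; the bound is purely deterministic.
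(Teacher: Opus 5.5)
Your proof is correct and follows the paper's argument: Kantorovich--Rubinstein duality with the test function $d(\cdot,y)$, the same count of neighbors of $x$ at distances $0,1,2$ from $y$, and then division by $1-\alpha$ and the limit $\alpha\uparrow 1$. Truncating at $2$ changes nothing on the supports of $m_x^\alpha$ and $m_y^\alpha$; it does keep $f$ real-valued when $G$ is disconnected and $d$ can be infinite, a case the paper's untruncated choice glosses over.
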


\begin{proof}
We prove $\kappa(x,y)\le (D_{xy}+2)/D_x$; the bound with $D_y$ follows by symmetry.
Fix $\alpha\in[0,1)$ and set $\beta:=1-\alpha$. By Kantorovich--Rubinstein duality \eqref{eq:KR},
for any $1$--Lipschitz $f$ we have
\[
W_1(m_x^\alpha,m_y^\alpha)\ \ge\ \sum_{v\in V} f(v)\big(m_x^\alpha(v)-m_y^\alpha(v)\big).
\]
Take $f(v):=d(v,y)$, which is $1$--Lipschitz. Since $\{x,y\}\in E$, we have $d(x,y)=1$.
Under $m_y^\alpha$, the atom at $y$ has distance $0$ and every neighbor of $y$ has distance $1$, hence
$\E_{m_y^\alpha}[d(\cdot,y)]=\beta$.
Under $m_x^\alpha$, the atom at $x$ contributes $\alpha\cdot 1$; among neighbors of $x$, the vertex $y$
contributes $0$, common neighbors contribute $1$, and exclusive neighbors contribute $2$. Therefore
\[
\E_{m_x^\alpha}[d(\cdot,y)]
=\alpha+\frac{\beta}{D_x}\Big(D_{xy}\cdot 1 + (D_x-1-D_{xy})\cdot 2\Big)
=\alpha+\frac{\beta}{D_x}(2D_x-2-D_{xy}).
\]
Thus
\[
W_1(m_x^\alpha,m_y^\alpha)\ \ge\ \alpha+\beta\,\frac{D_x-2-D_{xy}}{D_x},
\]
and hence
\[
1-W_1(m_x^\alpha,m_y^\alpha)\ \le\ \beta\,\frac{D_{xy}+2}{D_x}.
\]
Divide by $\beta$ and let $\alpha\uparrow 1$ to obtain the claim.
\end{proof}

\begin{lemma}[Coupling lower bound from an injection]\label{lem:coupling}
Let $\{x,y\}\in E$ and assume $D_x\le D_y$.
Define the exclusive neighborhoods
\[
U_x:=\Gamma(x)\setminus\big(\Gamma(y)\cup\{y\}\big),
\qquad
U_y:=\Gamma(y)\setminus\big(\Gamma(x)\cup\{x\}\big).
\]
If $\phi:U_x\to U_y$ is an injection, then
\begin{equation}\label{eq:coupling-lb}
\kappa(x,y)\ \ge\
1+\frac{1}{D_y}\ -\ \frac{1}{D_y}\sum_{u\in U_x} d\big(u,\phi(u)\big)\ -\ 3\,\frac{D_y-D_x}{D_y}.
\end{equation}
\end{lemma}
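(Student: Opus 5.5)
The plan is to prove \eqref{eq:coupling-lb} for each fixed $\alpha$ close to $1$ by writing down one explicit transport plan from $m_x^\alpha$ to $m_y^\alpha$. Its cost will be an upper bound on $W_1(m_x^\alpha,m_y^\alpha)$. We then use $\kappa_\alpha(x,y)=1-W_1(m_x^\alpha,m_y^\alpha)$ (valid since $d(x,y)=1$), divide by $\beta:=1-\alpha$, and let $\alpha\uparrow1$ as in \eqref{eq:LLY-edge}.

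\emph{Step 1: split $m_x^\alpha$ into two parts.} Since $D_x\le D_y$, each neighbor of $x$ carries mass $\beta/D_x\ge\beta/D_y$ under $m_x^\alpha$. We write $m_x^\alpha=\alpha\delta_x+\nu+r$, where
\[
\nu:=\frac{\beta}{D_y}\sum_{u\in\Gamma(x)}\delta_u,
\qquad
r:=\beta\Big(\frac1{D_x}-\frac1{D_y}\Big)\sum_{u\in\Gamma(x)}\delta_u .
\]
The total mass of $r$ is $\beta(D_y-D_x)/D_y$. On the target side, $\Gamma(y)=\{x\}\cup\Gamma_{xy}\cup U_y$, with $|U_y|=D_y-1-D_{xy}$ and $|U_x|=D_x-1-D_{xy}$. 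Hence $|U_y\setminus\phi(U_x)|=D_y-D_x$. Each of these unmatched vertices needs mass $\beta/D_y$, for a total of exactly $\beta(D_y-D_x)/D_y$, the mass of $r$.

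\emph{Step 2: the coupling.} We transport as follows; the cost of each move is in parentheses.
\begin{itemize}
\item The atom at $x$ keeps mass $\beta/D_y$ at $x$, which is the $m_y^\alpha$-mass at $x$ (cost $0$). It sends the rest, $\alpha-\beta/D_y\ge0$ for $\alpha$ near $1$, to $y$ (cost $1$ per unit).
\item The $\nu$-mass $\beta/D_y$ at $y$ stays at $y$. Together with the previous item, $y$ then receives exactly $\alpha$.
\item The $\nu$-mass $\beta/D_y$ at each common neighbor $w\in\Gamma_{xy}$ stays at $w$ (cost $0$).
\item The $\nu$-mass at each $u\in U_x$ is sent to $\phi(u)$ (cost $d(u,\phi(u))$).
\item The remainder $r$, which sits on $\Gamma(x)$, is distributed arbitrarily onto the unmatched vertices $U_y\setminus\phi(U_x)$, giving each its required $\beta/D_y$. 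Each such path $u\to x\to y\to v$ shows $d(u,v)\le3$.
\end{itemize}
One checks that both marginals are correct.

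\emph{Step 3: total cost.} The plan costs
\[
\Big(\alpha-\frac{\beta}{D_y}\Big)+\frac{\beta}{D_y}\sum_{u\in U_x}d(u,\phi(u))+3\beta\frac{D_y-D_x}{D_y}.
\]
Therefore
\[
1-W_1\ \ge\ \beta\Big[1+\frac1{D_y}-\frac1{D_y}\sum_{u\in U_x}d(u,\phi(u))-3\frac{D_y-D_x}{D_y}\Big].
\]
Dividing by $\beta$ and letting $\alpha\uparrow1$ gives \eqref{eq:coupling-lb}.

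The only delicate point is the bookkeeping of the unequal masses when $D_x<D_y$: the mass excess $r$ on $\Gamma(x)$ must exactly fill the $D_y-D_x$ unmatched exclusive neighbors of $y$. The crude distance bound $3$ for these moves is what produces the last term.
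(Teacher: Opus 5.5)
Your proposal is correct and is essentially the paper's own proof: it uses the same coupling and obtains the same cost bound, then divides by $\beta$ and lets $\alpha\uparrow 1$. The coupling keeps $\beta/D_y$ at $x$ and sends the rest to $y$, holds $\beta/D_y$ on $y$ and on $\Gamma_{xy}$, routes $\beta/D_y$ along $\phi$, and moves the surplus $\beta(1/D_x-1/D_y)$ on $\Gamma(x)$ onto $U_y\setminus\phi(U_x)$ at cost at most $3$. The only differences are cosmetic: the paper fixes $\alpha\ge 1/2$ and spreads the surplus uniformly, whereas you take $\alpha$ near $1$ and allow any distribution with the correct marginals.
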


\begin{proof}
Fix $\alpha\in[1/2,1)$ and write $\beta:=1-\alpha$.
Construct a coupling $\pi$ of $m_x^\alpha$ and $m_y^\alpha$ as follows.

Let $T:=U_y\setminus \phi(U_x)$, so $|T|=|U_y|-|U_x|=D_y-D_x$.

\smallskip
\noindent\emph{(i) Mass at $x$.}
Set $\pi(x,x):=\beta/D_y$ and $\pi(x,y):=\alpha-\beta/D_y$ (nonnegative since $\alpha\ge 1/2$).

\smallskip
\noindent\emph{(ii) Diagonal transport for $y$ and common neighbors.}
Set $\pi(y,y):=\beta/D_y$ and for each $z\in\Gamma_{xy}$ set $\pi(z,z):=\beta/D_y$.

\smallskip
\noindent\emph{(iii) Route exclusive neighbors of $x$ via $\phi$.}
For each $u\in U_x$, set $\pi(u,\phi(u)):=\beta/D_y$.

\smallskip
\noindent\emph{(iv) Distribute surplus to fill $T$.} Each $s\in\Gamma(x)$ has source mass $\beta/D_x$ but has used $\beta/D_y$ in (ii)--(iii),
leaving surplus $\beta(\frac1{D_x}-\frac1{D_y})\ge 0$.
Equivalently, after (ii)--(iii) each neighbor-row has assigned $\beta/D_y$ mass and must allocate the remaining
$\beta/D_x-\beta/D_y$ mass to reach the required row sum $\beta/D_x$.
Distribute this surplus uniformly over $T$ by setting, for $t\in T$,
\[
\pi(s,t):=\beta\Big(\frac1{D_x}-\frac1{D_y}\Big)\cdot \frac{1}{|T|}.
\]

\smallskip
A direct check shows the row sums equal $m_x^\alpha$ and the column sums equal $m_y^\alpha$.
For transport cost, all diagonal assignments have zero cost; $x\to y$ has cost $1$; $u\to\phi(u)$ has cost
$d(u,\phi(u))$; and for (iv), if $s\in\Gamma(x)$ and $t\in T\subseteq U_y\subseteq\Gamma(y)$ then
$d(s,t)\le 3$ via the path $s-x-y-t$. Therefore
\[
W_1(m_x^\alpha,m_y^\alpha)
\ \le\ \Big(\alpha-\frac{\beta}{D_y}\Big)
\ +\ \frac{\beta}{D_y}\sum_{u\in U_x} d(u,\phi(u))
\ +\ 3\beta\Big(1-\frac{D_x}{D_y}\Big),
\]
which implies \eqref{eq:coupling-lb} after dividing by $\beta$ and letting $\alpha\uparrow 1$.
\end{proof}

\begin{lemma}[Left-perfect matching in a random bipartite graph]\label{lem:bip-matching}
Let $H$ be a random bipartite graph on $L\cup R$ with $|L|=m$, $|R|=\ell\ge m$, and $\ell\le \Lambda m$
for some fixed $\Lambda\ge 1$, where edges are present independently with probability $p$.
Then for any $q>0$ there exists $C=C(\Lambda,q)>0$ such that if
\[
p\ell \ge C\log m,
\]
then
\[
\Pr\{\text{$H$ contains a matching saturating $L$}\}\ge 1-m^{-q}.
\]
\end{lemma}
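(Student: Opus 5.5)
We prove the lemma with Hall's theorem: we show that, with the stated probability, every $S\subseteq L$ satisfies $|N(S)|\ge |S|$, where $N(S)\subseteq R$ is the neighborhood of $S$ in $H$. A violation means there exist $S\subseteq L$ with $|S|=k\in[1,m]$ and $T\subseteq R$ with $|T|=\ell-k+1$ such that no edges join $S$ and $T$ (take $T$ to contain $R\setminus N(S)$). So we bound
\[
\Pr(\text{no saturating matching})\le \sum_{k=1}^{m}\binom{m}{k}\binom{\ell}{k-1}(1-p)^{k(\ell-k+1)},
\]
using $\binom{\ell}{\ell-k+1}=\binom{\ell}{k-1}$. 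The standard device is to split the range of $k$. A symmetric alternative is to apply Hall's condition from the right side when $k$ is large. For a small witness $S$ the exponent $k(\ell-k+1)\approx k\ell$ is large. For a large $S$ the complement $T$ is small, and we exploit the small side of the pair $(S,T)$.

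\emph{Case $k\le \lceil m/2\rceil$.} Here $\ell-k+1\ge \ell-m/2\ge \ell/2$, so the term is at most
\[
m^k\ell^{k-1}e^{-pk\ell/2}\le \big(\Lambda m^2 e^{-p\ell/2}\big)^k .
\]
If $p\ell\ge C\log m$ with $C\ge 2(q+3)+2\log\Lambda/\log 2$ (the constant may absorb $\Lambda$ for $m\ge 2$), each term is at most $m^{-(q+1)k}$. Summing the geometric series gives $O(m^{-q-1})$.

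\emph{Case $k> m/2$.} Set $j:=\ell-k+1=|T|$, so that $1\le j\le \ell-m/2+1$. Now $k\ge m/2\ge \ell/(2\Lambda)$, and the term is at most
\[
\binom{m}{m-k}\binom{\ell}{j}(1-p)^{kj}\le m^{m-k}\ell^{j}e^{-pj\ell/(2\Lambda)}.
\]
Since $m-k=j-1-(\ell-m)\le j-1$, this is at most $(\Lambda m^2e^{-p\ell/(2\Lambda)})^{j}$. Choosing $C\ge 2\Lambda(q+3)$ plus a $\log\Lambda$ allowance makes the sum over $j$ equal to $O(m^{-q-1})$.

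Adding the two cases gives failure probability at most $2m^{-q-1}\le m^{-q}$ for $m\ge 2$. The case $m=1$ is trivial: the failure probability is $(1-p)^{\ell}\le e^{-C\log m}$, which is handled by adjusting constants, or by noting $p\ell\ge C$.

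The main obstacle is the bookkeeping in the large-$k$ regime. We must count witnesses by the smaller set so that the binomial coefficients are controlled by $j$ rather than $k$. The inequality $k\ge \ell/(2\Lambda)$, which is where $\ell\le\Lambda m$ enters, is what guarantees enough decay per element of $T$. Everything else is a routine geometric-series estimate using $1-p\le e^{-p}$.
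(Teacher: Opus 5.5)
Your proof is correct and follows essentially the same route as the paper: Hall's theorem, a union bound over witness pairs $(S,T)$, and a split into a small-$|S|$ range and a large-$|S|$ range reindexed by the size $j=\ell-k+1$ of the edge-free set in $R$. The differences are minor. The paper splits at $s\le \ell/2$, so its second range also decays like $e^{-p\ell/2}$ per element, whereas your split at $m/2$ gives $e^{-p\ell/(2\Lambda)}$ and forces $C$ to grow linearly in $\Lambda$; this is harmless since $C=C(\Lambda,q)$. Also, each geometric sum is $m^{-(q+1)}/(1-m^{-(q+1)})$ rather than $m^{-(q+1)}$, so you need a slightly larger $C$, or $m$ large as in the paper, to conclude $\le m^{-q}$.
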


\begin{proof}
By Hall's theorem, $H$ fails to contain a matching saturating $L$ iff there exists a nonempty
$S\subseteq L$ such that $|N(S)|\le |S|-1$.

Fix $s\in\{1,\dots,m\}$ and fix $S\subseteq L$ with $|S|=s$. If $|N(S)|\le s-1$, then there exists
$T\subseteq R$ with $|T|=s-1$ such that $N(S)\subseteq T$, equivalently there are \emph{no} edges
between $S$ and $R\setminus T$. For fixed $(S,T)$,
\[
\Pr\{N(S)\subseteq T\}=(1-p)^{s(\ell-(s-1))}\le \exp\{-p\,s(\ell-s+1)\}.
\]
Union bounding over $(S,T)$ gives
\begin{equation}\label{eq:hall-union-fixed}
\Pr(\text{Hall fails})
\le \sum_{s=1}^m \binom{m}{s}\binom{\ell}{s-1}\exp\{-p\,s(\ell-s+1)\}.
\end{equation}

Set $a:=p\ell$ and use $\ell\le \Lambda m$. We split into two ranges.

\smallskip
\noindent\emph{Range I: $1\le s\le \ell/2$.}
Then $\ell-s+1\ge \ell/2$ so $\exp\{-p\,s(\ell-s+1)\}\le \exp\{-(a/2)s\}$.
Also, for $s\ge 1$,
\[
\binom{m}{s}\le \Big(\frac{em}{s}\Big)^s,
\qquad
\binom{\ell}{s-1}\le \Big(\frac{e\ell}{\max\{1,s-1\}}\Big)^{s-1}\le \Big(\frac{2e\ell}{s}\Big)^s,
\]
(where the last inequality is valid for all $s\ge 1$, treating $s=1$ separately if desired).
Hence each summand in \eqref{eq:hall-union-fixed} is at most
\[
\Big(\frac{2e^2\,m\ell}{s^2}\,e^{-a/2}\Big)^s
\le
\Big(\frac{2e^2\Lambda}{s^2}\,m^{2}\,e^{-a/2}\Big)^s.
\]
If $a\ge C\log m$, then $e^{-a/2}\le m^{-C/2}$, so the $s$th term is at most
\[
\Big(\frac{2e^2\Lambda}{s^2}\,m^{2-C/2}\Big)^s.
\]
Choose $C\ge 4(q+3)$ (for instance) so that $2-C/2\le -(q+1)$.
Then the $s$th term is $\le c(\Lambda)^s\,m^{-(q+1)s}$ and summing over $s\ge 1$ gives
\[
\sum_{s=1}^{\lfloor \ell/2\rfloor} \binom{m}{s}\binom{\ell}{s-1}e^{-p s(\ell-s+1)}
\le m^{-(q+1)}\sum_{s\ge 1} c(\Lambda)^s m^{-qs}
\le \tfrac12 m^{-q}
\]
for all large $m$.

\smallskip
\noindent\emph{Range II: $\ell/2 < s\le m$.}
Let $k:=\ell-s+1\in\{1,\dots,\lfloor \ell/2\rfloor\}$, so $s-1=\ell-k$. Choosing $T$ of size $s-1$
is equivalent to choosing its complement $U:=R\setminus T$ of size $k$, so
\[
\binom{\ell}{s-1}=\binom{\ell}{k}.
\]
Moreover, since $s\le m$ we have $\ell-k+1=s\le m$, hence
\[
\binom{m}{s}=\binom{m}{\ell-k+1}\le \binom{m}{k}
\le \Big(\frac{em}{k}\Big)^k.
\]
Also $\binom{\ell}{k}\le (e\ell/k)^k\le (e\Lambda m/k)^k$ and
$\exp\{-p\,s(\ell-s+1)\}=\exp\{-p(\ell-k+1)k\}\le \exp\{-(a/2)k\}$ because $\ell-k+1\ge \ell/2$.
Thus the $k$th term is at most
\[
\Big(\frac{e^2\Lambda m^2}{k^2}\,e^{-a/2}\Big)^k
\le
\Big(\frac{e^2\Lambda}{k^2}\,m^{2-C/2}\Big)^k.
\]
With the same choice $C\ge 4(q+3)$, summing over $k\ge 1$ gives another $\le \tfrac12 m^{-q}$ for large $m$.

Combining both ranges yields $\Pr(\text{Hall fails})\le m^{-q}$ for all sufficiently large $m$.
\end{proof}

\subsection{Curvature concentration and population levels}\label{subsec:curv-thm}

\paragraph{Population curvature levels.}
Define
\begin{equation}\label{eq:w-in-out-n}
w_{\rm in}^{(n)}:=\frac{(n-2)p_0^2+np_1^2}{(n-1)p_0+np_1},
\qquad
w_{\rm out}^{(n)}:=\frac{(2n-2)p_0p_1}{(n-1)p_0+np_1}.
\end{equation}

\begin{lemma}[Population levels are $\asymp \bar p$]\label{lem:winwout-scale}
Assume Assumption~\ref{ass:BC}. Then there exist constants $0<c_\rho<C_\rho<\infty$ such that for all large $n$,
\[
c_\rho\,\bar p\ \le\ w_{\rm out}^{(n)}\ \le\ w_{\rm in}^{(n)}\ \le\ C_\rho\,\bar p.
\]
\end{lemma}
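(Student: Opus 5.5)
The plan is a direct computation from the definitions \eqref{eq:w-in-out-n}, using only the comparability bounds of Assumption~\ref{ass:BC}. Write $r:=p_1/p_0\in[\rho,1-\rho]$ and factor $p_0$ out of every numerator and denominator. The denominator is then $p_0\big((n-1)+nr\big)$. The numerators are $p_0^2\big((n-2)+nr^2\big)$ and $p_0^2(2n-2)r$. This gives
\[
w_{\rm in}^{(n)}=p_0\,\frac{(n-2)+nr^2}{(n-1)+nr},\qquad
w_{\rm out}^{(n)}=p_0\,\frac{(2n-2)r}{(n-1)+nr}.
\]
Both ratios are $1+O(1/n)$ perturbations of $(1+r^2)/(1+r)$ and $2r/(1+r)$, respectively. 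The error term is uniform in $r\in[\rho,1-\rho]$, since the denominator is at least $n-1$.

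\textbf{Middle inequality.} The difference of the numerators is
\[
(n-2)p_0^2+np_1^2-(2n-2)p_0p_1=(n-2)(p_0-p_1)^2+2p_1^2-2p_0p_1+\dots
\]
More cleanly, it equals $n(p_0-p_1)^2-2p_0^2+2p_0p_1=n(p_0-p_1)^2-2p_0(p_0-p_1)=(p_0-p_1)\big(n(p_0-p_1)-2p_0\big)$. Now $p_0-p_1\ge\rho p_0$, so the second factor is at least $p_0(n\rho-2)>0$ once $n>2/\rho$. This gives $w_{\rm out}^{(n)}\le w_{\rm in}^{(n)}$ for all large $n$. The sign of this difference is the only step needing care, because the $-2p_0^2$ correction could in principle spoil it. The strict contrast $p_1\le(1-\rho)p_0$ is exactly what handles it.

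\textbf{Outer bounds.} Convert between $p_0$ and $\bar p$ using $p_0\ge\bar p$ and $\bar p=\tfrac12 p_0(1+r)\ge \tfrac12 p_0$, so $\bar p\le p_0\le 2\bar p$.

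For the lower bound, take $n\ge 2$. Then $2n-2\ge n$ and $(n-1)+nr\le 2n$, so
\[
w_{\rm out}^{(n)}\ge p_0\cdot\frac{n\rho}{2n}=\frac{\rho}{2}\,p_0\ge \frac{\rho}{2}\,\bar p .
\]

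For the upper bound, $(n-2)+nr^2\le n(1+r^2)\le 2n$ and $(n-1)+nr\ge n-1\ge n/2$. Hence
\[
w_{\rm in}^{(n)}\le 4p_0\le 8\bar p .
\]

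So $c_\rho=\rho/2$ and $C_\rho=8$ work. Only routine bookkeeping remains.
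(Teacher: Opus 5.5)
Your proof is correct and follows the same route as the paper: numerators of order $n\bar p^{2}$ over a common denominator of order $n\bar p$, with the contrast $p_0-p_1\ge\rho p_0$ dominating the $O(1/n)$ correction in the comparison $w_{\rm out}^{(n)}\le w_{\rm in}^{(n)}$. You make the paper's sketch fully explicit, through the exact factorization $(p_0-p_1)\bigl(n(p_0-p_1)-2p_0\bigr)$ of the numerator difference and the concrete constants $c_\rho=\rho/2$, $C_\rho=8$; your intermediate display is already exact, so the trailing ``$+\dots$'' can be dropped.
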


\begin{proof}
Under Assumption~\ref{ass:BC}, $p_0\asymp_\rho p_1\asymp_\rho \bar p$ and $d_0\asymp_\rho n\bar p$.
Both numerators in \eqref{eq:w-in-out-n} are $\asymp_\rho n\bar p^2$, hence the ratios are $\asymp_\rho \bar p$.
Moreover, since $p_1/p_0\le 1-\rho$, the gap $p_0-p_1\gtrsim_\rho \bar p$, so for all large $n$ the leading-order
comparison $(p_0^2+p_1^2)>(2p_0p_1)$ dominates the $O(1/n)$ corrections in \eqref{eq:w-in-out-n}, yielding
$w_{\rm in}^{(n)}\ge w_{\rm out}^{(n)}$.
\end{proof}

\begin{theorem}[Uniform LLY curvature concentration on edges]\label{thm:sbm-main}
Let $G\sim \mathrm{SBM}(2n,p_0,p_1,\sigma)$ with $0<p_1<p_0<1$ and assume Assumption~\ref{ass:BC}.
Assume $n\bar p^{\,2}\ge c_0\log n$ for a sufficiently large absolute constant $c_0>0$.
Then there exists $C>0$ (depending only on $\rho$) such that, with probability at least $1-n^{-2}$,
simultaneously for all edges $\{x,y\}\in E(G)$,
\[
\kappa(x,y)=
\begin{cases}
w_{\rm in}^{(n)}\ \pm\ C\,\varepsilon_n,
& \sigma(x)=\sigma(y),\\[1.2ex]
w_{\rm out}^{(n)}\ \pm\ C\,\varepsilon_n,
& \sigma(x)\neq\sigma(y),
\end{cases}
\]
where $\varepsilon_n=\sqrt{\log n/(n\bar p)}$.
Moreover, for any fixed pair $\{x,y\}$ (not depending on $n$),
\[
\kappa(x,y)
=
\begin{cases}
w_{\rm in}^{(n)} + O_p\!\big((n\bar p)^{-1/2}\big), & \sigma(x)=\sigma(y),\\
w_{\rm out}^{(n)} + O_p\!\big((n\bar p)^{-1/2}\big), & \sigma(x)\neq\sigma(y).
\end{cases}
\]
\end{theorem}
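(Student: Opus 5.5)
The plan is to sandwich $\kappa(x,y)$ between the deterministic upper bound of Lemma~\ref{lem:curv-upper} and the coupling lower bound of Lemma~\ref{lem:coupling}. Both bounds reduce to ratios of degrees and co-degrees, and those ratios concentrate by Lemmas~\ref{lem:degree-concen}, \ref{lem:codegree-concen} and~\ref{lem:ratio-expansion}. All estimates are carried out on a single good event $\mathcal G$, whose probability we need to be at least $1-n^{-2}$. On $\mathcal G$ we require: uniform degree concentration, uniform co-degree concentration, and a left-perfect matching between the exclusive neighborhoods of every adjacent pair. The hypothesis $n\bar p^2\ge c_0\log n$ implies $n\bar p\ge c_0\log n$, so both concentration lemmas apply.

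\emph{Upper bound.} Fix an edge $\{x,y\}$ and write $\mu_{xy}:=\E[D_{xy}\mid\sigma]$. By Lemma~\ref{lem:codegree-concen}, $\mu_{xy}$ is exactly the numerator of $w_{\rm in}^{(n)}$ or $w_{\rm out}^{(n)}$, and $d_0$ is the common denominator. We apply Lemma~\ref{lem:ratio-expansion} to $A=D_{xy}$ and $B=D_x$, estimating the two error terms separately rather than using the crude rate. The co-degree error contributes $\sqrt{n\bar p^2\log n}/(n\bar p)=\sqrt{\log n/n}\le\varepsilon_n$. The degree error contributes $\mu_{xy}\sqrt{n\bar p\log n}/d_0^2\lesssim_\rho \bar p\,\varepsilon_n$. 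The extra term $2/D_x$ is $O(1/(n\bar p))=o(\varepsilon_n)$. Together these give $\kappa(x,y)\le w^{(n)}+C\varepsilon_n$, where $w^{(n)}$ is the level appropriate to the pair.

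\emph{Lower bound.} By symmetry we may assume $D_x\le D_y$. Suppose there is an injection $\phi:U_x\to U_y$ with $u\sim\phi(u)$ for every $u\in U_x$, so every $d(u,\phi(u))=1$. Then $\sum_{u}d(u,\phi(u))=|U_x|=D_x-1-D_{xy}$, and \eqref{eq:coupling-lb} gives
\[
\kappa(x,y)\ \ge\ \frac{D_{xy}+2}{D_y}-\frac{D_y-D_x}{D_y}-3\,\frac{D_y-D_x}{D_y}.
\]
On the degree event, $|D_y-D_x|\lesssim\sqrt{n\bar p\log n}$, so the two correction terms are $O(\varepsilon_n)$. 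The leading term $(D_{xy}+2)/D_y$ is handled exactly as in the upper bound, giving $\kappa(x,y)\ge w^{(n)}-C\varepsilon_n$.

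\emph{Main obstacle: producing the matching uniformly over all pairs.} First, $U_x$ and $U_y$ are determined only by edges incident to $x$ or $y$. Conditional on those edges, the edges between $U_x$ and $U_y$ are independent, and each has probability $p_0$ or $p_1$, hence at least $p_1\ge\rho p_0\gtrsim_\rho\bar p$. Thinning each such edge to probability exactly $p_1$ yields a subgraph, and a matching in that subgraph is also a matching in $G$; this puts us in the homogeneous setting of Lemma~\ref{lem:bip-matching}. Second, we need the size conditions of that lemma. By blockwise degree and co-degree concentration (Lemmas~\ref{lem:blockwise-deg}, \ref{lem:blockwise-codeg}), $|U_x|$ and $|U_y|$ are $\asymp_\rho n\bar p$. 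The lower bound here uses $p_0\le 1-\rho$, which gives $D_x-D_{xy}\gtrsim n\bar p(1-p_0)$. Consequently $\ell/m\le\Lambda(\rho)$, and $p_1\ell\gtrsim n\bar p^2\ge c_0\log n\ge C\log m$ once $c_0$ is large. Third, for the union bound: if $|U_x|>|U_y|$ we simply swap the roles of $x$ and $y$; the degree-difference error is unchanged. Since $n\bar p^2\ge\log n$ implies $m\gtrsim n\bar p\ge\sqrt{n\log n}$, choosing $q$ large makes $m^{-q}\le n^{-6}$. A union bound over the $O(n^2)$ pairs, which we apply to all pairs and use only for edges, together with the bad events of the concentration lemmas, then gives total failure probability at most $n^{-2}$. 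The delicate point is the conditioning: the concentration lemmas must be applied unconditionally, and the matching lemma conditionally on the edges at $x,y$, before the events are intersected.

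\emph{Fixed pair.} For a fixed pair $\{x,y\}$ we repeat the same sandwich without the union bound. Now $D_x-d_0$, $D_y-d_0$ and $D_{xy}-\mu_{xy}$ are $O_p$ of their standard deviations, $\sqrt{n\bar p}$ and $\sqrt{n\bar p^2}$, by Chebyshev's inequality. Plugging these into the same ratio expansion gives error $O_p((n\bar p)^{-1/2})$. The matching for this single pair exists with probability $1-o(1)$, so the fixed-pair expansion follows.
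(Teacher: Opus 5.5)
Your proposal is correct and follows essentially the same route as the paper: the KR upper bound of Lemma~\ref{lem:curv-upper} and the injection coupling of Lemma~\ref{lem:coupling}, with the matching supplied by Lemma~\ref{lem:bip-matching} after dominating the $U_x\times U_y$ edges by i.i.d.\ $\mathrm{Ber}(p_1)$ conditionally on $\Gamma(x),\Gamma(y)$, combined with ratio expansion on the degree/co-degree events and a union bound over pairs. One harmless slip: substituting $|U_x|=D_x-1-D_{xy}$ into \eqref{eq:coupling-lb} gives $\frac{D_{xy}+2}{D_y}+\frac{D_y-D_x}{D_y}-3\frac{D_y-D_x}{D_y}$, which is the paper's $\frac{D_{xy}+2}{D_y}-2\frac{D_y-D_x}{D_y}$, so your middle term should carry a plus sign; your bound is weaker but still valid since $D_y\ge D_x$.
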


\begin{proof}
Let $\mathcal E_{\deg}$ be the event in Lemma~\ref{lem:degree-concen} and $\mathcal E_{\codeg}$ the event in
Lemma~\ref{lem:codegree-concen}. Since $n\bar p^{\,2}\ge c_0\log n$ implies $n\bar p\ge c_0\log n$,
both events occur with probability at least $1-2n^{-8}$.

Fix an unordered pair $\{x,y\}$ and (for the lower bound) assume $\{x,y\}\in E$.
Write $D_x=d_0+\delta_x$, $D_y=d_0+\delta_y$, and $D_{xy}=\mu_{xy}+\eta_{xy}$ with
$\mu_{xy}=\E[D_{xy}\mid\sigma]$.
On $\mathcal E_{\deg}\cap\mathcal E_{\codeg}$,
\[
|\delta_x|+|\delta_y|\ \lesssim\ \sqrt{n\bar p\log n},
\qquad
|\eta_{xy}|\ \lesssim\ \sqrt{n\bar p^{\,2}\log n}.
\]

\smallskip\noindent
\emph{Upper bound.}
If $\{x,y\}\in E$, Lemma~\ref{lem:curv-upper} gives $\kappa(x,y)\le (D_{xy}+2)/D_x$.
Apply Lemma~\ref{lem:ratio-expansion} with $A=D_{xy}+2$, $B=D_x$, and $b_n=n\bar p$ to obtain
\[
\kappa(x,y)\ \le\ \frac{\mu_{xy}}{d_0}\ \pm\ C\,\varepsilon_n
\qquad\text{on }\mathcal E_{\deg}\cap\mathcal E_{\codeg},
\]
using $\bar p\le 1$ to absorb the co-degree fluctuation into the same $\varepsilon_n$ scale.

\smallskip\noindent
\emph{Lower bound via a matching.}
Assume $\{x,y\}\in E$ and relabel if necessary so that $D_x\le D_y$.
Define $U_x,U_y$ as in Lemma~\ref{lem:coupling}. Conditional on $\Gamma(x),\Gamma(y),\sigma$,
the edges in $U_x\times U_y$ are independent and have success probabilities in $\{p_0,p_1\}\ge p_1$.
Hence the conditional law stochastically dominates an i.i.d.\ $\mathrm{Ber}(p_1)$ bipartite graph on
$U_x\cup U_y$.

On $\mathcal E_{\deg}\cap\mathcal E_{\codeg}$, Assumption~\ref{ass:BC} implies $D_x,D_y\asymp n\bar p$ and
$D_{xy}\asymp n\bar p^2$. Moreover, for either community relation between $x$ and $y$,
a direct computation from Lemma~\ref{lem:codegree-concen} gives
$d_0-\mu_{xy}\asymp_\rho n\bar p$ (this uses $p_0\le 1-\rho$), hence
\[
|U_x|=D_x-1-D_{xy}\ \asymp_\rho\ n\bar p,
\qquad
|U_y|=D_y-1-D_{xy}\ \asymp_\rho\ n\bar p,
\]
and in particular $|U_y|\le \Lambda(\rho)|U_x|$ for some $\Lambda(\rho)<\infty$.
Indeed, using Lemma~\ref{lem:codegree-concen},
if $\sigma(x)=\sigma(y)$ then
\[
d_0-\mu_{xy}
=(n-1)p_0+np_1-\big((n-2)p_0^2+np_1^2\big)
=(n-2)p_0(1-p_0)+p_0+n p_1(1-p_1),
\]
and if $\sigma(x)\neq\sigma(y)$ then
\[
d_0-\mu_{xy}
=(n-1)p_0+np_1-(2n-2)p_0p_1
=(n-1)p_0(1-p_1)+n p_1(1-p_0)+p_0p_1.
\]
Under Assumption~\ref{ass:BC} (in particular $p_0\le 1-\rho$), both are $\asymp_\rho n\bar p$.

Since $p_1\asymp_\rho \bar p$ and $|U_y|\asymp n\bar p$, we have $p_1|U_y|\asymp_\rho n\bar p^2$.
Choosing $c_0$ large enough, Lemma~\ref{lem:bip-matching} (with $q=10$) yields that, conditional on
$\Gamma(x),\Gamma(y),\sigma$, with probability at least $1-|U_x|^{-10}$ there exists an injection
$\phi:U_x\to U_y$ such that $\{u,\phi(u)\}\in E$ for all $u\in U_x$, hence $d(u,\phi(u))=1$.

On this matching event, Lemma~\ref{lem:coupling} gives
\[
\kappa(x,y)\ \ge\ \frac{D_{xy}}{D_y}\ -\ 2\frac{D_y-D_x}{D_y}\ +\ \frac{2}{D_y}.
\]
Applying Lemma~\ref{lem:ratio-expansion} to $D_{xy}/D_y$ with $b_n=n\bar p$ yields
$\frac{D_{xy}}{D_y}=\frac{\mu_{xy}}{d_0}\pm C\varepsilon_n$ on $\mathcal E_{\deg}\cap\mathcal E_{\codeg}$, while
$\frac{D_y-D_x}{D_y}\lesssim \varepsilon_n$ on $\mathcal E_{\deg}$ and $1/D_y\lesssim (n\bar p)^{-1}=o(\varepsilon_n)$.
Thus
\[
\kappa(x,y)\ \ge\ \frac{\mu_{xy}}{d_0}\ -\ C\,\varepsilon_n
\qquad\text{on }\mathcal E_{\deg}\cap\mathcal E_{\codeg}\cap \mathcal E_{\match}(x,y),
\]
where $\mathcal E_{\match}(x,y)$ denotes the matching event above.

\smallskip
\noindent\emph{Uniformization over edges.}
On $\mathcal E_{\deg}\cap\mathcal E_{\codeg}$ we have $|U_x|\ge c\,n\bar p$, hence
$\Pr(\mathcal E_{\match}(x,y)^c\mid \Gamma(x),\Gamma(y),\sigma)\le (c\,n\bar p)^{-10}$.
Taking expectations and union bounding over all $\binom{2n}{2}\le 2n^2$ unordered pairs yields
\[
\Pr\Big(\exists\,\{x,y\}\in E:\ \mathcal E_{\match}(x,y)^c\Big)\ \le\ 2n^2\,(c\,n\bar p)^{-10}.
\]
Combining with $\Pr((\mathcal E_{\deg}\cap\mathcal E_{\codeg})^c)\le 2n^{-8}$ gives an overall failure probability
\[
\le 2n^{-8} + 2n^2\,(c\,n\bar p)^{-10} \;=\; o(1).
\]
In particular, for all sufficiently large $n$ this is $\le n^{-2}$.

Finally, $\mu_{xy}/d_0$ equals $w_{\rm in}^{(n)}$ when $\sigma(x)=\sigma(y)$ and $w_{\rm out}^{(n)}$ otherwise.
The fixed-pair refinement follows by the same argument without uniform union bounds.
\end{proof}

\begin{lemma}[Uniform positivity of curvature weights]\label{lem:curv-positive}
Assume Assumptions~\ref{ass:BC} and~\ref{ass:MDT}. Then there exists a constant $c=c(\rho)>0$ such that
\[
\Pr\Big(\min_{\{x,y\}\in E(G)} \kappa(x,y)\ \ge\ c\,\bar p\Big)\ \longrightarrow\ 1.
\]
In particular, the one-step Ricci reweighting $W^{(1)}_{xy}:=\kappa(x,y)\Ind\{\{x,y\}\in E\}$
produces nonnegative edge weights w.h.p., and its weighted degrees satisfy
\[
\min_{x\in[2n]} d_x(W^{(1)})\ \ge\ c\,n\bar p^{\,2}
\qquad\text{w.h.p.}
\]
\end{lemma}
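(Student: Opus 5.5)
The plan is to derive the statement directly from Theorem~\ref{thm:sbm-main} together with Lemma~\ref{lem:winwout-scale} and the degree bounds of Lemma~\ref{lem:degree-concen}; no new probabilistic input is needed.

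First, Assumption~\ref{ass:MDT} gives $n\bar p^{\,3}/\log n\to\infty$. Since $\bar p\le 1$, this implies $n\bar p^{\,2}\ge c_0\log n$ for all large $n$, for any fixed $c_0$. So Theorem~\ref{thm:sbm-main} applies. On an event of probability at least $1-n^{-2}$, every edge satisfies
\[
\kappa(x,y)\ge \min\{w_{\rm in}^{(n)},w_{\rm out}^{(n)}\}-C\varepsilon_n = w_{\rm out}^{(n)}-C\varepsilon_n .
\]
By Lemma~\ref{lem:winwout-scale}, $w_{\rm out}^{(n)}\ge c_\rho\bar p$. The key comparison is $\varepsilon_n/\bar p=\eta_n=\sqrt{\log n/(n\bar p^{\,3})}\to 0$ under \eqref{eq:MDT}. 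Hence $C\varepsilon_n\le \tfrac12 c_\rho\bar p$ for large $n$, and on this event $\min_{E}\kappa\ge \tfrac12 c_\rho\bar p$. This proves the first claim with $c=c_\rho/2$. This is exactly where the cubic window is used rather than $n\bar p^{\,2}\gg\log n$: it is the step that makes the additive error $\varepsilon_n$ negligible relative to the level $\bar p$. I expect it to be the only delicate point.

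Nonnegativity of $W^{(1)}$ is then immediate. Its off-edge entries are $0$ by definition, and its edge entries are $\ge c\bar p>0$ on the event.

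For the weighted degrees, intersect with the event of Lemma~\ref{lem:degree-concen}. That lemma applies because $n\bar p\gg\log n$, and its event has probability at least $1-n^{-8}$. On it, $\min_x D_x\ge \tfrac12 d_0$, and $d_0\asymp_\rho n\bar p$ by \eqref{eq:d0-def}. Then for every $x$,
\[
d_x(W^{(1)})=\sum_{u\in\Gamma(x)}\kappa(x,u)\ \ge\ D_x\cdot c\,\bar p\ \ge\ \tfrac12 d_0\,c\,\bar p\ \gtrsim_\rho\ n\bar p^{\,2}.
\]
After renaming the constant $c=c(\rho)$, this gives the stated bound. A union bound over the two events gives overall probability at least $1-n^{-2}-n^{-8}\to1$.
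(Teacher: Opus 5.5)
Your proposal is correct and follows the paper's proof essentially step for step. Theorem~\ref{thm:sbm-main}, Lemma~\ref{lem:winwout-scale} and $\eta_n=\varepsilon_n/\bar p\to 0$ give $\min_E\kappa\ge\tfrac12 c_\rho\bar p$, and multiplying by $\min_x D_x\gtrsim n\bar p$ from Lemma~\ref{lem:degree-concen} gives the degree bound. Your explicit check that Assumption~\ref{ass:MDT} implies the hypothesis $n\bar p^{\,2}\ge c_0\log n$ of the theorem, and your explicit failure probability $n^{-2}+n^{-8}$, are minor details the paper leaves implicit.
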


\begin{proof}
By Lemma~\ref{lem:winwout-scale} there exists $c_0=c_0(\rho)>0$ such that for all sufficiently large $n$,
$\min\{w^{(n)}_{\rm in},w^{(n)}_{\rm out}\}\ge c_0\bar p$.
Let $\mathcal E$ be the event from Theorem~\ref{thm:sbm-main} on which, simultaneously for all edges,
$\big|\kappa(x,y)-w^{(n)}_{\rm type(x,y)}\big|\le C\varepsilon_n$.
Since $\varepsilon_n/\bar p=\eta_n\to 0$ under Assumption~\ref{ass:MDT}, for large $n$ we have
$C\varepsilon_n\le \frac12 c_0\bar p$, hence on $\mathcal E$,
$\kappa(x,y)\ge \frac12 c_0\bar p$ for all $\{x,y\}\in E$. This proves the first claim.

For the degree bound, on $\mathcal E$ we have $\kappa(x,y)\ge c\bar p$ on every edge, hence
$d_x(W^{(1)})=\sum_{y:\{x,y\}\in E}\kappa(x,y)\ge c\bar p\,D_x$.
Under Assumption~\ref{ass:MDT}, $n\bar p\gg \log n$, so Lemma~\ref{lem:degree-concen} gives
$\min_x D_x\asymp n\bar p$ w.h.p., hence $\min_x d_x(W^{(1)})\gtrsim n\bar p^2$ w.h.p.
\end{proof}

\section{One--step Ricci re--weighting and spectral clustering}\label{sec:one-step}

\paragraph{Standing model and notation.}
We work in the balanced two-block SBM on $V=[2n]$ with fixed labels
$\sigma:[2n]\to\{1,2\}$, blocks $B_a:=\sigma^{-1}(a)$, and $|B_1|=|B_2|=n$.
Write $\bar p:=\tfrac12(p_0+p_1)$ and
\[
r\ :=\ \frac{p_0-p_1}{p_0+p_1}\ \in(0,1),
\]
noting that $p_0=p_0(n)$ and $p_1=p_1(n)$ may depend on $n$.
Under Assumption~\ref{ass:BC}, $p_1/p_0\in[\rho,1-\rho]$ implies
\[
\frac{\rho}{2-\rho}\ \le\ r\ \le\ \frac{1-\rho}{1+\rho},
\]
so $r$ is bounded away from $0$ and $1$ uniformly in $n$.

Let $\ones\in\mathbb R^{2n}$ be the all--ones vector and set
\[
f\ :=\ \frac{1}{\sqrt{2n}}(\ones_n,-\ones_n)^\top\in\mathbb R^{2n}.
\]
We write $\|\cdot\|_{\op}$ for operator norm and $\|\cdot\|_{\max}$ for entrywise max norm.
Eigenvalues of a symmetric matrix are ordered increasingly: $\lambda_1\le\lambda_2\le\cdots$.

\paragraph{Sample objects.}
Let $A\in\{0,1\}^{2n\times 2n}$ be the (hollow, symmetric) adjacency matrix of $G$, so $A_{ii}=0$.
Define unweighted degrees $D_i:=\sum_{j}A_{ij}$ and the degree matrix $\widehat D_0:=\Deg(A)$
(Section~\ref{sec:prelim}).
For each edge $\{i,j\}\in E$, let $\kappa(i,j)$ denote the (random) Lin--Lu--Yau curvature of $G$
(Section~\ref{sec:curv-sbm}). Define the one-step curvature-weighted adjacency matrix $\widehat W$ by
\[
\widehat W_{ij}:=\kappa(i,j)\,\mathbbm{1}\{\{i,j\}\in E\},
\qquad
\widehat W_{ii}=0,
\]
and its weighted degrees $\widehat d_{1,i}:=d_i(\widehat W)=\sum_j\widehat W_{ij}$ and weighted degree matrix
$\widehat D_1:=\Deg(\widehat W)$.
The unweighted and one-step curvature-weighted normalized Laplacians are
\[
\widehat L_0:=L(A)=I_{2n}-\widehat D_0^{-1/2}A\,\widehat D_0^{-1/2},
\qquad
\widehat L_1:=L(\widehat W)=I_{2n}-\widehat D_1^{-1/2}\widehat W\,\widehat D_1^{-1/2}.
\]

\paragraph{Remark (centering for sign rounding).}
For $L(W)=I-D(W)^{-1/2}WD(W)^{-1/2}$ the null eigenvector is $D(W)^{1/2}\ones$ (not $\ones$),
so a $\lambda_2$-eigenvector of $\widehat L_k$ need not satisfy $\widehat v_k\perp \ones$.
To apply Lemma~\ref{lem:sign-error-fixed}, we therefore center the eigenvector via the projector
$\Pi:=I_{2n}-\frac{1}{2n}\ones\ones^\top$ before taking signs.

\paragraph{Standing window for this section.}
Throughout Section~\ref{sec:one-step} we assume Assumptions~\ref{ass:BC} and~\ref{ass:MDT}.
In particular, $n\bar p\gg\log n$ and $n\bar p^{\,2}\gg\log n$ (Remark~\ref{rem:MDT-conseq}), so:
(i) $\min_i D_i\ge 1$ w.h.p., hence $\widehat L_0$ is well-defined; and
(ii) curvature is uniformly positive on edges w.h.p.\ (Lemma~\ref{lem:curv-positive}), hence
$\min_i \widehat d_{1,i}>0$ w.h.p.\ and $\widehat L_1$ is well-defined.

\subsection{Population proxies}\label{subsec:objects}

\paragraph{Population adjacency.}
Conditioning on $\sigma$, define $P^{\pop}:=\E[A\mid\sigma]$.
Then $\Deg(P^{\pop})=d_0 I_{2n}$ with
\[
d_0=\E[D_i\mid\sigma]=(n-1)p_0+np_1 \asymp_\rho n\bar p,
\]
and we define
\[
L_0^{\pop}:=I_{2n}-d_0^{-1}P^{\pop}.
\]

\paragraph{Population curvature levels and population weighted Laplacian.}
Recall $w_{\rm in}^{(n)},w_{\rm out}^{(n)}$ from \eqref{eq:w-in-out-n}. Introduce the large-$n$ proxies
\begin{equation}\label{eq:w-in-out-pop-proxy-sec4-fixed}
w_{\rm in}^{\pop}:=\frac{p_0^2+p_1^2}{p_0+p_1}=\bar p(1+r^2),
\qquad
w_{\rm out}^{\pop}:=\frac{2p_0p_1}{p_0+p_1}=\bar p(1-r^2),
\end{equation}
so that $|w_{\rm in}^{(n)}-w_{\rm in}^{\pop}|\lesssim_\rho \bar p/n$ and
$|w_{\rm out}^{(n)}-w_{\rm out}^{\pop}|\lesssim_\rho \bar p/n$.

Let $K^{(n)}$ be the block-constant mask with $K^{(n)}_{ij}=w_{\rm in}^{(n)}$ if $\sigma(i)=\sigma(j)$,
$K^{(n)}_{ij}=w_{\rm out}^{(n)}$ if $\sigma(i)\neq\sigma(j)$, and $K^{(n)}_{ii}=0$.
Define the population weighted adjacency
\[
W_1^{\pop}:=K^{(n)}\circ P^{\pop},
\qquad
d_1:=(n-1)p_0w_{\rm in}^{(n)}+np_1w_{\rm out}^{(n)}.
\]
Then $\Deg(W_1^{\pop})=d_1 I_{2n}$ and we set
\[
L_1^{\pop}:=I_{2n}-d_1^{-1}W_1^{\pop}.
\]
Under Assumption~\ref{ass:BC}, Lemma~\ref{lem:winwout-scale} implies $d_1\asymp_\rho n\bar p^{\,2}$.

\subsection{Curvature sandwich and positivity}\label{subsec:curv-sandwich}

\begin{corollary}[Uniform curvature sandwich + positivity]\label{cor:curv-sandwich-fixed}
Assume Assumptions~\ref{ass:BC} and~\ref{ass:MDT}. Then there exist constants $c,C>0$ (depending only on $\rho$)
such that with probability at least $1-n^{-2}$, uniformly over all edges $\{i,j\}\in E$,
\begin{equation}\label{eq:curv-sandwich-sec4-fixed}
\kappa(i,j)=
\begin{cases}
w_{\rm in}^{(n)}\ \pm\ C\,\varepsilon_n,& \sigma(i)=\sigma(j),\\[3pt]
w_{\rm out}^{(n)}\ \pm\ C\,\varepsilon_n,& \sigma(i)\neq \sigma(j),
\end{cases}
\qquad
\varepsilon_n=\sqrt{\frac{\log n}{n\bar p}},
\end{equation}
and moreover on the same event,
\begin{equation}\label{eq:curv-positive-sec4-fixed}
\min_{\{i,j\}\in E}\kappa(i,j)\ \ge\ c\,\bar p,
\qquad\text{and}\qquad
\|\widehat W\|_{\max}\ \le\ C\,\bar p.
\end{equation}
In particular, $\widehat D_1$ has strictly positive diagonal entries and $\widehat L_1$ is well-defined.
\end{corollary}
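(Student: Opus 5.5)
The corollary follows from Theorem~\ref{thm:sbm-main} and Lemma~\ref{lem:winwout-scale}; the plan is simply to work on a single good event. Under Assumption~\ref{ass:MDT} we have $n\bar p^{\,3}/\log n\to\infty$. Since $\bar p\le 1$, this gives $n\bar p^{\,2}\ge c_0\log n$ for all large $n$, for whatever absolute $c_0$ Theorem~\ref{thm:sbm-main} requires. So Theorem~\ref{thm:sbm-main} applies and produces an event $\mathcal E$ of probability at least $1-n^{-2}$. On $\mathcal E$, uniformly over all edges, $\kappa(i,j)=w^{(n)}_{\rm in}\pm C\varepsilon_n$ or $w^{(n)}_{\rm out}\pm C\varepsilon_n$ according to type. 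This is exactly \eqref{eq:curv-sandwich-sec4-fixed}, with $C=C(\rho)$.

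We then check the two bounds in \eqref{eq:curv-positive-sec4-fixed} deterministically on $\mathcal E$, for large $n$.

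\emph{Lower bound.} Lemma~\ref{lem:winwout-scale} gives $\min\{w^{(n)}_{\rm in},w^{(n)}_{\rm out}\}\ge c_\rho\bar p$. Assumption~\ref{ass:MDT} gives $\varepsilon_n=\eta_n\bar p$ with $\eta_n\to0$, so $C\varepsilon_n\le \tfrac12 c_\rho\bar p$ for large $n$. Hence every edge curvature is at least $c\bar p$ with $c:=c_\rho/2$.

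\emph{Upper bound.} Similarly $\kappa(i,j)\le C_\rho\bar p+C\varepsilon_n\le 2C_\rho\bar p$. Since $\widehat W$ vanishes off $E$ and on the diagonal, $\|\widehat W\|_{\max}\le C'\bar p$.

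It is essential that both bounds hold on the same event $\mathcal E$ and not on a further intersection. Otherwise the stated probability $1-n^{-2}$ would degrade, and the claim ``on the same event'' is what lets us keep that probability. The point is that the $n$-threshold is deterministic, depending only on $\rho$ and the rate at which $\eta_n\to0$, so no further randomness is introduced. For the finitely many small $n$ the statement is only asserted for large $n$, as elsewhere in the paper, so we adjust constants accordingly.

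Finally, well-definedness of $\widehat L_1$ requires $\widehat d_{1,i}=\sum_j\widehat W_{ij}>0$ for every $i$. Each edge contributes at least $c\bar p>0$, so it suffices that every vertex has at least one edge. This is \emph{not} automatic on $\mathcal E$ as stated. However, the proof of Theorem~\ref{thm:sbm-main} actually constructs $\mathcal E$ inside $\mathcal E_{\deg}$ from Lemma~\ref{lem:degree-concen}, where $\min_i D_i\ge d_0/2\ge1$. So we would take $\mathcal E$ to be that intersection; its failure probability is still bounded by $n^{-2}$, as in that proof. Then $\widehat d_{1,i}\ge c\bar p D_i>0$, $\widehat D_1^{-1/2}$ exists, and $\widehat L_1$ is defined.

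The main (mild) obstacle is this bookkeeping: ensuring that minimum degree, the sandwich and positivity all hold on one event of probability at least $1-n^{-2}$. We handle it by reusing the event from the proof of Theorem~\ref{thm:sbm-main} rather than intersecting fresh events.
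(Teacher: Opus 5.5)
Your proposal is correct and follows the paper's proof. Like the paper, you invoke Theorem~\ref{thm:sbm-main} for the sandwich, then combine Lemma~\ref{lem:winwout-scale} with $\varepsilon_n=\eta_n\bar p=o(\bar p)$ to obtain both the lower bound $c\bar p$ and the max-norm bound $C\bar p$ deterministically on that same event. Your extra point is also valid and fills in something the paper leaves implicit: positivity of $\widehat D_1$ additionally needs $\min_i D_i\ge 1$, which holds because the event constructed in the proof of Theorem~\ref{thm:sbm-main} lies inside $\mathcal E_{\deg}$.
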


\begin{proof}
The sandwich \eqref{eq:curv-sandwich-sec4-fixed} is Theorem~\ref{thm:sbm-main}.
Under Assumption~\ref{ass:MDT}, $\varepsilon_n=o(\bar p)$, so combining
Lemma~\ref{lem:winwout-scale} with \eqref{eq:curv-sandwich-sec4-fixed} yields
$\min_{\{i,j\}\in E}\kappa(i,j)\ge c\bar p$ for all large $n$.
Finally, $\widehat W_{ij}=\kappa(i,j)\mathbbm{1}\{\{i,j\}\in E\}$ implies
$\|\widehat W\|_{\max}\le \max_{\{i,j\}\in E}\kappa(i,j)\lesssim \bar p$ on the same event.
\end{proof}

\subsection{Population spectrum and mean--field gain}\label{subsec:pop-spectrum}

\begin{lemma}[Population spectrum from block form]\label{lem:block-spectrum-fixed}
Fix $(w_{\rm in},w_{\rm out})$ and set $a:=p_0w_{\rm in}$, $b:=p_1w_{\rm out}$,
and $d_*:=(n-1)a+nb$.
Let
\[
W=
\begin{pmatrix}
a\,(J_n-I_n) & b\,J_n\\
b\,J_n & a\,(J_n-I_n)
\end{pmatrix},
\qquad
L=I_{2n}-\frac{1}{d_*}W.
\]
Then $\Deg(W)=d_*I_{2n}$, $\lambda_1(L)=0$ with eigenvector $\ones$, and:
\begin{itemize}
\item the $2(n-1)$ within--block zero--sum directions have eigenvalue $1+a/d_*$;
\item in the block--constant subspace, the remaining eigenpair is $(\lambda_2(L),f)$ with
\[
\lambda_2(L)=1-r_n(w_{\rm in},w_{\rm out}),
\qquad
r_n(w_{\rm in},w_{\rm out})
:=\frac{(n-1)p_0w_{\rm in}-np_1w_{\rm out}}{(n-1)p_0w_{\rm in}+np_1w_{\rm out}}.
\]
\end{itemize}
\end{lemma}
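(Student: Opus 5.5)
Since $W$ is block-structured with constant row sums, I plan to diagonalize it directly and read off the spectrum of $L=I_{2n}-d_*^{-1}W$ from that of $W$.

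\emph{Degrees and the null direction.} First compute the row sums. A row indexed by $i\in B_1$ has $n-1$ off-diagonal entries equal to $a$ in its own block and $n$ entries equal to $b$ in the other block. Its sum is therefore $(n-1)a+nb=d_*$, and the same count holds for $i\in B_2$. Hence $\Deg(W)=d_*I_{2n}$ and $W\ones=d_*\ones$, so $L\ones=0$. Once the full spectrum is listed it will also be clear that $0$ is the smallest eigenvalue, i.e.\ $\lambda_1(L)=0$.

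\emph{Within-block zero-sum directions.} Take $v=(u,0)^\top$ with $u\in\mathbb R^n$ and $\ones_n^\top u=0$ (and symmetrically $v=(0,u)^\top$); these vectors span a space of dimension $2(n-1)$. Since $J_nu=0$, we get $a(J_n-I_n)u=-au$ and $bJ_nu=0$. Thus $Wv=-av$ and $Lv=(1+a/d_*)v$.

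\emph{Block-constant subspace.} This subspace is spanned by $\ones$ and $f$, and it is the orthogonal complement of the zero-sum directions, so it completes an orthogonal eigenbasis of $\mathbb R^{2n}$. For $f\propto(\ones_n,-\ones_n)$ we have $a(J_n-I_n)\ones_n=(n-1)a\ones_n$ and $bJ_n(-\ones_n)=-nb\ones_n$. Hence $Wf=((n-1)a-nb)f$, and
\[
Lf=\Big(1-\frac{(n-1)a-nb}{(n-1)a+nb}\Big)f .
\]
Substituting $a=p_0w_{\rm in}$ and $b=p_1w_{\rm out}$ gives exactly $1-r_n(w_{\rm in},w_{\rm out})$.

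\emph{Ordering (main point needing care).} The statement calls this eigenvalue $\lambda_2(L)$, so I must check that $1-r_n\le 1+a/d_*$, i.e.\ $r_n\ge -a/d_*$. Clearly $r_n>-1$. With $a,b\ge 0$ the required inequality reads $(n-1)a-nb\ge -a$, which is equivalent to $na\ge nb$, i.e.\ $a\ge b$. In the intended application $p_0\ge p_1$ and $w_{\rm in}\ge w_{\rm out}$ by Lemma~\ref{lem:winwout-scale}, so $a\ge b$ holds. In that case $1-r_n\in[0,1]<1+a/d_*$, and the ordering $0=\lambda_1<\lambda_2=1-r_n<1+a/d_*$ follows. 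If $a<b$ in full generality, then $1-r_n$ is simply the eigenvalue on $f$, and I would note the ordering assumption explicitly. Everything else is routine bookkeeping of these multiplicities, $1+1+2(n-1)=2n$.
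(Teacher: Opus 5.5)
Your argument is correct and is the paper's own proof: row sums give $\Deg(W)=d_*I_{2n}$, within-block zero-sum vectors have eigenvalue $1+a/d_*$, and on the block-constant subspace $W$ acts through the $2\times 2$ matrix with eigenvectors $\ones$ and $f$. Your check that labelling the $f$-eigenvalue as $\lambda_2(L)$ requires $a\ge b$ is a point the paper leaves implicit, but one step slips: $a\ge b$ gives only $0\le 1-r_n\le 1+a/d_*$ (equality at $a=b$), whereas your claims $1-r_n\le 1$ and strict inequality need $(n-1)a\ge nb$ and $a>b$ respectively; both hold in the paper's applications for large $n$.
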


\begin{proof}
Let $u$ be supported on $B_1$ with $\sum_{i\in B_1}u_i=0$ (the other block case is identical).
Then $J_n u_{B_1}=0$ and $(J_n-I_n)u_{B_1}=-u_{B_1}$, while the off-diagonal block annihilates $u$.
Hence $Wu=-a u$ and $Lu=(1+a/d_*)u$.

For block-constant vectors $x=(c_1\ones_n,c_2\ones_n)$,
\[
Wx=\big(((n-1)a\,c_1+nb\,c_2)\ones_n,\ (nb\,c_1+(n-1)a\,c_2)\ones_n\big),
\]
so on the basis $\{(\ones_n,0),(0,\ones_n)\}$ the restriction of $W$ equals
$\begin{psmallmatrix}(n-1)a & nb\\ nb & (n-1)a\end{psmallmatrix}$,
whose eigenvectors are $\ones$ and $f$ with eigenvalues $d_*=(n-1)a+nb$ and $(n-1)a-nb$.
Normalizing by $d_*$ gives the stated eigenvalues of $L$.
\end{proof}

\begin{lemma}[Mean--field eigengap gain and curvature contrast]\label{lem:mf-gain-fixed}
Let $\Gamma_k^{\pop}:=\lambda_3(L_k^{\pop})-\lambda_2(L_k^{\pop})$ for $k\in\{0,1\}$ and define
\[
r_n^{(0)}:=r_n(1,1)
=\frac{(n-1)p_0-np_1}{(n-1)p_0+np_1},
\qquad
r_n^{(1)}:=r_n\!\big(w_{\rm in}^{(n)},w_{\rm out}^{(n)}\big).
\]
Then, uniformly under Assumption~\ref{ass:BC},
\begin{equation}\label{eq:mf-gap-fixed}
\Gamma_k^{\pop}
=
r_n^{(k)}\ +\ O\!\Big(\frac{1}{n}\Big),
\qquad k\in\{0,1\}.
\end{equation}
Moreover, writing $r=(p_0-p_1)/(p_0+p_1)$ and defining
\begin{equation}\label{eq:r-curv-def-fixed}
r_{\rm curv}:=
\frac{p_0 w_{\rm in}^{\pop}-p_1 w_{\rm out}^{\pop}}{p_0 w_{\rm in}^{\pop}+p_1 w_{\rm out}^{\pop}}
=\frac{r}{1-r+r^2},
\end{equation}
we have
\begin{equation}\label{eq:mf-expansions-fixed}
r_n^{(0)}=r+O\!\Big(\frac{1}{n}\Big),
\qquad
r_n^{(1)}=r_{\rm curv}+O\!\Big(\frac{1}{n}\Big),
\end{equation}
and hence
\begin{equation}\label{eq:mf-gain-fixed}
\Gamma_1^{\pop}-\Gamma_0^{\pop}
=
\big(r_{\rm curv}-r\big)+O\!\Big(\frac{1}{n}\Big)
=
\frac{r^2(1-r)}{1-r+r^2}+O\!\Big(\frac{1}{n}\Big).
\end{equation}
Under Assumption~\ref{ass:BC}, the leading term $r_{\rm curv}-r$ is bounded below by a positive constant depending only on $\rho$.
\end{lemma}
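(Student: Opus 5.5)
The plan is to read everything off the explicit block spectrum of Lemma~\ref{lem:block-spectrum-fixed}, and then do elementary ratio expansions.

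\textbf{Step 1: identify the two population Laplacians with the block form.} Since $P^{\pop}$ is hollow with entries $p_0$ within blocks and $p_1$ across, $L_0^{\pop}$ is exactly the matrix $L$ of Lemma~\ref{lem:block-spectrum-fixed} with $(w_{\rm in},w_{\rm out})=(1,1)$. Then $a=p_0$, $b=p_1$ and $d_*=d_0$. Likewise $W_1^{\pop}=K^{(n)}\circ P^{\pop}$ is the block form with $a=p_0w_{\rm in}^{(n)}$ and $b=p_1w_{\rm out}^{(n)}$, so $d_*=d_1$.

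\textbf{Step 2: read off the gap.} The lemma gives three eigenvalues: $0$ (eigenvector $\ones$), $1-r_n^{(k)}$ (eigenvector $f$), and $1+a/d_*$ with multiplicity $2(n-1)$.
\begin{itemize}
\item To fix the ordering, I check that $r_n^{(k)}\in(0,1)$ for large $n$. For $k=0$ this follows from $p_1/p_0\le 1-\rho$. For $k=1$, Lemma~\ref{lem:winwout-scale} gives $w_{\rm in}^{(n)}\ge w_{\rm out}^{(n)}$, so $(n-1)p_0w_{\rm in}^{(n)}-np_1w_{\rm out}^{(n)}\ge w_{\rm out}^{(n)}\big((n-1)p_0-np_1\big)>0$.
\item Hence $\lambda_2=1-r_n^{(k)}$ and $\lambda_3=1+a/d_*$, so
\[
\Gamma_k^{\pop}=r_n^{(k)}+\frac{a}{d_*}.
\]
\item Since $d_*\ge (n-1)a$, we get $0\le a/d_*\le 1/(n-1)$. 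This gives \eqref{eq:mf-gap-fixed} with an absolute constant.
\end{itemize}

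\textbf{Step 3: expansions \eqref{eq:mf-expansions-fixed}.}
\begin{itemize}
\item For $k=0$, write $r_n^{(0)}$ as the ratio of $n(p_0-p_1)-p_0$ to $n(p_0+p_1)-p_0$. Compared with $r=(p_0-p_1)/(p_0+p_1)$, both numerator and denominator carry an additive perturbation of size $p_0$ on a scale $\asymp_\rho n\bar p$. So the difference is $O_\rho(1/n)$; explicitly it is $p_0(r-1)$ divided by $n(p_0+p_1)-p_0$.
\item For $k=1$, use that $r_n(w_{\rm in},w_{\rm out})$ is homogeneous of degree $0$. Substitute $w^{(n)}=w^{\pop}\big(1+O_\rho(1/n)\big)$, which follows from $|w^{(n)}-w^{\pop}|\lesssim_\rho\bar p/n$ together with $w^{\pop}\asymp_\rho\bar p$. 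Also replace $n-1$ by $n$.
\item Every resulting perturbation is a relative $O_\rho(1/n)$ error in the numerator and the denominator. The denominator is bounded below by a constant multiple of the numerator scale, as in Lemma~\ref{lem:ratio-expansion}'s algebra. This yields $r_n^{(1)}=r_{\rm curv}+O(1/n)$.
\end{itemize}

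\textbf{Step 4: closed forms.} A direct computation gives
\[
p_0w_{\rm in}^{\pop}-p_1w_{\rm out}^{\pop}=p_0(p_0-p_1),
\qquad
p_0w_{\rm in}^{\pop}+p_1w_{\rm out}^{\pop}=\frac{p_0(p_0^2+3p_1^2)}{p_0+p_1}.
\]
Hence $r_{\rm curv}=(p_0^2-p_1^2)/(p_0^2+3p_1^2)$. Substituting $p_0=\bar p(1+r)$ and $p_1=\bar p(1-r)$ gives numerator $4r$ and denominator $4(1-r+r^2)$, which is \eqref{eq:r-curv-def-fixed}. Then
\[
r_{\rm curv}-r=\frac{r\big(1-(1-r+r^2)\big)}{1-r+r^2}=\frac{r^2(1-r)}{1-r+r^2},
\]
and \eqref{eq:mf-gain-fixed} follows by subtracting the two gap expansions.

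\textbf{Step 5: lower bound.} The function $r\mapsto r^2(1-r)/(1-r+r^2)$ is continuous and positive on $(0,1)$, and its denominator is at least $3/4$. So on the compact interval $[\rho/(2-\rho),\,(1-\rho)/(1+\rho)]$ it is bounded below by a constant $c(\rho)>0$.

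\textbf{Main obstacle.} The only real care is bookkeeping in Step 3. All the $O(1/n)$ constants must be uniform, which requires the denominators to be bounded below by $c_\rho n\bar p$ (respectively $c_\rho n\bar p^2$), and this is supplied by Assumption~\ref{ass:BC} and Lemma~\ref{lem:winwout-scale}. The other point to watch is the ordering check in Step 2.
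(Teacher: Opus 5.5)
Your proposal is correct and follows essentially the same route as the paper. Like the paper, you read $\lambda_2,\lambda_3$ off Lemma~\ref{lem:block-spectrum-fixed}, pass from $r_n^{(k)}$ to $r$ and $r_{\rm curv}$ by replacing $n-1$ with $n$ and $w^{(n)}$ with $w^{\pop}$ at $O(1/n)$ cost, verify the closed form algebraically, and conclude by compactness of the admissible range of $r$. Your explicit check that $r_n^{(k)}\in(0,1)$, which justifies the eigenvalue ordering, and the exact bound $0\le a/d_*\le 1/(n-1)$ fill in details the paper leaves implicit.
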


\begin{proof}
Apply Lemma~\ref{lem:block-spectrum-fixed}. For $L_0^{\pop}$ take $(w_{\rm in},w_{\rm out})=(1,1)$.
For $L_1^{\pop}$ take $(w_{\rm in},w_{\rm out})=(w_{\rm in}^{(n)},w_{\rm out}^{(n)})$.
In either case, $\lambda_3(L)=1+a/d_*$ with $a/d_*=O(1/n)$ uniformly, while
$\lambda_2(L)=1-r_n(\cdot,\cdot)$, which gives \eqref{eq:mf-gap-fixed}.
The expansions \eqref{eq:mf-expansions-fixed} follow by (i) replacing $n-1$ by $n$ at cost $O(1/n)$ and
(ii) replacing $(w_{\rm in}^{(n)},w_{\rm out}^{(n)})$ by $(w_{\rm in}^{\pop},w_{\rm out}^{\pop})$ at cost $O(1/n)$
using \eqref{eq:w-in-out-pop-proxy-sec4-fixed}. The identity in \eqref{eq:r-curv-def-fixed} is algebraic.
Finally, under Assumption~\ref{ass:BC}, $r$ ranges over a compact subset of $(0,1)$, so
$\inf (r_{\rm curv}-r)>0$ over that range.
\end{proof}

\subsection{Auxiliary perturbation lemmas}\label{subsec:aux-perturb}

\begin{lemma}[Row/column sum bound]\label{lem:op-1inf-fixed}
For any real matrix $M$,
\[
\|M\|_{\op}\ \le\ \sqrt{\|M\|_{1}\,\|M\|_{\infty}},
\]
where $\|M\|_{\infty}:=\max_i\sum_j |M_{ij}|$ and $\|M\|_{1}:=\max_j\sum_i |M_{ij}|$.
In particular, if $M$ is symmetric then $\|M\|_{\op}\le \|M\|_{\infty}$.
\end{lemma}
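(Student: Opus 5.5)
The plan is to reduce the operator norm to a bilinear form and apply Schur's test, i.e.\ Cauchy--Schwarz with the weights $|M_{ij}|$. Let $x,y$ be unit vectors; we will bound $|y^\top M x|$ and then take the supremum, since $\|M\|_{\op}=\sup_{\|x\|=\|y\|=1}|y^\top Mx|$.

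Write
\[
|y^\top Mx|\le\sum_{i,j}|M_{ij}|\,|y_i|\,|x_j| =\sum_{i,j}\big(|M_{ij}|^{1/2}|y_i|\big)\big(|M_{ij}|^{1/2}|x_j|\big).
\]
Applying Cauchy--Schwarz over the pairs $(i,j)$ bounds this by
\[
\Big(\sum_{i}y_i^2\sum_j|M_{ij}|\Big)^{1/2}\Big(\sum_j x_j^2\sum_i|M_{ij}|\Big)^{1/2}.
\]
The first factor is at most $\|M\|_\infty^{1/2}\|y\|_2=\|M\|_\infty^{1/2}$, since each inner row sum is at most $\|M\|_\infty$. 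The second factor is at most $\|M\|_1^{1/2}\|x\|_2=\|M\|_1^{1/2}$ by the analogous column-sum bound. Taking the supremum over unit $x,y$ gives $\|M\|_{\op}\le\sqrt{\|M\|_1\|M\|_\infty}$.

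For the symmetric case, note that $|M_{ij}|=|M_{ji}|$, so the maximal column sum equals the maximal row sum, $\|M\|_1=\|M\|_\infty$. The bound then collapses to $\|M\|_{\op}\le\|M\|_\infty$.

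There is no real obstacle here. The only point needing care is splitting $|M_{ij}|$ as $|M_{ij}|^{1/2}\cdot|M_{ij}|^{1/2}$ before applying Cauchy--Schwarz, so that the row sums attach to $y$ and the column sums attach to $x$.
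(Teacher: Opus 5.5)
Your proof is correct, but it takes a different route from the paper's.

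\textbf{Your route.} You bound the bilinear form $|y^\top Mx|$ directly. You split $|M_{ij}|=|M_{ij}|^{1/2}|M_{ij}|^{1/2}$ and apply Cauchy--Schwarz over index pairs, which is Schur's test with constant test vectors.

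\textbf{The paper's route.} The paper argues spectrally in three steps:
\begin{itemize}
\item it uses $\|M\|_{\op}^2=\|M^\top M\|_{\op}$;
\item it bounds the spectral radius of the symmetric matrix $M^\top M$ by its induced $\ell_1$ norm;
\item it applies submultiplicativity, $\|M^\top M\|_1\le\|M^\top\|_1\|M\|_1=\|M\|_\infty\|M\|_1$.
\end{itemize}

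\textbf{Comparison.} The paper's version is a one-line chain once you accept two standard facts: the spectral radius is dominated by any induced norm, and induced norms are submultiplicative. Yours is fully self-contained and needs only Cauchy--Schwarz. It also applies verbatim to rectangular $M$, as the statement ``for any real matrix'' requires. Finally, it extends immediately to the weighted Schur test (positive weight vectors in place of constants) and to integral kernels, where the spectral-radius argument would need more care.
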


\begin{proof}
Standard: $\|M\|_{\op}^2=\|M^\top M\|_{\op}\le \|M^\top M\|_1\le \|M^\top\|_1\|M\|_1=\|M\|_\infty\|M\|_1$.
\end{proof}

\begin{lemma}[Inverse square-root perturbation on diagonals]\label{lem:inv-sqrt-diag-fixed}
Let $A=\diag(a)$ and $B=\diag(b)$ with $a_i,b_i\ge m>0$ for all $i$.
Then
\[
\|A^{-1/2}-B^{-1/2}\|_{\op}
\ \le\ \tfrac12\,m^{-3/2}\,\|a-b\|_\infty.
\]
\end{lemma}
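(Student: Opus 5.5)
Because $A^{-1/2}-B^{-1/2}$ is itself diagonal, its operator norm is just the largest absolute diagonal entry, namely $\max_i |a_i^{-1/2}-b_i^{-1/2}|$. The claim therefore reduces to the scalar inequality
\[
|s^{-1/2}-t^{-1/2}|\ \le\ \tfrac12\,m^{-3/2}\,|s-t|\qquad\text{for all } s,t\ge m>0,
\]
applied with $s=a_i$, $t=b_i$ and then maximized over $i$, which produces $\|a-b\|_\infty$ on the right-hand side.

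For the scalar inequality I will apply the mean value theorem to $g(u)=u^{-1/2}$ on the interval between $s$ and $t$, which lies inside $[m,\infty)$. This yields some $\xi\ge m$ with
\[
s^{-1/2}-t^{-1/2}=g'(\xi)(s-t)=-\tfrac12\xi^{-3/2}(s-t).
\]
Since $\xi^{-3/2}$ is decreasing in $\xi$, we have $\xi^{-3/2}\le m^{-3/2}$, and the bound follows. If $s=t$ both sides vanish, so that case is trivial.

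The argument has no real obstacle. The only points requiring care are that the operator norm of a diagonal matrix equals the maximum absolute value of its diagonal entries, and that the intermediate point $\xi$ inherits the lower bound $m$ from $s,t\ge m$.
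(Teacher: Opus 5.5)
Your proposal is correct and is the same argument as the paper: the difference $A^{-1/2}-B^{-1/2}$ is diagonal, so its operator norm is the largest entry $|a_i^{-1/2}-b_i^{-1/2}|$, and the mean value theorem for $t\mapsto t^{-1/2}$ gives each entry the bound $\tfrac12 m^{-3/2}|a_i-b_i|$. You simply spell out the two details the paper leaves implicit, namely the diagonal operator-norm identity and the fact that the intermediate point inherits the bound $\xi\ge m$.
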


\begin{proof}
Apply the scalar mean value theorem to $\varphi(t)=t^{-1/2}$ coordinatewise.
\end{proof}

\begin{lemma}[From principal angle to sign error]\label{lem:sign-error-fixed}
Let $f=(\ones_n,-\ones_n)^\top/\sqrt{2n}$ and let $v\perp\ones$ be unit.
Let $\widehat\sigma_v$ be the sign clustering induced by $v$, with misclassification rate
\[
\mathrm{err}(\widehat\sigma_v)
:=\frac{1}{2n}\min\Big\{\big|\{i:\widehat\sigma_v(i)\neq \sigma(i)\}\big|,\ \big|\{i:\widehat\sigma_v(i)= \sigma(i)\}\big|\Big\}.
\]
Then
\[
\mathrm{err}(\widehat\sigma_v)\ \le\ \min\Big\{\frac12,\ \tan^2\angle(v,f)\Big\}.
\]
\end{lemma}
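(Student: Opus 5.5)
The bound $\mathrm{err}(\widehat\sigma_v)\le\tfrac12$ is immediate from the definition: the two sets inside the minimum partition $[2n]$, so the smaller one has size at most $n$. It therefore suffices to prove $\mathrm{err}(\widehat\sigma_v)\le\tan^2\angle(v,f)$. The approach is to decompose $v$ along $f$ and its orthogonal complement, and then count coordinates where the orthogonal part is large enough to flip a sign.

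\textbf{Normalization.} Replacing $v$ by $-v$ swaps the two sets in the definition of $\mathrm{err}$, so $\mathrm{err}$ is unchanged. The principal angle $\theta:=\angle(v,f)\in[0,\pi/2]$ is also unchanged. We may therefore assume $\langle v,f\rangle=\cos\theta\ge 0$. If $\cos\theta=0$, then $\tan^2\theta=+\infty$ and there is nothing to prove, so assume $\cos\theta>0$. Write
\[
v=\cos\theta\, f+\sin\theta\, g,
\]
where $g$ is a unit vector orthogonal to $f$ (any unit $g\perp f$ if $\sin\theta=0$). The hypothesis $v\perp\ones$ is not needed for this inequality. Only the unit norms of $v$ and $f$ and the fact that $|f_i|=1/\sqrt{2n}$ for all $i$ are used.

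\textbf{Counting step.} Let $M:=\{i:\widehat\sigma_v(i)\neq\sigma(i)\}$. This is the set of indices where $\operatorname{sign}(v_i)$ fails to agree with $\operatorname{sign}(f_i)$; ties $v_i=0$ are counted conservatively as errors. For $i\in M$, the coordinate $v_i$ is zero or has sign opposite to $\cos\theta\,f_i$. Hence
\[
|v_i-\cos\theta\,f_i|\ \ge\ \cos\theta\,|f_i|=\frac{\cos\theta}{\sqrt{2n}},
\]
that is, $\sin^2\theta\,g_i^2\ge \cos^2\theta/(2n)$. Summing over $i\in M$ and using $\sum_i g_i^2=1$ gives
\[
|M|\,\frac{\cos^2\theta}{2n}\ \le\ \sin^2\theta .
\]
Therefore $|M|/(2n)\le\tan^2\theta$. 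Since $\mathrm{err}(\widehat\sigma_v)\le |M|/(2n)$, this yields the claim.

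\textbf{Main obstacle.} The argument is routine; the only point requiring care is bookkeeping. One must handle the global sign ambiguity, which is resolved by taking the minimum in $\mathrm{err}$. One must also handle zero coordinates, which depend on the sign-rounding convention but still satisfy the displayed inequality. Finally, in later applications one needs to check that the centered eigenvector $\Pi\widehat v_k/\|\Pi\widehat v_k\|$ is the vector $v$ to which the lemma is applied.
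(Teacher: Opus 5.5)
Your proof is correct and follows the paper's argument: fix the sign so that $\langle v,f\rangle\ge 0$, decompose $v=\cos\theta\, f+(\text{a component orthogonal to } f)$, observe that each misclassified coordinate forces $|v_i-\cos\theta\, f_i|\ge \cos\theta/\sqrt{2n}$, and sum to obtain $|M|\le 2n\tan^2\theta$. Your bookkeeping is slightly more careful than the paper's wording ``after the optimal global flip.'' You define $M$ relative to the sign-normalized $v$ and then use $\mathrm{err}(\widehat\sigma_v)\le |M|/(2n)$, and you also treat explicitly the case $\cos\theta=0$, zero coordinates, and the trivial $\tfrac12$ bound.
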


\begin{proof}
Choose the global sign of $v$ so that $\langle v,f\rangle\ge 0$ and write
$v=\alpha f+g$ with $g\perp f$, $\alpha=\cos\theta$, $\|g\|_2=\sin\theta$, $\theta=\angle(v,f)\in[0,\pi/2]$.
Let $M$ be the misclassified set (after the optimal global flip). For $i\in M$, $v_i$ and $f_i$ have opposite signs,
so $|v_i-\alpha f_i|\ge \alpha|f_i|=\alpha/\sqrt{2n}$. Hence
\[
\sin^2\theta=\|g\|_2^2=\|v-\alpha f\|_2^2
\ \ge\ \sum_{i\in M}(v_i-\alpha f_i)^2
\ \ge\ |M|\cdot \frac{\alpha^2}{2n},
\]
so $\mathrm{err}(\widehat\sigma_v)=|M|/(2n)\le \sin^2\theta/\alpha^2=\tan^2\theta$.
\end{proof}

\subsection{Degree and spectral controls}\label{subsec:deg-spec}

\begin{lemma}[Degree, weighted-degree, and adjacency spectral bounds]\label{lem:deg-spec-fixed}
Assume Assumptions~\ref{ass:BC} and~\ref{ass:MDT}. There exist constants $c,C>0$ such that with probability
at least $1-Cn^{-6}$,
\begin{align}
\min_i D_i &\ge c\,n\bar p,
&\max_i |D_i-d_0| &\le C\sqrt{n\bar p\log n},
\label{eq:deg-bnds-fixed}\\[4pt]
\min_i \widehat d_{1,i} &\ge c\,d_1,
&\max_i |\widehat d_{1,i}-d_1| &\le C\sqrt{n\bar p\log n},
\label{eq:wdeg-bnds-fixed}\\[4pt]
\|A-P^{\pop}\|_{\op} &\le C\sqrt{n\bar p\log n},
\label{eq:A-pop-fixed}\\[4pt]
\|A\|_{\op} &\le C\,n\bar p.
\label{eq:A-op-fixed}
\end{align}
\end{lemma}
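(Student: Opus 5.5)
The plan is to intersect a small number of high-probability events, each coming from an earlier lemma or a standard matrix concentration inequality, and then to verify the four displays deterministically on that intersection. The events are:
\begin{itemize}
\item the degree event of Lemma~\ref{lem:degree-concen};
\item the blockwise degree event of Lemma~\ref{lem:blockwise-deg};
\item the uniform curvature event of Theorem~\ref{thm:sbm-main} (equivalently Corollary~\ref{cor:curv-sandwich-fixed});
\item a matrix-Bernstein event for $A-P^{\pop}$.
\end{itemize}

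\textbf{Degree bounds.} Display \eqref{eq:deg-bnds-fixed} is exactly Lemma~\ref{lem:degree-concen}, which applies because $n\bar p\gg\log n$ under Assumption~\ref{ass:MDT}. The lower bound $\min_i D_i\ge \tfrac12 d_0\ge c\,n\bar p$ follows from $d_0\asymp_\rho n\bar p$.

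\textbf{Operator-norm bounds.} For \eqref{eq:A-op-fixed}, Lemma~\ref{lem:op-1inf-fixed} applied to the symmetric matrix $A$ gives $\|A\|_{\op}\le\max_i D_i\le 2d_0\lesssim n\bar p$. For \eqref{eq:A-pop-fixed}, I would write $A-P^{\pop}=\sum_{i<j}(A_{ij}-P^{\pop}_{ij})(e_ie_j^\top+e_je_i^\top)$. This is a sum of independent, centered, symmetric matrices. Each summand has operator norm at most $1$, and the variance proxy satisfies $\|\sum\E X_{ij}^2\|_{\op}\le\max_i\sum_j P^{\pop}_{ij}\le 2n\bar p$. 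Matrix Bernstein then gives $\|A-P^{\pop}\|_{\op}\le C(\sqrt{n\bar p\log n}+\log n)$ with probability $1-n^{-6}$ for $C$ large enough. Since $n\bar p\gg\log n$, the $\log n$ term is absorbed into the square-root term. Note that $P^{\pop}$ is hollow because $A$ is.

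\textbf{Weighted degrees.} For \eqref{eq:wdeg-bnds-fixed}, with $K^{(n)}$ the block mask, I would decompose
\[
\widehat d_{1,i}=\sum_j K^{(n)}_{ij}A_{ij}+\sum_j\big(\kappa(i,j)-K^{(n)}_{ij}\big)A_{ij}.
\]
\begin{itemize}
\item The second sum is bounded on the curvature event by $C\varepsilon_n D_i\lesssim \varepsilon_n n\bar p=\sqrt{n\bar p\log n}$.
\item For $\sigma(i)=1$, the first sum equals $w_{\rm in}^{(n)}D_i^{(B_1)}+w_{\rm out}^{(n)}D_i^{(B_2)}$. By Lemma~\ref{lem:blockwise-deg} and Lemma~\ref{lem:winwout-scale}, this is $d_1\pm C\bar p\sqrt{n\bar p\log n}$, which is even smaller than the required bound. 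The case $\sigma(i)=2$ is symmetric.
\item The lower bound $\min_i\widehat d_{1,i}\ge c\,d_1$ then follows. We have $d_1\asymp n\bar p^2$, and $\sqrt{n\bar p\log n}/(n\bar p^2)=\eta_n\to0$ under Assumption~\ref{ass:MDT}.
\end{itemize}
This is the only step that genuinely uses the $n\bar p^{3}\gg\log n$ window. Alternatively, the lower bound follows directly from Lemma~\ref{lem:curv-positive}.

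\textbf{Main obstacle: the failure probability.} The degree-type events each fail with probability $O(n^{-8})$ and the Bernstein event with probability $O(n^{-6})$. However, Theorem~\ref{thm:sbm-main} as stated only guarantees $1-n^{-2}$. To reach $1-Cn^{-6}$, I would rerun its uniformization step with Lemma~\ref{lem:bip-matching} invoked at a larger exponent, say $q=30$, instead of $q=10$. This requires $p_1|U_y|\ge C(\Lambda,30)\log|U_x|$, which holds since $p_1|U_y|\asymp n\bar p^{2}\gg\log n$. Under Assumption~\ref{ass:MDT} we have $n\bar p\ge n^{1/3}$ for large $n$, so the union bound over the $O(n^2)$ pairs gives a matching failure probability of at most $2n^2(c\,n\bar p)^{-30}\le n^{-6}$. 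A union bound over the remaining events then yields the stated probability $1-Cn^{-6}$.
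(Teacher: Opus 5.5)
Your proposal is correct and follows the paper's proof. It uses Lemma~\ref{lem:degree-concen} for the degrees, matrix Bernstein for $A-P^{\pop}$, and the split of $\widehat d_{1,i}$ into $\sum_j K^{(n)}_{ij}A_{ij}$ plus a curvature-fluctuation term controlled on the uniform curvature event. Bounding $\|A\|_{\op}$ by $\max_i D_i$, and the first term via Lemma~\ref{lem:blockwise-deg}, are harmless variants of the paper's triangle-inequality and scalar-Bernstein steps.

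Your rerun of the matching argument with $q=30$ is a real repair, not a detour. The paper's proof simply invokes the event of Corollary~\ref{cor:curv-sandwich-fixed}, which holds only with probability $1-n^{-2}$. As written, it therefore does not deliver the stated $1-Cn^{-6}$ for \eqref{eq:wdeg-bnds-fixed}, whereas your bound $2n^2(c\,n\bar p)^{-30}\le n^{-6}$ does.
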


\begin{proof}
The degree bounds \eqref{eq:deg-bnds-fixed} follow from Lemma~\ref{lem:degree-concen}, since
Assumption~\ref{ass:MDT} implies $n\bar p\gg \log n$.

For \eqref{eq:A-pop-fixed}, apply matrix Bernstein conditionally on $\sigma$ to $A-P^{\pop}$.
Then \eqref{eq:A-op-fixed} follows from $\|A\|_{\op}\le \|A-P^{\pop}\|_{\op}+\|P^{\pop}\|_{\op}$ and
$\|P^{\pop}\|_{\op}\asymp n\bar p$.

For \eqref{eq:wdeg-bnds-fixed}, decompose for each $i$:
\[
\widehat d_{1,i}
=\sum_{j\ne i}A_{ij}\kappa(i,j)
=\sum_{j\ne i}A_{ij}K^{(n)}_{ij}+\sum_{j\ne i}A_{ij}\big(\kappa(i,j)-K^{(n)}_{ij}\big)
=:S_i+R_i.
\]
Conditionally on $\sigma$, $S_i$ is a sum of independent bounded variables with mean
$\E[S_i\mid\sigma]=d_1\asymp_\rho n\bar p^2$ and summands bounded by $\|K^{(n)}\|_{\max}\lesssim_\rho \bar p$,
so Bernstein and a union bound give $\max_i|S_i-d_1|\lesssim \sqrt{n\bar p^{\,3}\log n}+\bar p\log n$.
On the event of Corollary~\ref{cor:curv-sandwich-fixed}, $\|\kappa-K^{(n)}\|_{\max}\lesssim \varepsilon_n$, hence
\[
\max_i|R_i|
\le \|\kappa-K^{(n)}\|_{\max}\max_i D_i
\lesssim \varepsilon_n\,(n\bar p)
=\sqrt{n\bar p\log n},
\]
using \eqref{eq:deg-bnds-fixed}. Absorbing $\sqrt{n\bar p^{\,3}\log n}+\bar p\log n\le \sqrt{n\bar p\log n}$
(for $\bar p\le 1$ and $n\bar p\gg\log n$) yields the stated (looser) deviation bound in \eqref{eq:wdeg-bnds-fixed}.
Finally, $\sqrt{n\bar p\log n}=o(d_1)$ under Assumption~\ref{ass:MDT}, so $\min_i\widehat d_{1,i}\ge c d_1$
for all large $n$ on the same event.
\end{proof}

\subsection{Proxy decomposition and operator-norm concentration}\label{subsec:op-conc}

\begin{lemma}[Proxy decomposition]\label{lem:proxy-decomp-fixed}
Define the proxy Laplacian
\[
\widetilde L_1:=I_{2n}-d_1^{-1}\widehat W.
\]
Then
\begin{equation}\label{eq:proxy-id-fixed}
\widehat L_1-L^{\pop}_1
=(\widehat L_1-\widetilde L_1)
+d_1^{-1}\big(K^{(n)}\circ(A-P^{\pop})\big)
+d_1^{-1}\big(\widehat W-K^{(n)}\circ A\big),
\end{equation}
and the normalizer remainder satisfies
\[
\widehat L_1-\widetilde L_1
=\ -\Big(\widehat D_1^{-1/2}-d_1^{-1/2}I\Big)\widehat W\,\widehat D_1^{-1/2}
\ -\ d_1^{-1/2}\widehat W\Big(\widehat D_1^{-1/2}-d_1^{-1/2}I\Big).
\]
\end{lemma}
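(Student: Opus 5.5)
This is a pure algebraic identity, so the plan is to telescope through the intermediate matrices $\widetilde L_1$ and $I-d_1^{-1}(K^{(n)}\circ A)$, and to expand the difference of normalizations directly. No probabilistic input is needed. We only assume $\widehat D_1$ has positive diagonal, so that $\widehat L_1$ is defined; on the event of Corollary~\ref{cor:curv-sandwich-fixed} this holds.

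\emph{Step 1: the telescoping identity \eqref{eq:proxy-id-fixed}.} Write
\[
\widehat L_1-L^{\pop}_1=(\widehat L_1-\widetilde L_1)+(\widetilde L_1-L^{\pop}_1).
\]
Since $\widetilde L_1=I-d_1^{-1}\widehat W$ and $L^{\pop}_1=I-d_1^{-1}W_1^{\pop}$ with $W_1^{\pop}=K^{(n)}\circ P^{\pop}$, the second bracket equals $d_1^{-1}(K^{(n)}\circ P^{\pop}-\widehat W)$. Adding and subtracting $K^{(n)}\circ A$ inside splits this into a term in $K^{(n)}\circ(P^{\pop}-A)$ and a term in $K^{(n)}\circ A-\widehat W$. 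Here I will check the signs carefully against the displayed statement. As computed, the last two summands come out as $-d_1^{-1}K^{(n)}\circ(A-P^{\pop})$ and $-d_1^{-1}(\widehat W-K^{(n)}\circ A)$. So the identity as printed holds up to an overall sign convention on those two terms; for the subsequent norm bounds the sign is immaterial. In the write-up I would state the identity with the correct minus signs, or equivalently for $L^{\pop}_1-\widetilde L_1$. The only facts used are bilinearity of the Hadamard product and $K^{(n)}_{ii}=0=A_{ii}$, so the diagonals are consistent.

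\emph{Step 2: the normalizer remainder.} Set $X:=\widehat D_1^{-1/2}$ and $c:=d_1^{-1/2}$. Then
\[
\widehat L_1-\widetilde L_1=-(X\widehat W X-c\,\widehat W\,c).
\]
Use the two-term telescoping
\[
X\widehat W X-c\widehat W c=(X-cI)\widehat W X+c\,\widehat W(X-cI),
\]
which one checks by expanding: the right side is $X\widehat WX-c\widehat WX+c\widehat WX-c^2\widehat W$. Substituting gives exactly the displayed formula.

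There is no real obstacle beyond bookkeeping. The only delicate point is the sign reconciliation in Step 1, which I would flag explicitly, since later operator-norm bounds use only $\|\cdot\|_{\op}$ of each term.
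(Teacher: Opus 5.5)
Your proposal is correct and matches the paper's argument: add and subtract $d_1^{-1}\widehat W$, split $\widehat W-K^{(n)}\circ P^{\pop}$ through $K^{(n)}\circ A$, and expand the normalizer difference by the two-term polarization identity, which you verify correctly. Your sign flag is genuine and is missed by the paper's one-line proof: since $\widetilde L_1-L^{\pop}_1=-d_1^{-1}\big(\widehat W-K^{(n)}\circ P^{\pop}\big)$, the last two terms of \eqref{eq:proxy-id-fixed} should carry minus signs, which is harmless downstream because Lemma~\ref{lem:conc-L-fixed} only uses operator norms of the three pieces.
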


\begin{proof}
Add and subtract $d_1^{-1}\widehat W$ and expand
\[
\widehat W-W_1^{\pop}
=\widehat W-(K^{(n)}\circ P^{\pop})
=K^{(n)}\circ(A-P^{\pop})+\big(\widehat W-K^{(n)}\circ A\big).
\]
The second identity is the polarization expansion of
$\widehat D_1^{-1/2}\widehat W\widehat D_1^{-1/2}-d_1^{-1}\widehat W$.
\end{proof}

\begin{lemma}[Operator--norm concentration for $\widehat L_0$ and $\widehat L_1$]\label{lem:conc-L-fixed}
Assume Assumptions~\ref{ass:BC} and~\ref{ass:MDT}. There exist constants $C>0$ such that with probability
at least $1-Cn^{-6}$,
\begin{equation}\label{eq:delta0-fixed}
\delta_0:=\|\widehat L_0-L^{\pop}_0\|_{\op}\ \le\ C\,\varepsilon_n,
\qquad
\varepsilon_n=\sqrt{\frac{\log n}{n\bar p}},
\end{equation}
and
\begin{equation}\label{eq:delta1-fixed}
\delta_1:=\|\widehat L_1-L^{\pop}_1\|_{\op}\ \le\ C\big(\varepsilon_n+\eta_n\big),
\qquad
\eta_n:=\sqrt{\frac{\log n}{n\bar p^{\,3}}}.
\end{equation}
\end{lemma}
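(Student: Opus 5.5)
We work throughout on the intersection of the high-probability events of Lemma~\ref{lem:deg-spec-fixed} and Corollary~\ref{cor:curv-sandwich-fixed}. This intersection has probability at least $1-Cn^{-6}-n^{-2}$; the stated $1-Cn^{-6}$ would follow from sharper union bounds in Theorem~\ref{thm:sbm-main}, and we do not pursue that here.

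\emph{Bound for $\delta_0$.} We write $\widehat L_0-L_0^{\pop}=d_0^{-1}(P^{\pop}-A)+\big(d_0^{-1}A-\widehat D_0^{-1/2}A\widehat D_0^{-1/2}\big)$. By \eqref{eq:A-pop-fixed}, the first term is at most $C\sqrt{n\bar p\log n}/d_0\lesssim\varepsilon_n$. For the second term we use the same polarization identity as in Lemma~\ref{lem:proxy-decomp-fixed}, with $A$ in place of $\widehat W$ and $d_0$ in place of $d_1$. It is bounded by
\[
\|\widehat D_0^{-1/2}-d_0^{-1/2}I\|_{\op}\,\|A\|_{\op}\,\big(\|\widehat D_0^{-1/2}\|_{\op}+d_0^{-1/2}\big).
\]
Lemma~\ref{lem:inv-sqrt-diag-fixed} with $m=c\,n\bar p$, together with \eqref{eq:deg-bnds-fixed}, gives $\|\widehat D_0^{-1/2}-d_0^{-1/2}I\|_{\op}\lesssim(n\bar p)^{-3/2}\sqrt{n\bar p\log n}$. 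Combined with $\|A\|_{\op}\lesssim n\bar p$ from \eqref{eq:A-op-fixed} and $\|\widehat D_0^{-1/2}\|_{\op}\lesssim(n\bar p)^{-1/2}$, the second term is $\lesssim\varepsilon_n$.

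\emph{Bound for $\delta_1$.} We bound the three terms of \eqref{eq:proxy-id-fixed} separately, using $d_1\asymp n\bar p^2$.

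First, the normalizer remainder. Lemma~\ref{lem:inv-sqrt-diag-fixed} with $m=c\,d_1$, together with \eqref{eq:wdeg-bnds-fixed}, gives $\|\widehat D_1^{-1/2}-d_1^{-1/2}I\|_{\op}\lesssim d_1^{-3/2}\sqrt{n\bar p\log n}$. For $\widehat W$ we avoid matrix concentration and use Lemma~\ref{lem:op-1inf-fixed} directly: $\|\widehat W\|_{\op}\le\max_i\widehat d_{1,i}\lesssim d_1$. The remainder is therefore $\lesssim\sqrt{n\bar p\log n}/d_1\asymp\sqrt{\log n/(n\bar p^{3})}=\eta_n$.

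Second, the curvature error term $d_1^{-1}(\widehat W-K^{(n)}\circ A)$. This matrix is symmetric and supported on $E$, with entries bounded by $C\varepsilon_n$ by the curvature sandwich. Lemma~\ref{lem:op-1inf-fixed} bounds its operator norm by $C\varepsilon_n\max_iD_i/d_1\lesssim\varepsilon_n n\bar p/(n\bar p^2)=\eta_n$.

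Third, the masked noise $K^{(n)}\circ(A-P^{\pop})$. Since $K^{(n)}$ is block-constant with zero diagonal, we write $K^{(n)}=w_{\rm out}^{(n)}(J-I)+(w_{\rm in}^{(n)}-w_{\rm out}^{(n)})(\Pi_{\rm blk}-I)$, where $\Pi_{\rm blk}$ is the block-diagonal all-ones matrix. The Hadamard product then splits as a scalar multiple of $A-P^{\pop}$ (the diagonals vanish) plus $(w_{\rm in}^{(n)}-w_{\rm out}^{(n)})$ times the block-diagonal restriction of $A-P^{\pop}$. The restriction of a matrix to diagonal blocks does not increase its operator norm, so $\|K^{(n)}\circ(A-P^{\pop})\|_{\op}\lesssim\bar p\,\|A-P^{\pop}\|_{\op}\lesssim\bar p\sqrt{n\bar p\log n}$. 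Dividing by $d_1$ gives $\lesssim\varepsilon_n$.

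Summing the three bounds yields \eqref{eq:delta1-fixed}. The main point needing care is this masked-noise term: a crude row-sum bound would only give $O(1)$, so the block-constant structure of $K^{(n)}$ is essential. The remaining steps are routine, since the loose weighted-degree deviation $\sqrt{n\bar p\log n}$ in \eqref{eq:wdeg-bnds-fixed} is exactly what produces the $\eta_n$ rate.
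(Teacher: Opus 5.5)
Your proof is correct and follows the paper's skeleton: the same polarization split for $\delta_0$, the same proxy decomposition \eqref{eq:proxy-id-fixed} for $\delta_1$, and the same handling of the normalizer remainder and the curvature-error term (row sums via Lemma~\ref{lem:op-1inf-fixed} plus the loose weighted-degree deviation, which is where $\eta_n$ enters). The one genuine difference is the masked-noise term $K^{(n)}\circ(A-P^{\pop})$.

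The paper applies matrix Bernstein afresh to this matrix, conditionally on $\sigma$, using only $\|K^{(n)}\|_{\max}\lesssim\bar p$. This gives a variance proxy of order $n\bar p^{3}$ and a $\bar p\log n$ term to absorb. You instead write $K^{(n)}\circ X=w_{\rm out}^{(n)}X+(w_{\rm in}^{(n)}-w_{\rm out}^{(n)})\,\Pi_{\rm blk}\circ X$ for the hollow matrix $X=A-P^{\pop}$. You then use that block-diagonal pinching is an operator-norm contraction; it equals $\tfrac12(X+UXU)$ with $U=\diag(I_n,-I_n)$. This gives the deterministic bound $\|K^{(n)}\circ X\|_{\op}\le w_{\rm in}^{(n)}\|X\|_{\op}$ and reduces everything to the single concentration input \eqref{eq:A-pop-fixed}.

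Your route needs no extra probabilistic event and gives the same rate $\varepsilon_n$ with an explicit constant. The paper's route never uses the two-block structure of the mask, so it would survive for an arbitrary mask with entries $O(\bar p)$. Your reduction would not: Hadamard multiplication by such a mask need not be $O(\bar p)$-bounded in operator norm.

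Two smaller points favor you. First, your $\delta_0$ split is an exact identity. The paper's displayed three-term identity is not exact: it misses a cross term $d_0^{-1/2}(A-P^{\pop})(\widehat D_0^{-1/2}-d_0^{-1/2}I)$. This is harmless, since its norm is $O(\varepsilon_n^2)$. Second, the probability caveat you raise applies equally to the paper's proof, which also works on the event of Corollary~\ref{cor:curv-sandwich-fixed}, guaranteed only at level $1-n^{-2}$. It is repaired as you suggest, by running Lemma~\ref{lem:bip-matching} with a larger exponent. Under Assumption~\ref{ass:MDT}, $n\bar p\ge n^{2/3}$, so $q=12$ makes the matching failure bound $2n^2(c\,n\bar p)^{-q}$ of order $n^{-6}$.
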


\begin{proof}
Work on the high-probability event of Lemma~\ref{lem:deg-spec-fixed} and
Corollary~\ref{cor:curv-sandwich-fixed}.

\smallskip\noindent
\emph{Step 1: bound $\delta_0$.}
Write
\[
\widehat L_0-L_0^{\pop}
=\Big(\widehat D_0^{-1/2}-d_0^{-1/2}I\Big)A\,\widehat D_0^{-1/2}
+d_0^{-1/2}(A-P^{\pop})d_0^{-1/2}
+d_0^{-1/2}P^{\pop}\Big(\widehat D_0^{-1/2}-d_0^{-1/2}I\Big).
\]
Since $d_0\asymp n\bar p$ and $\max_i|D_i-d_0|\lesssim \sqrt{n\bar p\log n}$,
\[
\|\widehat D_0^{-1/2}-d_0^{-1/2}I\|_{\op}
=\max_i\Big|D_i^{-1/2}-d_0^{-1/2}\Big|
\lesssim d_0^{-3/2}\max_i|D_i-d_0|
\lesssim \frac{\varepsilon_n}{\sqrt{n\bar p}}.
\]
Moreover, $\|A\|_{\op}\lesssim n\bar p$ and $\|A-P^{\pop}\|_{\op}\lesssim \sqrt{n\bar p\log n}$ by
Lemma~\ref{lem:deg-spec-fixed}, and $\|P^{\pop}\|_{\op}\asymp n\bar p$.
Combining these bounds yields $\|\widehat L_0-L_0^{\pop}\|_{\op}\lesssim \varepsilon_n$.

\smallskip\noindent
\emph{Step 2: bound $\delta_1$ via Lemma~\ref{lem:proxy-decomp-fixed}.}
Use \eqref{eq:proxy-id-fixed}.

\smallskip\noindent
\emph{(i) Independent core term.}
Let $M:=K^{(n)}\circ(A-P^{\pop})$. Conditionally on $\sigma$, matrix Bernstein gives
$\|M\|_{\op}\lesssim \sqrt{n\bar p^{\,3}\log n}+\bar p\log n\lesssim \sqrt{n\bar p^{\,3}\log n}$,
using $n\bar p\gg\log n$ and $\|K^{(n)}\|_{\max}\lesssim \bar p$.
Since $d_1\asymp n\bar p^2$,
\[
\|d_1^{-1}M\|_{\op}
\lesssim \frac{\sqrt{n\bar p^{\,3}\log n}}{n\bar p^{\,2}}
=\varepsilon_n.
\]

\smallskip\noindent
\emph{(ii) Curvature fluctuation term.}
Entrywise, $\widehat W-K^{(n)}\circ A = (\kappa-K^{(n)})\circ A$. Thus by Lemma~\ref{lem:op-1inf-fixed},
\[
\|\widehat W-K^{(n)}\circ A\|_{\op}
\le \max_i \sum_j A_{ij}\,|\kappa(i,j)-K^{(n)}_{ij}|
\le \|\kappa-K^{(n)}\|_{\max}\,\max_i D_i
\lesssim \varepsilon_n\,(n\bar p)=\sqrt{n\bar p\log n}.
\]
Therefore
\[
\|d_1^{-1}(\widehat W-K^{(n)}\circ A)\|_{\op}
\lesssim \frac{\sqrt{n\bar p\log n}}{n\bar p^{\,2}}
=\eta_n.
\]

\smallskip\noindent
\emph{(iii) Degree-normalizer remainder.}
By Lemma~\ref{lem:deg-spec-fixed}, $\min_i \widehat d_{1,i}\asymp d_1$ and
$\max_i|\widehat d_{1,i}-d_1|\lesssim \sqrt{n\bar p\log n}$, so Lemma~\ref{lem:inv-sqrt-diag-fixed} yields
\[
\big\|\widehat D_1^{-1/2}-d_1^{-1/2}I\big\|_{\op}
\lesssim \frac{\sqrt{n\bar p\log n}}{d_1^{3/2}}
\asymp \frac{\sqrt{\log n}}{n\,\bar p^{5/2}}.
\]
Also $\|\widehat D_1^{-1/2}\|_{\op}\lesssim d_1^{-1/2}\asymp (n\bar p^2)^{-1/2}$ and
$\|\widehat W\|_{\op}\le \|\widehat W\|_{\infty}=\max_i \widehat d_{1,i}\lesssim n\bar p^2$.
Plugging into Lemma~\ref{lem:proxy-decomp-fixed} gives
$\|\widehat L_1-\widetilde L_1\|_{\op}\lesssim \eta_n$.

Combining (i)--(iii) yields \eqref{eq:delta1-fixed}.
\end{proof}

\subsection{Eigengap gain and misclassification bounds}\label{subsec:snr}

\begin{theorem}[One--step Ricci: gap gain and error control]\label{thm:main-gap-error-fixed}
Assume Assumptions~\ref{ass:BC} and~\ref{ass:MDT}. Let
\[
\Gamma_k:=\lambda_3(\widehat L_k)-\lambda_2(\widehat L_k),
\qquad
\Gamma_k^{\pop}:=\lambda_3(L_k^{\pop})-\lambda_2(L_k^{\pop}),
\qquad k\in\{0,1\}.
\]
There exists $C>0$ such that with probability at least $1-Cn^{-6}$:

\smallskip
\noindent\textbf{(i) Sample eigengap comparison.}
\begin{equation}\label{eq:gap-comp-fixed}
\Gamma_1-\Gamma_0
\ \ge\ (\Gamma_1^{\pop}-\Gamma_0^{\pop})\ -\ 2(\delta_0+\delta_1),
\end{equation}
and hence, by Lemmas~\ref{lem:mf-gain-fixed} and~\ref{lem:conc-L-fixed},
\begin{equation}\label{eq:gap-comp-explicit-fixed}
\Gamma_1-\Gamma_0
\ \ge\ \big(r_{\rm curv}-r\big)\ -\ C\big(\varepsilon_n+\eta_n\big)\ -\ Cn^{-1}.
\end{equation}

\smallskip
\smallskip
\noindent\textbf{(ii) Misclassification via Davis--Kahan.}
Let $\widehat v_k$ be a unit eigenvector of $\widehat L_k$ for $\lambda_2(\widehat L_k)$ (unique up to sign).
Define the centering projector and centered unit vector
\[
\Pi:=I_{2n}-\frac{1}{2n}\ones\ones^\top,
\qquad
\widetilde v_k:=\frac{\Pi \widehat v_k}{\|\Pi \widehat v_k\|_2},
\]
and let $\widehat\sigma_k$ be the sign clustering from $\widetilde v_k$.
If $\delta_k\le \tfrac14\Gamma_k^{\pop}$, then
\begin{equation}\label{eq:err-bnd-fixed}
\mathrm{err}(\widehat\sigma_k)
\ \le\ C\left(\frac{\delta_k}{\Gamma_k^{\pop}}\right)^2,
\qquad k\in\{0,1\}.
\end{equation}
\end{theorem}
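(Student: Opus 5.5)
We work throughout on the intersection of the high-probability events of Lemma~\ref{lem:deg-spec-fixed}, Corollary~\ref{cor:curv-sandwich-fixed} and Lemma~\ref{lem:conc-L-fixed}. This intersection has probability at least $1-Cn^{-6}$, since the $1-n^{-2}$ guarantee of the corollary is in fact $1-O(n^{-8})$ up to the matching term and can be sharpened by taking $q$ larger in Lemma~\ref{lem:bip-matching}. On it, $\delta_0\le C\varepsilon_n$ and $\delta_1\le C(\varepsilon_n+\eta_n)$.

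\textbf{Part (i).} This is Weyl's inequality applied twice. For symmetric $\widehat L_k$ and $L_k^{\pop}$ we have $|\lambda_j(\widehat L_k)-\lambda_j(L_k^{\pop})|\le\delta_k$ for every $j$. Hence $\Gamma_k$ lies in $[\Gamma_k^{\pop}-2\delta_k,\ \Gamma_k^{\pop}+2\delta_k]$. Taking the lower bound for $k=1$ and the upper bound for $k=0$ and subtracting gives \eqref{eq:gap-comp-fixed}. We then insert \eqref{eq:mf-gain-fixed} for $\Gamma_1^{\pop}-\Gamma_0^{\pop}$ and the bounds on $\delta_0,\delta_1$ from Lemma~\ref{lem:conc-L-fixed} to obtain \eqref{eq:gap-comp-explicit-fixed}.

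\textbf{Part (ii).} The steps are as follows.
\begin{enumerate}
\item By Lemma~\ref{lem:block-spectrum-fixed}, $\lambda_1(L_k^{\pop})=0$ is simple with eigenvector $\ones$, and $\lambda_2(L_k^{\pop})=1-r_n^{(k)}$ is simple with eigenvector $f$. The distance from $\lambda_2$ to the rest of the spectrum is $\min\{\lambda_2,\Gamma_k^{\pop}\}$. Here $\lambda_2=1-r_n^{(k)}\ge c_\rho>0$, because $r$ and $r_{\rm curv}$ are bounded away from $1$, and $\Gamma_k^{\pop}\le 1+O(1/n)$. So the population separation is $\gtrsim\Gamma_k^{\pop}$.
\item By Weyl and the hypothesis $\delta_k\le\tfrac14\Gamma_k^{\pop}$, $\lambda_2(\widehat L_k)$ is simple and stays separated from the other sample eigenvalues by at least $\tfrac12\Gamma_k^{\pop}$, up to constants. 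The Davis--Kahan $\sin\theta$ theorem, in the form of \citet{yu2015}, then gives $\sin\angle(\widehat v_k,f)\le C\delta_k/\Gamma_k^{\pop}$.
\item Pass to the centered vector. Since $f\perp\ones$, we have $\Pi f=f$ and $\|\Pi\widehat v_k-f\|\le\|\widehat v_k-f\|$ after fixing the sign. Write $\widehat v_k=\cos\theta\, f+g$ with $\|g\|=\sin\theta$. Then $\Pi\widehat v_k=\cos\theta\, f+\Pi g$, so $\tan\angle(\widetilde v_k,f)\le\|\Pi g\|/\cos\theta\le\tan\theta$. Because $\delta_k/\Gamma_k^{\pop}\le 1/4$ we have $\cos\theta$ bounded below, and therefore $\tan^2\angle(\widetilde v_k,f)\le C(\delta_k/\Gamma_k^{\pop})^2$.
\item Since $\widetilde v_k\perp\ones$, Lemma~\ref{lem:sign-error-fixed} converts this into \eqref{eq:err-bnd-fixed}.
\end{enumerate}

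The main obstacle is the constant bookkeeping in the Davis--Kahan step. We must check that the population eigenvalue gap on the other side, $\lambda_2-\lambda_1$, is not smaller than $\Gamma_k^{\pop}$; the lower bound $1-r_n^{(k)}\gtrsim_\rho 1$ from Assumption~\ref{ass:BC} supplies this. We must also ensure that $\Pi$ does not shrink the vector much, which holds because $\|\Pi\widehat v_k\|\ge\cos\theta$ is bounded below. The remaining work is the routine probability accounting across the intersected events.
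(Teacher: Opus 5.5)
Your proposal is correct and follows the paper's proof essentially step for step. Part (i) uses Weyl's inequality. Part (ii) uses Davis--Kahan with population separation $\min\{\lambda_2(L_k^{\pop}),\Gamma_k^{\pop}\}\gtrsim_\rho\Gamma_k^{\pop}$, the same $\Pi$-centering argument giving $\tan\angle(\widetilde v_k,f)\le\tan\angle(\widehat v_k,f)$, and finally Lemma~\ref{lem:sign-error-fixed}.

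Your extras are genuine improvements on points the paper glosses. You use the upper Weyl bound for $k=0$, which is needed to subtract the two gaps. You also enlarge $q$ in Lemma~\ref{lem:bip-matching} to reconcile the $1-n^{-2}$ curvature event with the claimed $1-Cn^{-6}$. The one soft spot, which the paper shares, is that $\delta_k\le\tfrac14\Gamma_k^{\pop}$ alone does not keep $\cos\theta$ away from $0$ when the Davis--Kahan constant $C(\rho)$ is at least $4$. In that regime, invoke $\delta_k\to0$ on your event, or use the trivial bound $\mathrm{err}\le\tfrac12$ whenever $C(\rho)\,\delta_k/\Gamma_k^{\pop}>\tfrac12$.
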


\begin{proof}
For (i), Weyl's inequality gives $|\lambda_j(\widehat L_k)-\lambda_j(L_k^{\pop})|\le \delta_k$ for $j=2,3$,
hence $\Gamma_k\ge \Gamma_k^{\pop}-2\delta_k$ and \eqref{eq:gap-comp-fixed} follows. Insert
\eqref{eq:mf-gain-fixed} and \eqref{eq:delta0-fixed}--\eqref{eq:delta1-fixed} to obtain \eqref{eq:gap-comp-explicit-fixed}.

For (ii), Lemma~\ref{lem:block-spectrum-fixed} implies that $f$ is an eigenvector of $L_k^{\pop}$
for $\lambda_2(L_k^{\pop})$ and that
\[
\lambda_1(L_k^{\pop})=0,\qquad
\Gamma_k^{\pop}=\lambda_3(L_k^{\pop})-\lambda_2(L_k^{\pop}).
\]
Moreover, still by Lemma~\ref{lem:block-spectrum-fixed},
\[
\lambda_2(L_k^{\pop})=1-r_n^{(k)}+O(1/n),\qquad \Gamma_k^{\pop}=r_n^{(k)}+O(1/n),
\]
and under Assumption~\ref{ass:BC} we have $r_n^{(k)}\in[c(\rho),1-c(\rho)]$ for all large $n$,
so the separation of $\lambda_2(L_k^{\pop})$ from the rest of the spectrum satisfies
\[
\sep_k^{\pop}:=\min\{\lambda_2(L_k^{\pop})-\lambda_1(L_k^{\pop}),\ \lambda_3(L_k^{\pop})-\lambda_2(L_k^{\pop})\}
=\min\{\lambda_2(L_k^{\pop}),\Gamma_k^{\pop}\}\ \ge\ c'(\rho)\,\Gamma_k^{\pop}.
\]
Therefore Davis--Kahan yields
\[
\sin\angle(\widehat v_k,f)\le C\,\frac{\delta_k}{\sep_k^{\pop}}
\le C(\rho)\,\frac{\delta_k}{\Gamma_k^{\pop}}.
\]
Assume $\delta_k\le \Gamma_k^{\pop}/4$, so $\angle(\widehat v_k,f)\le \pi/6$ for all large $n$,
and hence $\tan^2\angle(\widehat v_k,f)\lesssim (\delta_k/\Gamma_k^{\pop})^2$.

Next, set $\Pi:=I_{2n}-\frac{1}{2n}\ones\ones^\top$ and $\widetilde v_k=(\Pi\widehat v_k)/\|\Pi\widehat v_k\|_2$.
Since $\widehat v_k$ is not parallel to $\ones$ (because $\lambda_2(\widehat L_k)>0$), we have $\Pi\widehat v_k\neq 0$.
Choose the global sign of $\widehat v_k$ so that $\langle \widehat v_k,f\rangle\ge 0$ and write
$\widehat v_k=\alpha f+g$ with $g\perp f$, $\alpha=\cos\theta$, $\|g\|_2=\sin\theta$,
$\theta=\angle(\widehat v_k,f)\in[0,\pi/2]$.
Since $f\perp \ones$, we have $\Pi f=f$, and by symmetry of $\Pi$,
$\langle \Pi g,f\rangle=\langle g,\Pi f\rangle=\langle g,f\rangle=0$.
Thus
\[
\Pi\widehat v_k=\alpha f+\Pi g,\qquad \|\Pi g\|_2\le \|g\|_2=\sin\theta,
\qquad \|\Pi\widehat v_k\|_2^2=\alpha^2+\|\Pi g\|_2^2\ge \alpha^2,
\]
which implies
\[
\tan\angle(\widetilde v_k,f)=\frac{\|\Pi g\|_2}{\alpha}\le \frac{\sin\theta}{\cos\theta}
=\tan\angle(\widehat v_k,f).
\]
Finally, $\widetilde v_k\perp \ones$ by construction, so Lemma~\ref{lem:sign-error-fixed} applied to $\widetilde v_k$ gives
\[
\mathrm{err}(\widehat\sigma_k)\le \tan^2\angle(\widetilde v_k,f)\le \tan^2\angle(\widehat v_k,f)
\lesssim \left(\frac{\delta_k}{\Gamma_k^{\pop}}\right)^2,
\]
which is \eqref{eq:err-bnd-fixed}.
\end{proof}

\begin{corollary}[Strict eigengap improvement under Assumption~\ref{ass:MDT}]\label{cor:snr-gain-fixed}
Assume Assumptions~\ref{ass:BC} and~\ref{ass:MDT}.
Then for all sufficiently large $n$, with probability at least $1-Cn^{-6}$,
\[
\lambda_3(\widehat L_1)-\lambda_2(\widehat L_1)\ >\ \lambda_3(\widehat L_0)-\lambda_2(\widehat L_0).
\]
\end{corollary}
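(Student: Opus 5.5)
The plan is to read the corollary off the explicit lower bound \eqref{eq:gap-comp-explicit-fixed} in Theorem~\ref{thm:main-gap-error-fixed}(i). The deterministic gain $r_{\rm curv}-r$ is bounded below by a positive constant, and every error term vanishes under Assumption~\ref{ass:MDT}.

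\textbf{Step 1 (probability event).} Work on the event of Theorem~\ref{thm:main-gap-error-fixed}, which has probability at least $1-Cn^{-6}$. It contains the events of Lemma~\ref{lem:conc-L-fixed}, Lemma~\ref{lem:deg-spec-fixed} and Corollary~\ref{cor:curv-sandwich-fixed}. On it, $\widehat L_0$ and $\widehat L_1$ are well defined, and
\[
\Gamma_1-\Gamma_0\ \ge\ (r_{\rm curv}-r)-C(\varepsilon_n+\eta_n)-Cn^{-1}.
\]

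\textbf{Step 2 (uniform positive gain).} By Lemma~\ref{lem:mf-gain-fixed},
\[
r_{\rm curv}-r=\frac{r^2(1-r)}{1-r+r^2}.
\]
Under Assumption~\ref{ass:BC}, $r$ lies in the compact interval $[\rho/(2-\rho),(1-\rho)/(1+\rho)]\subset(0,1)$, and on that interval this continuous function is strictly positive. Hence there is a constant $g_\rho>0$, depending only on $\rho$, with $r_{\rm curv}-r\ge g_\rho$ for all $n$. This holds even though $p_0,p_1$ vary with $n$, since only $r$ enters.

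\textbf{Step 3 (error terms vanish).} Assumption~\ref{ass:MDT} gives $\eta_n=\sqrt{\log n/(n\bar p^3)}\to0$, and $\varepsilon_n=\bar p\,\eta_n\le\eta_n\to0$. Also $n^{-1}\to0$. The constant $C$ in Theorem~\ref{thm:main-gap-error-fixed} does not depend on $n$, so for all sufficiently large $n$ we have $C(\varepsilon_n+\eta_n)+Cn^{-1}<g_\rho$. Then $\Gamma_1-\Gamma_0>0$ on the event, which is exactly the claimed strict inequality.

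The only step needing care is Step 2. The constants in \eqref{eq:gap-comp-explicit-fixed} must be uniform in $n$ and depend only on $\rho$, and the lower bound on $r_{\rm curv}-r$ must be uniform rather than pointwise in $n$. Both follow from the compactness of the range of $r$ under Assumption~\ref{ass:BC} and from Lemma~\ref{lem:conc-L-fixed}, whose constants depend only on $\rho$. Beyond this bookkeeping the corollary is immediate.
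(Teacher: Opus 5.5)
Your proposal is correct and matches the paper's proof: the paper also reads the strict inequality off \eqref{eq:gap-comp-explicit-fixed}. It uses Lemma~\ref{lem:mf-gain-fixed} together with the compact range of $r$ under Assumption~\ref{ass:BC} to get $r_{\rm curv}-r\ge c(\rho)>0$, and Assumption~\ref{ass:MDT} to send $\varepsilon_n+\eta_n$ and $n^{-1}$ to zero. The one thing you and the paper both take on trust is the $1-Cn^{-6}$ level of Theorem~\ref{thm:main-gap-error-fixed}, whose proof goes through Corollary~\ref{cor:curv-sandwich-fixed}, stated only at level $1-n^{-2}$. This is harmless, because under Assumption~\ref{ass:MDT} one may take $q$ in Lemma~\ref{lem:bip-matching} as large as needed to push the matching failure probability below $Cn^{-6}$.
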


\begin{proof}
By Lemma~\ref{lem:mf-gain-fixed} and Assumption~\ref{ass:BC}, $(r_{\rm curv}-r)\ge c(\rho)>0$.
Under Assumption~\ref{ass:MDT}, $\varepsilon_n+\eta_n\to 0$, so \eqref{eq:gap-comp-explicit-fixed} implies
$\Gamma_1-\Gamma_0>0$ for all large $n$ on the stated high-probability event.
\end{proof}

\begin{corollary}[When one-step reweighting improves the Davis--Kahan misclassification bound]\label{cor:miscl-improve-fixed}
Assume Assumptions~\ref{ass:BC} and~\ref{ass:MDT}. On the event where \eqref{eq:err-bnd-fixed} holds for both
$k\in\{0,1\}$ (as in Theorem~\ref{thm:main-gap-error-fixed}), we have
\[
\frac{\mathrm{err}(\widehat\sigma_1)}{\mathrm{err}(\widehat\sigma_0)}
\ \le\
C\Big(\frac{\delta_1}{\delta_0}\Big)^2\Big(\frac{\Gamma^{\pop}_0}{\Gamma^{\pop}_1}\Big)^2.
\]
In particular, if
\begin{equation}\label{eq:suff-improve-fixed}
\frac{\delta_1}{\Gamma^{\pop}_1}\ \le\ \frac{\delta_0}{\Gamma^{\pop}_0},
\end{equation}
then $\mathrm{err}(\widehat\sigma_1)\le C\,\mathrm{err}(\widehat\sigma_0)$ on the same event.
\end{corollary}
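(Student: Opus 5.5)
The plan is to combine the two instances of \eqref{eq:err-bnd-fixed} from Theorem~\ref{thm:main-gap-error-fixed}(ii) into a ratio. On the event of Theorem~\ref{thm:main-gap-error-fixed}, both $\delta_k\le\tfrac14\Gamma_k^{\pop}$ hold for large $n$. This uses Lemma~\ref{lem:conc-L-fixed}, which gives $\delta_k\to0$ under Assumption~\ref{ass:MDT}, and Lemma~\ref{lem:mf-gain-fixed}, which gives $\Gamma_k^{\pop}\ge c(\rho)$. The numerator is then controlled by $\mathrm{err}(\widehat\sigma_1)\le C(\delta_1/\Gamma_1^{\pop})^2$.

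To bound the \emph{actual} ratio $\mathrm{err}(\widehat\sigma_1)/\mathrm{err}(\widehat\sigma_0)$, an upper bound on the numerator is not enough. We also need a matching \emph{lower} bound on the denominator,
\[
\mathrm{err}(\widehat\sigma_0)\ \ge\ c\Big(\frac{\delta_0}{\Gamma_0^{\pop}}\Big)^2,
\]
on the same event, with $\mathrm{err}(\widehat\sigma_0)>0$ so that the ratio is defined. Dividing the two displays then gives exactly $C(\delta_1/\delta_0)^2(\Gamma_0^{\pop}/\Gamma_1^{\pop})^2$ with a new constant $C/c$. The ``in particular'' clause is immediate: under \eqref{eq:suff-improve-fixed} the right-hand side is at most $C$, so $\mathrm{err}(\widehat\sigma_1)\le C\,\mathrm{err}(\widehat\sigma_0)$. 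The constants depend only on $\rho$, because $\Gamma_k^{\pop}\in[c(\rho),1]$.

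The reverse (tightness) inequality for $k=0$ is the main obstacle, and I would attack it in two steps. First, I would show $\sin\angle(\widehat v_0,f)\gtrsim \delta_0/\Gamma_0^{\pop}$ by a first-order perturbation expansion of the $\lambda_2$-eigenvector. Here $\widehat v_0-f\approx -R(\widehat L_0-L_0^{\pop})f$, where $R$ is the reduced resolvent. One would need $\|(\widehat L_0-L_0^{\pop})f\|_2\gtrsim\delta_0$, which is plausible since $A-P^{\pop}$ acts on $f$ like a generic direction. Second, I would pass from angle to misclassified mass. This requires a reverse of Lemma~\ref{lem:sign-error-fixed}: show that a constant fraction of $\|g\|_2^2$ sits on coordinates where $|g_i|\ge\alpha/\sqrt{2n}$, via delocalization or entrywise control of $g$.

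I am not confident this second step holds in this regime. Under Assumption~\ref{ass:MDT} the perturbation is spread over all coordinates with size $o(1/\sqrt n)$ per entry, so a genuine sign flip may be rare, and $\mathrm{err}(\widehat\sigma_0)$ could even be $0$. If that happens, the reverse bound fails. The only defensible reading of the displayed inequality is then with $\mathrm{err}(\widehat\sigma_k)$ replaced by its Davis--Kahan bound $C(\delta_k/\Gamma_k^{\pop})^2$, and the ratio computation above goes through verbatim. I would try the entrywise route first, using leave-one-out $\ell_\infty$ eigenvector bounds, and fall back to the bound-level statement if it fails.
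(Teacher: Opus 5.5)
Your diagnosis is correct and sharper than the paper. The paper gives no proof of this corollary. The only argument its ingredients support is the formal quotient of the two instances of \eqref{eq:err-bnd-fixed}, and two upper bounds do not bound a ratio. Under \eqref{eq:suff-improve-fixed} they only yield $\mathrm{err}(\widehat\sigma_1)\le C(\delta_1/\Gamma_1^{\pop})^2\le C(\delta_0/\Gamma_0^{\pop})^2$. That is a bound by the $k=0$ \emph{guarantee}, not by $\mathrm{err}(\widehat\sigma_0)$.

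The gap in your proposal is that the repair you plan cannot work, not merely that it is hard. So no argument will establish the displayed ratio bound as literally stated.

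\begin{itemize}
\item First, $\delta_0>0$ on every outcome. The off-diagonal entries of $L_0^{\pop}$ are $-p_0/d_0$ and $-p_1/d_0$, which are nonzero and distinct, and $\widehat L_0$ can never reproduce them. Hence your reverse inequality $\mathrm{err}(\widehat\sigma_0)\ge c(\delta_0/\Gamma_0^{\pop})^2$ forces $\mathrm{err}(\widehat\sigma_0)>0$.
\item Second, in this window $n(\sqrt{p_0}-\sqrt{p_1})^2\asymp_\rho n\bar p\gg\log n$, so you are far above the exact-recovery threshold. Heuristically, $(Af)_i$ carries signal $(n-1)p_0-np_1\asymp_\rho n\bar p$ against fluctuations $O(\sqrt{n\bar p\log n})$ by Lemma~\ref{lem:blockwise-deg}. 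The degree normalization and centering only perturb this by a relative factor $O(\varepsilon_n)$. The leave-one-out entrywise analysis you intended to try first therefore gives $\|\widetilde v_0-f\|_\infty=o(n^{-1/2})$, hence $\mathrm{err}(\widehat\sigma_0)=0$ w.h.p.
\end{itemize}

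That route would refute the reverse bound rather than prove it. The left-hand ratio is then typically $0/0$. The ``in particular'' clause would demand $\mathrm{err}(\widehat\sigma_1)=0$ whenever $\mathrm{err}(\widehat\sigma_0)=0$, which \eqref{eq:err-bnd-fixed} cannot deliver.

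The provable content is exactly your fallback. Set $\mathcal B_k:=C(\delta_k/\Gamma_k^{\pop})^2$. Then $\mathcal B_1/\mathcal B_0=(\delta_1/\delta_0)^2(\Gamma_0^{\pop}/\Gamma_1^{\pop})^2$ holds exactly, and \eqref{eq:suff-improve-fixed} gives $\mathrm{err}(\widehat\sigma_1)\le\mathcal B_1\le\mathcal B_0$. Drop the tightness attempt and present the corollary in this form, stating explicitly that it compares Davis--Kahan guarantees rather than actual misclustering rates.
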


\section{Finite horizon iterated Ricci reweighting}\label{sec:iterated-ricci}

We analyze a fixed horizon $T\in\mathbb N$ of curvature--driven edge reweighting in the balanced
two--block SBM. All Ollivier/LLY notions for weighted graphs are as in Section~\ref{sec:prelim}.
Throughout, all transport costs are computed in the \emph{unweighted} graph metric.

\paragraph{Standing assumptions and rates.}
Fix $T\in\mathbb N$. Throughout Section~\ref{sec:iterated-ricci} we assume Assumptions~\ref{ass:BC}
and~\ref{ass:MDT}. Recall $\bar p=(p_0+p_1)/2$ and define
\begin{equation}\label{eq:rates-eps-eta-iter-sec5}
\varepsilon_n := \sqrt{\frac{\log n}{n\bar p}},
\qquad
\eta_n := \frac{\varepsilon_n}{\bar p}=\sqrt{\frac{\log n}{n\bar p^3}}.
\end{equation}
Under Assumption~\ref{ass:MDT}, $\varepsilon_n=o(\bar p)$ and $\eta_n=o(1)$.

\paragraph{A finite-horizon strengthening for iterate tracking.}
To obtain a clean \emph{uniform-in-$t\le T$} iterate tracking bound with vanishing error, we will additionally
assume, when explicitly stated, the stronger condition
\begin{equation}\label{eq:MDT-T-sec5}
\tag{$\mathrm{MDT}(T)$}
n\,\bar p^{\,2T+1}\ \gg\ \log n,
\qquad\text{equivalently}\qquad
\eta_{n,T}:=\frac{\varepsilon_n}{\bar p^{\,T}}=\sqrt{\frac{\log n}{n\bar p^{\,2T+1}}}=o(1).
\end{equation}
In particular, \eqref{eq:MDT-T-sec5} implies $\varepsilon_n/\bar p^{T-1}=o(\bar p)$. Heuristically, the one-step curvature map is only Lipschitz at scale $1/\bar p$ because neighbor probabilities are
of order $1/(n\bar p)$ while weighted degrees are of order $n\bar p^2$. Thus, max-norm errors in weights are amplified
by a factor $\asymp 1/\bar p$ per iteration. Condition $(\mathrm{MDT}(T))$ ensures the amplified error remains $o(1)$
uniformly for $t\le T$.

\subsection{Iteration and a two-level benchmark recursion}\label{subsec:iter-benchmark}

Let $G\sim \mathrm{SBM}(2n,p_0,p_1,\sigma)$ on $V=[2n]$ with blocks $B_1,B_2$ of size $n$ and adjacency
matrix $A$. We iterate Ricci reweighting on the \emph{fixed} edge support:
\begin{equation}\label{eq:iterate-def-sec5}
W^{(0)} := A,
\qquad
W^{(t+1)} := \kappa_{W^{(t)}}\circ A
\quad (t\ge 0),
\end{equation}
where $(\kappa_W)_{xy}:=\kappa_W(x,y)$ for $\{x,y\}\in E$ and $0$ otherwise. Thus each $W^{(t)}$ is symmetric,
nonnegative, and supported on $E$.

\paragraph{Observed two-level benchmark.}
For $(a,b)\in(0,\infty)^2$ define
\begin{equation}\label{eq:Wstar-ab-sec5}
W^\star(a,b):=K(a,b)\circ A,
\qquad
K(a,b)_{xy}:=
\begin{cases}
a,& \sigma(x)=\sigma(y),\\
b,& \sigma(x)\neq\sigma(y),\\
0,& x=y.
\end{cases}
\end{equation}

\paragraph{Mean-field map and deterministic benchmark recursion.}
Define $\Phi_n:(0,\infty)^2\to(0,\infty)^2$ by
\begin{equation}\label{eq:Phi-def-sec5}
\Phi_n(a,b) := \bigl(\varphi_{\mathrm{in}}^{(n)}(a,b),\ \varphi_{\mathrm{out}}^{(n)}(a,b)\bigr),
\end{equation}
where
\begin{equation}\label{eq:phi-in-out-sec5}
\varphi_{\mathrm{in}}^{(n)}(a,b)
:= \frac{(n-2)p_0^2\,a + n p_1^2\,b}{(n-1)p_0\,a + n p_1\,b},
\qquad
\varphi_{\mathrm{out}}^{(n)}(a,b)
:= \frac{(2n-2)p_0p_1\,b}{(n-1)p_0\,a + n p_1\,b}.
\end{equation}
Set $w_{\mathrm{in}}^{(0)}=w_{\mathrm{out}}^{(0)}=1$ and define for $t\ge 0$
\begin{equation}\label{eq:w-recursion-sec5}
\bigl(w_{\mathrm{in}}^{(t+1)},w_{\mathrm{out}}^{(t+1)}\bigr)
:=\Phi_n\bigl(w_{\mathrm{in}}^{(t)},w_{\mathrm{out}}^{(t)}\bigr).
\end{equation}
Write $K^{(t)}:=K(w_{\mathrm{in}}^{(t)},w_{\mathrm{out}}^{(t)})$ and
\[
W^{\star,(t)} := W^\star\bigl(w_{\mathrm{in}}^{(t)},w_{\mathrm{out}}^{(t)}\bigr)=K^{(t)}\circ A.
\]

\begin{lemma}[Benchmark weights are $\asymp \bar p$ for $t\ge 1$]\label{lem:uniform-order-benchmark-sec5}
Fix $T\in\mathbb N$ and assume Assumption~\ref{ass:BC}. Then there exist constants
$0<c_{T,\rho}\le C_{T,\rho}<\infty$ such that for all $t\in\{1,\dots,T\}$,
\[
c_{T,\rho}\,\bar p \le w_{\mathrm{out}}^{(t)} \le w_{\mathrm{in}}^{(t)} \le C_{T,\rho}\,\bar p.
\]
\end{lemma}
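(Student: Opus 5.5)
Since $T$ is fixed, we only need a finite induction over $t\in\{1,\dots,T\}$, with the constants allowed to degrade at each step. The natural device is to work with the ratio $s_t:=w_{\rm out}^{(t)}/w_{\rm in}^{(t)}$ together with the absolute scale. The key structural fact is that $\Phi_n$ is homogeneous of degree $0$ in $(a,b)$. Hence $\Phi_n(a,b)$ depends only on $s=b/a$, and we can write $\Phi_n(a,b)=\bar p\,\psi_n(s)$ for an explicit pair of rational functions of $s$.

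Concretely, divide numerator and denominator by $a\,n\bar p$ and put $\pi_0:=p_0/\bar p$, $\pi_1:=p_1/\bar p$. Under Assumption~\ref{ass:BC} these lie in a compact subset of $(0,2)$ that depends only on $\rho$, and $\pi_0>\pi_1$ with $\pi_0-\pi_1\gtrsim_\rho 1$. Then
\[
\frac{\varphi^{(n)}_{\rm in}(a,b)}{\bar p}=\frac{(1-\tfrac2n)\pi_0^2+\pi_1^2 s}{(1-\tfrac1n)\pi_0+\pi_1 s},
\qquad
\frac{\varphi^{(n)}_{\rm out}(a,b)}{\bar p}=\frac{(2-\tfrac2n)\pi_0\pi_1 s}{(1-\tfrac1n)\pi_0+\pi_1 s}.
\]

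\textbf{Step 1: invariant interval for the ratio.} I would show that if $s\in[\underline s,1]$ with $\underline s>0$, then the new ratio satisfies
\[
s'=\frac{(2-\tfrac2n)\pi_0\pi_1 s}{(1-\tfrac2n)\pi_0^2+\pi_1^2 s}.
\]
The map $s\mapsto s'$ is increasing in $s$. At $s=1$ it equals $2\pi_0\pi_1/(\pi_0^2+\pi_1^2)+O(1/n)$, which is at most $1-c_\rho<1$ for large $n$; this is the computation already made in Lemma~\ref{lem:winwout-scale}, since $s_0=1$ gives $s_1=w^{(n)}_{\rm out}/w^{(n)}_{\rm in}$. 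For any $s\in(0,1]$ we have $s'\ge c'_\rho\, s$, because the denominator is at most $\pi_0^2+\pi_1^2\le 8$ and the numerator is at least a constant multiple of $s$. By monotonicity, $s_t\le 1$ for all $t$, which gives $w^{(t)}_{\rm out}\le w^{(t)}_{\rm in}$. Iterating the lower bound gives $s_t\ge (c'_\rho)^t\ge (c'_\rho)^T$ for $t\le T$.

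\textbf{Step 2: absolute scale.} Given $s\in[(c'_\rho)^T,1]$, the in-component is a ratio whose numerator and denominator are bounded above and below by constants depending on $\rho$ and $T$. The lower bounds use $\pi_0\gtrsim_\rho1$, and both are bounded above since $s\le1$. Hence $w^{(t+1)}_{\rm in}\asymp_{\rho,T}\bar p$. The out-component is then $s_{t+1}w^{(t+1)}_{\rm in}\ge (c'_\rho)^{T}\cdot c\,\bar p$. Take the minimum and maximum over the finitely many $t\le T$ of these constants. The case $t=1$ is exactly Lemma~\ref{lem:winwout-scale}, which provides the base case.

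\textbf{Main obstacle.} The one step needing care is the strict inequality $w_{\rm out}\le w_{\rm in}$ uniformly in $n$: the $O(1/n)$ corrections must not flip the comparison. Monotonicity of $s\mapsto s'$ reduces this to the single value $s=1$. At $s=1$, the separation $\pi_0^2+\pi_1^2-2\pi_0\pi_1=(\pi_0-\pi_1)^2\gtrsim_\rho 1$ dominates the $O(1/n)$ terms for $n$ large, so the comparison holds. Everything else is bookkeeping with constants that depend on $T$ through the factor $(c'_\rho)^T$.
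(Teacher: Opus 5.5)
Your proof is correct and follows essentially the paper's route: a finite induction over $t\le T$ with constants allowed to degrade in $T$, where $w_{\rm out}^{(t)}\le w_{\rm in}^{(t)}$ is preserved because the comparison reduces to the case $s=1$ (equivalently $a=b$), which is settled by $(p_0-p_1)^2\gtrsim_\rho\bar p^{\,2}$ dominating the $O(1/n)$ corrections for large $n$. Your reparametrization by the ratio $s_t$, whose update is exactly the map $f_n$ of Lemma~\ref{lem:monotone-benchmark-contrast-sec5}, makes the bookkeeping more explicit than the paper's ``$a,b\asymp\bar p$, iterate'': it shows that $w_{\rm in}^{(t)}\asymp_\rho\bar p$ uniformly in $t$, and that only the lower bound on $w_{\rm out}^{(t)}$ degrades, like $(c'_\rho)^t$.
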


\begin{proof}
For $t=1$, \eqref{eq:phi-in-out-sec5} gives $w_{\mathrm{in}}^{(1)},w_{\mathrm{out}}^{(1)}\asymp_\rho \bar p$ since
each numerator is $\Theta(n\bar p^2)$ and the denominator is $\Theta(n\bar p)$ under Assumption~\ref{ass:BC}.
If $0<b\le a$ and $a,b\asymp \bar p$, then the denominator $(n-1)p_0a+np_1b\asymp n\bar p^2$, and
$\varphi_{\mathrm{in}}^{(n)}(a,b),\varphi_{\mathrm{out}}^{(n)}(a,b)\asymp \bar p$ with
$\varphi_{\mathrm{in}}^{(n)}(a,b)\ge \varphi_{\mathrm{out}}^{(n)}(a,b)$ because
$p_0-p_1\gtrsim_\rho \bar p$ (Assumption~\ref{ass:BC}) and $a\ge b$. Iterate for $t\le T$.
\end{proof}

\begin{lemma}[Contrast recursion and monotone decrease to a fixed point]\label{lem:monotone-benchmark-contrast-sec5}
Assume Assumption~\ref{ass:BC}. Let $s_t:=w_{\mathrm{out}}^{(t)}/w_{\mathrm{in}}^{(t)}\in(0,1]$. Then
$s_0=1$ and
\[
s_{t+1}=f_n(s_t),
\qquad
f_n(s):=\frac{2(n-1)p_0p_1\,s}{(n-2)p_0^2+n p_1^2\,s}.
\]
Moreover, $f_n$ is increasing on $(0,1]$, and $f_n(s)/s$ is strictly decreasing on $(0,1]$.
Define
\[
s_\star:=\max\Big\{0,\ \frac{2(n-1)p_0p_1-(n-2)p_0^2}{n p_1^2}\Big\}\in[0,1),
\qquad
\Delta_t^{(n)} := \frac{(n-1)p_0 - n p_1 s_t}{(n-1)p_0 + n p_1 s_t}.
\]
Then $(s_t)$ is nonincreasing and converges to $s_\star$, and $t\mapsto \Delta_t^{(n)}$ is nondecreasing.
\end{lemma}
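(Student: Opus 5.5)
The plan is to reduce everything to a one-dimensional iteration by passing to the ratio $s_t$, and then use elementary monotonicity of a Möbius-type map. Throughout I abbreviate
\[
\alpha_n:=2(n-1)p_0p_1,\qquad \beta_n:=(n-2)p_0^2,\qquad \gamma_n:=np_1^2,
\]
all strictly positive, so that $f_n(s)=\alpha_n s/(\beta_n+\gamma_n s)$.

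\emph{Step 1: the ratio recursion.} Starting from $(w_{\rm in}^{(0)},w_{\rm out}^{(0)})=(1,1)$, the definition \eqref{eq:phi-in-out-sec5} shows by induction that both weights stay strictly positive. Indeed, $\Phi_n$ maps $(0,\infty)^2$ into itself. Hence $s_t$ is well defined, and $s_0=1$. In $s_{t+1}=\varphi^{(n)}_{\rm out}(a,b)/\varphi^{(n)}_{\rm in}(a,b)$ with $(a,b)=(w_{\rm in}^{(t)},w_{\rm out}^{(t)})$, the common denominator $(n-1)p_0a+np_1b$ cancels. Dividing numerator and denominator by $a$ then gives exactly $s_{t+1}=f_n(s_t)$. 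Also $s_t\le 1$ for all $t$: this follows from Lemma~\ref{lem:uniform-order-benchmark-sec5} for $t\le T$, and in general from Step 3 below.

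\emph{Step 2: shape of $f_n$.} A direct derivative computation gives $f_n'(s)=\alpha_n\beta_n/(\beta_n+\gamma_n s)^2>0$. Moreover $f_n(s)/s=\alpha_n/(\beta_n+\gamma_n s)$ is strictly decreasing in $s$. Extending continuously by $f_n(0)=0$, the fixed points of $f_n$ in $[0,\infty)$ are $0$ and, when $\alpha_n>\beta_n$, the point $(\alpha_n-\beta_n)/\gamma_n$; so the positive fixed point is precisely $s_\star$ whenever $s_\star>0$.

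Next I check that $s_\star<1$, equivalently $f_n(1)<1$, i.e.\ $\alpha_n<\beta_n+\gamma_n$. Expanding,
\[
\beta_n+\gamma_n-\alpha_n=n(p_0-p_1)^2-2p_0(p_0-p_1).
\]
Under Assumption~\ref{ass:BC}, $p_0-p_1\ge\rho p_0$, so this quantity is at least $p_0(p_0-p_1)(n\rho-2)>0$ for $n>2/\rho$. This large-$n$ requirement is the only place where the assumption genuinely enters, and I expect it to be the one delicate point. The claim $s_\star\in[0,1)$ should be read for all sufficiently large $n$, consistent with the paper's conventions.

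\emph{Step 3: monotone convergence.} Since $f_n(1)<1$, we have $s_1<s_0$. Because $f_n$ is increasing, $s_{t+1}\le s_t$ implies $s_{t+2}=f_n(s_{t+1})\le f_n(s_t)=s_{t+1}$, so $(s_t)$ is nonincreasing and stays in $(0,1]$. It therefore converges to some $s_\infty\in[0,1)$, and by continuity $s_\infty$ is a fixed point of $f_n$. I would then identify $s_\infty$ by two cases.
\begin{itemize}
\item If $\alpha_n\le\beta_n$, then $f_n(s)/s<\alpha_n/\beta_n\le 1$ for $s>0$. So $0$ is the only fixed point, and $s_\infty=0=s_\star$.
\item If $\alpha_n>\beta_n$, then $s_\star>0$. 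Monotonicity of $f_n$ and $s_0=1>s_\star$ give $s_t\ge s_\star$ for all $t$ by induction, since $s_t\ge s_\star$ implies $s_{t+1}=f_n(s_t)\ge f_n(s_\star)=s_\star$. Hence $s_\infty\ge s_\star>0$ is a positive fixed point, so $s_\infty=s_\star$.
\end{itemize}

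\emph{Step 4: the contrast $\Delta_t^{(n)}$.} The function $x\mapsto\big((n-1)p_0-x\big)/\big((n-1)p_0+x\big)$ is decreasing for $x\ge 0$. Since $s_t$ is nonincreasing, $x=np_1s_t$ is nonincreasing, and therefore $\Delta_t^{(n)}$ is nondecreasing in $t$.
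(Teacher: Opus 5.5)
Your proof is correct and follows the same route as the paper's. Both arguments use positivity of $f_n'$, strict decrease of $f_n(s)/s$, the equivalence $f_n(s)\le s \iff s\ge s_\star$, invariance of $[s_\star,1]$ to get monotone convergence, and the fact that $\Delta_t^{(n)}$ is decreasing in $s_t$. You also supply steps the paper leaves implicit: the cancellation giving $s_{t+1}=f_n(s_t)$, the case split identifying the limit, and the check that $s_\star<1$. That last check genuinely needs $n>2/\rho$; for instance it fails at $n=3$, $p_1=p_0/2$, where $s_\star=4/3$. So your large-$n$ caveat is a correct refinement rather than a gap.
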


\begin{proof}
Differentiate to get $f_n'(s)>0$. Also
\[
\frac{f_n(s)}{s}=\frac{2(n-1)p_0p_1}{(n-2)p_0^2+n p_1^2 s}
\]
is strictly decreasing in $s$. The inequality $f_n(s)\le s$ is equivalent to $s\ge s_\star$.
Since $s_0=1\ge s_\star$ and $f_n([s_\star,1])\subseteq [s_\star,1]$, we get $s_{t+1}\le s_t$ and $s_t\downarrow s_\star$.
Finally, $\Delta_t^{(n)}$ is strictly decreasing in $s_t$.
\end{proof}

\subsection{Population benchmark eigengap monotonicity}\label{subsec:benchmark-gap-sec5}

Let $P^{\pop}:=\E[A\mid\sigma]$. Define the population benchmark
\[
W^{\star,(t)}_{\pop}:=K^{(t)}\circ P^{\pop},
\qquad
L^{\star,(t)}_{\pop}:=L(W^{\star,(t)}_{\pop}),
\qquad
\Gamma_{\pop}^{(t)}:=\lambda_3(L^{\star,(t)}_{\pop})-\lambda_2(L^{\star,(t)}_{\pop}).
\]

\begin{lemma}[Population benchmark gap equals $\Delta_t^{(n)}+O(1/n)$]\label{lem:pop-gap-monotone-sec5}
Fix $T\in\mathbb N$ and assume Assumption~\ref{ass:BC}. For each $t\le T$,
\[
\Gamma_{\pop}^{(t)}=\Delta_t^{(n)}+O\!\Big(\frac{1}{n}\Big),
\]
hence $t\mapsto\Gamma_{\pop}^{(t)}$ is nondecreasing up to $O(1/n)$.
\end{lemma}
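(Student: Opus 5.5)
The plan is to apply Lemma~\ref{lem:block-spectrum-fixed} with the weights $(w_{\rm in},w_{\rm out})=(w_{\mathrm{in}}^{(t)},w_{\mathrm{out}}^{(t)})$, and then read off the gap in terms of the contrast ratio $s_t$.

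\textbf{Step 1: block form.} By construction, $W^{\star,(t)}_{\pop}=K^{(t)}\circ P^{\pop}$ has exactly the block form of Lemma~\ref{lem:block-spectrum-fixed}, with $a_t=p_0w_{\mathrm{in}}^{(t)}$ and $b_t=p_1w_{\mathrm{out}}^{(t)}$. The diagonal vanishes because $K^{(t)}_{ii}=0$. Moreover $W^{\star,(t)}_{\pop}$ is regular with degree $d_{*,t}=(n-1)a_t+nb_t$, so $L(W^{\star,(t)}_{\pop})=I-d_{*,t}^{-1}W^{\star,(t)}_{\pop}$.

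\textbf{Step 2: spectrum.} The lemma gives the full spectrum:
\begin{itemize}
\item $0$ on $\ones$;
\item $1-r_n(w_{\mathrm{in}}^{(t)},w_{\mathrm{out}}^{(t)})$ on $f$;
\item $1+a_t/d_{*,t}$ on the $2(n-1)$-dimensional within-block zero-sum space.
\end{itemize}

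\textbf{Step 3: ordering.} To conclude that $\lambda_2=1-r_n$ and $\lambda_3=1+a_t/d_{*,t}$, I need $0\le 1-r_n< 1+a_t/d_{*,t}$. This needs $r_n\in[0,1]$, which in turn requires $(n-1)a_t\ge nb_t$.

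For $t\ge1$ this follows from $w_{\mathrm{in}}^{(t)}\ge w_{\mathrm{out}}^{(t)}$ (Lemma~\ref{lem:uniform-order-benchmark-sec5}) together with $p_0(n-1)\ge np_1$ for large $n$ under Assumption~\ref{ass:BC}. For $t=0$ it is the unweighted case already covered in Lemma~\ref{lem:mf-gain-fixed}.

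\textbf{Step 4: the gap.} With the ordering in hand,
\[
\Gamma^{(t)}_{\pop}=r_n(w_{\mathrm{in}}^{(t)},w_{\mathrm{out}}^{(t)})+\frac{a_t}{d_{*,t}}.
\]
Dividing the numerator and denominator of $r_n$ by $w_{\mathrm{in}}^{(t)}>0$ turns it into exactly
\[
\frac{(n-1)p_0-np_1s_t}{(n-1)p_0+np_1s_t}=\Delta_t^{(n)}.
\]
The remainder is
\[
\frac{a_t}{d_{*,t}}=\frac{p_0}{(n-1)p_0+np_1s_t}\le\frac{1}{n-1}=O(1/n),
\]
and this holds uniformly in $t$ because $s_t\ge0$. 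So $\Gamma^{(t)}_{\pop}=\Delta_t^{(n)}+O(1/n)$ exactly, with a uniform constant.

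\textbf{Step 5: monotonicity.} The monotonicity statement then follows from Lemma~\ref{lem:monotone-benchmark-contrast-sec5}, since $t\mapsto\Delta_t^{(n)}$ is nondecreasing. Hence for $t<t'\le T$,
\[
\Gamma^{(t')}_{\pop}\ge\Gamma^{(t)}_{\pop}-\frac{2}{n-1}.
\]

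\textbf{Main obstacle.} The only delicate point is the eigenvalue ordering in Step 3, i.e.\ confirming that $f$ rather than a zero-sum direction gives $\lambda_2$. This reduces to $r_n\ge0$, which the contrast monotonicity handles: $s_t\le s_0=1$, so $\Delta_t^{(n)}\ge\Delta_0^{(n)}$. Also $\Delta_0^{(n)}>0$ for large $n$ because $p_1/p_0\le1-\rho$.
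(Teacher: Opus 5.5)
Your proposal is correct and follows the paper's proof: you apply Lemma~\ref{lem:block-spectrum-fixed} with $a_t=p_0w_{\mathrm{in}}^{(t)}$ and $b_t=p_1w_{\mathrm{out}}^{(t)}$, read off $\lambda_2=1-\Delta_t^{(n)}$ and $\lambda_3=1+a_t/d_{*,t}$, and subtract. The paper leaves implicit the eigenvalue-ordering check via $s_t\le 1$ and the uniform bound $a_t/d_{*,t}\le 1/(n-1)$, and you handle both correctly.
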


\begin{proof}
$W^{\star,(t)}_{\pop}$ has the $2\times2$ block form of Lemma~\ref{lem:block-spectrum-fixed} with
$a_t:=p_0 w_{\mathrm{in}}^{(t)}$ and $b_t:=p_1 w_{\mathrm{out}}^{(t)}$. Thus
\[
\lambda_2(L^{\star,(t)}_{\pop})=1-\frac{(n-1)a_t-nb_t}{(n-1)a_t+nb_t}=1-\Delta_t^{(n)},
\qquad
\lambda_3(L^{\star,(t)}_{\pop})=1+\frac{a_t}{(n-1)a_t+nb_t}=1+O\!\Big(\frac1n\Big).
\]
Subtract.
\end{proof}

\subsection{A uniform good event on the base graph}\label{subsec:good-event-sec5}

Write $\Gamma(x):=\{z:A_{xz}=1\}$ for the (unweighted) neighborhood. For an edge $x\sim y$, define
\[
U_x:=\Gamma(x)\setminus(\Gamma(y)\cup\{y\}),\qquad
U_y:=\Gamma(y)\setminus(\Gamma(x)\cup\{x\}).
\]
Define the exclusive neighborhoods by \emph{weight type} (relative to the endpoint):
\[
U_{\rm in}(x,y):=U_x\cap B_{\sigma(x)},\qquad
U_{\rm out}(x,y):=U_x\cap B_{\sigma(x)}^{c},
\]
\[
V_{\rm in}(x,y):=U_y\cap B_{\sigma(y)},\qquad
V_{\rm out}(x,y):=U_y\cap B_{\sigma(y)}^{c}.
\]
Define imbalances
\[
m_{\rm in}(x,y):=\big||U_{\rm in}(x,y)|-|V_{\rm in}(x,y)|\big|,\qquad
m_{\rm out}(x,y):=\big||U_{\rm out}(x,y)|-|V_{\rm out}(x,y)|\big|,\qquad
m(x,y):=m_{\rm in}(x,y)+m_{\rm out}(x,y).
\]

\begin{lemma}[Uniform good event $\mathcal E_T$]\label{lem:ET-sec5}
Fix $T\in\mathbb N$ and assume Assumptions~\ref{ass:BC} and~\ref{ass:MDT}. There exists an event $\mathcal E_T$ with
$\Pr(\mathcal E_T)\ge 1-C_{T,\rho}n^{-2}$ such that on $\mathcal E_T$:
\begin{enumerate}
\item[(i)] Lemmas~\ref{lem:blockwise-deg} and~\ref{lem:blockwise-codeg} hold uniformly over all vertices and pairs.
\item[(ii)] For every edge $x\sim y$, the bipartite graphs induced by
$U_{\rm in}(x,y)\cup V_{\rm in}(x,y)$ and by $U_{\rm out}(x,y)\cup V_{\rm out}(x,y)$ each contain a matching
saturating the smaller side.
\item[(iii)] Uniformly over all edges $x\sim y$, $m(x,y)\lesssim \sqrt{n\bar p\log n}$.
\end{enumerate}
\end{lemma}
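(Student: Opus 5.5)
We will define $\mathcal E_T:=\mathcal E_{\rm (i)}\cap\mathcal E_{\rm (ii)}\cap\mathcal E_{\rm (iii)}$ and bound each failure probability separately, then union bound. Part (i) comes directly from Lemmas~\ref{lem:blockwise-deg} and~\ref{lem:blockwise-codeg}. Assumption~\ref{ass:MDT} gives $n\bar p^{\,2}\gg\log n$, so both lemmas apply, and together they fail with probability at most $Cn^{-8}$. We intersect this with the degree and co-degree events of Lemmas~\ref{lem:degree-concen} and~\ref{lem:codegree-concen}.

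Part (iii) is deterministic on $\mathcal E_{\rm (i)}$. Write $|U_{\rm in}(x,y)|=D^{(B_{\sigma(x)})}_x-D^{(B_{\sigma(x)})}_{xy}-\Ind\{\sigma(y)=\sigma(x)\}$, and do the same for the other three sets using the appropriate blocks. For instance, $|V_{\rm in}(x,y)|=D^{(B_{\sigma(y)})}_y-D^{(B_{\sigma(y)})}_{xy}-\Ind\{\sigma(x)=\sigma(y)\}$. In both the same-block and cross-block cases, the expected blockwise degrees of $x$ and $y$ into their own block agree exactly: both equal $(n-1)p_0$. The expected blockwise co-degree into $B_{\sigma(x)}$ is $\E D^{(B_{\sigma(x)})}_{xy}$. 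If $\sigma(x)=\sigma(y)$ this is the same block as $B_{\sigma(y)}$, so the terms match. If $\sigma(x)\ne\sigma(y)$, we get $(n-1)p_0p_1$ in both cases, by symmetry of the balanced model. Hence the means of $|U_{\rm in}|$ and $|V_{\rm in}|$ coincide exactly, up to indicator corrections that are equal on both sides. The fluctuations are $O(\sqrt{n\bar p\log n})$ from the blockwise degrees and $O(\sqrt{n\bar p^2\log n})\le O(\sqrt{n\bar p\log n})$ from the co-degrees. The same bookkeeping applies to the out-type sets, which gives $m(x,y)\lesssim\sqrt{n\bar p\log n}$ uniformly.

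Part (ii) is the main obstacle, because the sets $U_{\rm in}$ and $V_{\rm in}$ are random and depend on the edges incident to $x$ and $y$. We condition on $\sigma$ and on all edges incident to $x$ or $y$; this fixes $\Gamma(x)$ and $\Gamma(y)$, hence all four sets. The edges inside $U_x\cup U_y$ avoid $x$ and $y$, so they remain independent Bernoulli variables with probability in $\{p_0,p_1\}$, and each bipartite graph stochastically dominates an i.i.d.\ $\mathrm{Ber}(p_1)$ bipartite graph. We also intersect with the event $\mathcal E'$ that the degree bounds hold at $x$ and $y$. This event is measurable with respect to the conditioning.

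The sizes are then controlled by the mean formulas already computed in the proof of Theorem~\ref{thm:sbm-main}. For instance, $\E|U_{\rm in}|=(n-1)p_0-\E D^{(B_{\sigma(x)})}_{xy}\gtrsim np_0(1-p_0)\gtrsim_\rho n\bar p$, using $p_0\le1-\rho$. All four sets therefore have size $\asymp_\rho n\bar p$, so the ratio of the larger side to the smaller side is at most $\Lambda(\rho)$, and $p_1\cdot(\text{larger side})\gtrsim n\bar p^2\gg\log n$. Lemma~\ref{lem:bip-matching} with $q=10$, applied with $L$ the smaller side, then gives a saturating matching with conditional failure probability at most $(cn\bar p)^{-10}$. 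Matchings are monotone under adding edges, so the stochastic domination transfers this bound to the actual graphs.

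Finally we average over the conditioning and union bound over at most $2n^2$ pairs and the two types. This gives failure probability $4n^2(cn\bar p)^{-10}$. Since $n\bar p\gg n^{1/3}$ under Assumption~\ref{ass:MDT}, this is $\le n^{-2}$ for large $n$ (indeed $o(n^{-4/3})$). Adding the $O(n^{-8})$ failure probabilities from parts (i) and (iii) yields $\Pr(\mathcal E_T^c)\le C_{T,\rho}n^{-2}$. The horizon $T$ enters only through the constants.
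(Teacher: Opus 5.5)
Your proposal follows the paper's proof essentially step for step. You condition on $\sigma$ and the edges at $x,y$, apply Lemma~\ref{lem:bip-matching} with $q=10$ to each type pair under $\mathrm{Ber}(p_1)$ domination, and read (iii) off blockwise degree/co-degree concentration. You are more careful than the paper in two places: you check that the means of $|U_{\rm in}|,|V_{\rm in}|$ (and of the out-type sets) coincide exactly, and you use a conditioning-measurable local event. That event should also include the blockwise co-degree bound at $(x,y)$, not just the degree bounds, since you need an upper bound on $D^{(B)}_{xy}$ to lower-bound $|U_{\rm in}|$.

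The one real slip is the final count. The bound $n\bar p\gg n^{1/3}$ only gives $4n^2(cn\bar p)^{-10}=o(n^{-4/3})$, and that does not imply $\le n^{-2}$. Use instead $n\bar p\gg n^{2/3}(\log n)^{1/3}$, which follows from Assumption~\ref{ass:MDT} and gives $o(n^{-14/3})$.
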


\begin{proof}
(i) is Lemmas~\ref{lem:blockwise-deg} and~\ref{lem:blockwise-codeg} plus a union bound; Assumption~\ref{ass:MDT} implies
$n\bar p\gg\log n$ and $n\bar p^2\gg\log n$ (Remark~\ref{rem:MDT-conseq}).

(ii) Fix an ordered edge $(x,y)$ and consider the bipartite graph between $U_{\rm in}(x,y)$ and $V_{\rm in}(x,y)$.
Conditional on $\sigma,\Gamma(x),\Gamma(y)$, edges between these sets are independent and each has success probability
at least $p_1\asymp_\rho \bar p$. On the event in (i), uniformly over $x\sim y$ we have
$|U_{\rm in}(x,y)|,|V_{\rm in}(x,y)|\asymp_\rho n\bar p$, and the aspect ratio is bounded by a constant $\Lambda(\rho)$.
Also $p_1|V_{\rm in}|\asymp_\rho n\bar p^2\gg \log n$. Lemma~\ref{lem:bip-matching} applies (take $q=10$).
The same argument applies to $(U_{\rm out},V_{\rm out})$. A union bound over $O(n^2)$ ordered edges yields
overall failure probability at most $C\,n^2\,(c\,n\bar p)^{-10}=o(1)$, and in particular $\le n^{-2}$ for all
sufficiently large $n$.

(iii) Each of $|U_{\rm in}|,|V_{\rm in}|,|U_{\rm out}|,|V_{\rm out}|$ is a signed linear combination of blockwise
degrees and blockwise co-degrees, hence concentrates to $O(\sqrt{n\bar p\log n})$ uniformly on the event in (i).
Summing gives $m(x,y)\lesssim \sqrt{n\bar p\log n}$ uniformly.
\end{proof}

\subsection{Spectral concentration for two-level reweighting}\label{subsec:twolevel-spec-sec5}

\begin{lemma}[Spectral concentration for two-level reweighted SBM]\label{lem:conc-twolevel-sec5}
Fix $T\in\mathbb N$ and assume Assumptions~\ref{ass:BC} and~\ref{ass:MDT}.
Let $(a,b)$ satisfy $0<b\le a$ and $a\asymp_{T,\rho}\bar p$, $b\asymp_{T,\rho}\bar p$.
Set
\[
W^\star(a,b)=K(a,b)\circ A,
\qquad
W^\star_{\pop}(a,b)=K(a,b)\circ P^{\pop},
\qquad
L(W):=I-\Deg(W)^{-1/2}W\Deg(W)^{-1/2}.
\]
Then there exists $C_{T,\rho}<\infty$ such that with probability at least $1-C_{T,\rho}n^{-2}$,
\[
\bigl\|L(W^\star(a,b)) - L(W^\star_{\pop}(a,b))\bigr\|_{\op}\ \le\ C_{T,\rho}\,\varepsilon_n.
\]
Consequently, for each fixed $k$,
\[
\bigl|\lambda_k(L(W^\star(a,b))) - \lambda_k(L(W^\star_{\pop}(a,b)))\bigr|
\ \le\ C_{T,\rho}\,\varepsilon_n .
\]
\end{lemma}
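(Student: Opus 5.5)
The plan is to repeat Step~1 of the proof of Lemma~\ref{lem:conc-L-fixed}, applied to the deterministically reweighted matrix $W^\star=K(a,b)\circ A$ instead of $A$. The point is that the weights $K(a,b)$ are fixed given $\sigma$, so no curvature fluctuation term (the source of $\eta_n$ in $\delta_1$) appears. Write $W^\star_{\pop}=K\circ P^{\pop}$. Its degree is constant, $d_\star:=(n-1)p_0a+np_1b\asymp_{T,\rho} n\bar p^2$, since $a,b\asymp\bar p$ and $p_0,p_1\asymp\bar p$ by Assumption~\ref{ass:BC}. Hence $L(W^\star_{\pop})=I-d_\star^{-1}W^\star_{\pop}$. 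Write $\widehat D_\star:=\Deg(W^\star)$. I would use the three-term decomposition
\[
L(W^\star)-L(W^\star_{\pop})
=-\bigl(\widehat D_\star^{-1/2}-d_\star^{-1/2}I\bigr)W^\star\widehat D_\star^{-1/2}
-d_\star^{-1/2}(W^\star-W^\star_{\pop})\widehat D_\star^{-1/2}
-d_\star^{-1}W^\star_{\pop}\bigl(\widehat D_\star^{-1/2}-d_\star^{-1/2}I\bigr)d_\star^{1/2},
\]
or equivalently the polarization form of Lemma~\ref{lem:proxy-decomp-fixed}.

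\emph{Step 1 (degrees).} For $\sigma(x)=1$, $d_x(W^\star)=a D_x^{(B_1)}+bD_x^{(B_2)}$. Lemma~\ref{lem:blockwise-deg} gives
\[
\max_x|d_x(W^\star)-d_\star|\lesssim \bar p\sqrt{n\bar p\log n}=\sqrt{n\bar p^3\log n},
\]
with probability $1-Cn^{-8}$. Consequently $\min_x d_x(W^\star)\ge d_\star/2$, because $\sqrt{n\bar p^3\log n}=o(n\bar p^2)$ under Assumption~\ref{ass:MDT}. Lemma~\ref{lem:inv-sqrt-diag-fixed} then gives
\[
\|\widehat D_\star^{-1/2}-d_\star^{-1/2}I\|_{\op}\lesssim d_\star^{-3/2}\sqrt{n\bar p^3\log n},
\]
and therefore $d_\star^{1/2}\|\widehat D_\star^{-1/2}-d_\star^{-1/2}I\|_{\op}\lesssim \sqrt{n\bar p^3\log n}/(n\bar p^2)=\varepsilon_n$.

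\emph{Step 2 (operator norms).} We have $\|W^\star\|_{\op}\le\max_x d_x(W^\star)\lesssim n\bar p^2$ by Lemma~\ref{lem:op-1inf-fixed}, and likewise $\|W^\star_{\pop}\|_{\op}=d_\star$. For the centered term, $W^\star-W^\star_{\pop}=K\circ(A-P^{\pop})$ is a sum of independent symmetric rank-two pieces with entries bounded by $\max(a,b)\lesssim\bar p$ and variance proxy $\lesssim \bar p^2\cdot n\bar p$. Matrix Bernstein, conditional on $\sigma$ with failure probability $n^{-2}$, gives $\|K\circ(A-P^{\pop})\|_{\op}\lesssim\sqrt{n\bar p^3\log n}+\bar p\log n\lesssim\sqrt{n\bar p^3\log n}$, where the last step uses $n\bar p\gg\log n$. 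This is the same bound as part (i) of Step~2 of Lemma~\ref{lem:conc-L-fixed}. Dividing by $d_\star$ yields $\lesssim\varepsilon_n$.

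\emph{Step 3 (assemble).} Each of the three terms is a product of the bounds above and is $\lesssim\varepsilon_n$. For example, the first term is at most $\varepsilon_n d_\star^{-1/2}\cdot n\bar p^2\cdot d_\star^{-1/2}\lesssim\varepsilon_n$. A union bound over the events from Steps~1 and~2 gives probability at least $1-C_{T,\rho}n^{-2}$. The eigenvalue statement then follows from Weyl's inequality.

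The only delicate point is the matrix Bernstein step. The constants must depend only on $(T,\rho)$, which holds because the comparability $a,b\asymp_{T,\rho}\bar p$ controls both the entry bound and the variance. The $\bar p\log n$ term must also be absorbed, which uses $n\bar p\gg\log n$. If a uniform-in-$(a,b)$ version is later needed, the fluctuation $K\circ(A-P^{\pop})$ splits as $a(A-P^{\pop})_{\rm in}+b(A-P^{\pop})_{\rm out}$. It then suffices to control the two random matrices $(A-P^{\pop})_{\rm in}$ and $(A-P^{\pop})_{\rm out}$ once, each by $\sqrt{n\bar p\log n}$, and this removes any dependence on $(a,b)$ beyond their size.
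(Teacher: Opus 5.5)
Your proposal is correct and follows essentially the same route as the paper. Both arguments use matrix Bernstein for $K(a,b)\circ(A-P^{\pop})$ at scale $\sqrt{n\bar p^{3}\log n}$, weighted-degree concentration at the same scale, and an add-and-subtract normalized-Laplacian perturbation divided by $d_\star\asymp n\bar p^{2}$, followed by Weyl. The only differences are cosmetic: you write out the three-term decomposition using the constant population degree instead of citing Lemma~\ref{lem:normlap-perturb}, and you obtain the degree bound from Lemma~\ref{lem:blockwise-deg} where the paper simply asserts it.
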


\begin{proof}
Condition on $\sigma$ and set $\Delta W:=K(a,b)\circ(A-P^{\pop})$.
Matrix Bernstein gives $\|\Delta W\|_{\op}\lesssim \sqrt{n\bar p^3\log n}$ w.h.p. since $\|K(a,b)\|_{\max}\asymp \bar p$.
Also, $d_x(W^\star(a,b))\asymp n\bar p^2$ uniformly and
$\|d(W^\star(a,b))-d(W^\star_{\pop}(a,b))\|_\infty\lesssim \sqrt{n\bar p^3\log n}$ w.h.p.
Apply Lemma~\ref{lem:normlap-perturb} with $m\asymp n\bar p^2$ and note
$\|W^\star(a,b)\|_{\op},\|W^\star_{\pop}(a,b)\|_{\op}\lesssim n\bar p^2$.
This yields
\[
\|L(W^\star(a,b))-L(W^\star_{\pop}(a,b))\|_{\op}
\ \lesssim\ \frac{\sqrt{n\bar p^3\log n}}{n\bar p^2}
=\varepsilon_n.
\]
Eigenvalue control follows from Weyl.
\end{proof}

\subsection{Curvature of two-level benchmarks and concentration}\label{subsec:overlap-sec5}

For a weighted adjacency $W$ supported on $A$, write $d_x(W):=\sum_{z\sim x}W_{xz}$ and
$p_{x,W}(z):=W_{xz}/d_x(W)$ for $z\sim x$.
For an edge $x\sim y$ define the overlap functional
\begin{equation}\label{eq:overlap-sec5}
\operatorname{Ov}_{xy}(W)
:= p_{x,W}(y) + p_{y,W}(x) + \sum_{z\in \Gamma(x)\cap \Gamma(y)} \min\{p_{x,W}(z),p_{y,W}(z)\},
\end{equation}
and set
\[
p_{\max}(W;x,y) := \max\Bigl\{\max_{u\sim x}p_{x,W}(u),\ \max_{v\sim y}p_{y,W}(v)\Bigr\}.
\]
Also define the local degree mismatch
\[
\Delta_{\deg}(W;x,y):=\frac{|d_x(W)-d_y(W)|}{\min\{d_x(W),d_y(W)\}}.
\]

\begin{lemma}[Deterministic overlap upper bound]\label{lem:overlap-upper-sec5}
For any weighted $W$ supported on $A$ and any edge $x\sim y$,
\[
\kappa_W(x,y)\ \le\ \operatorname{Ov}_{xy}(W).
\]
\end{lemma}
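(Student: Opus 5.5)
The plan is to bound the Wasserstein distance from below by the total variation distance at each fixed laziness level $\alpha$, compute that overlap exactly, and then pass to the limit $\alpha\uparrow 1$. This mirrors the proof of Lemma~\ref{lem:curv-upper}, with a different test quantity. Fix an edge $x\sim y$ and $\alpha\in[1/2,1)$, and write $\beta:=1-\alpha$. The measures are $m_{x,W}^\alpha=\alpha\delta_x+\beta\sum_{u\sim x}p_{x,W}(u)\delta_u$ and $m_{y,W}^\alpha$, defined analogously.

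\emph{Step 1 (TV lower bound).} Let $\pi$ be any coupling of $(m_{x,W}^\alpha,m_{y,W}^\alpha)$. Every off-diagonal pair has $d(u,v)\ge 1$, so
\[
\sum_{u,v}\pi(u,v)d(u,v)\ \ge\ \sum_{u\ne v}\pi(u,v)=1-\sum_v\pi(v,v).
\]
Since $\pi(v,v)\le\min\{m_{x,W}^\alpha(v),m_{y,W}^\alpha(v)\}$, taking the infimum over $\pi$ gives
\[
W_1(m_{x,W}^\alpha,m_{y,W}^\alpha)\ \ge\ 1-\sum_{v\in V}\min\{m_{x,W}^\alpha(v),m_{y,W}^\alpha(v)\}.
\]
Equivalently, $W_1\ge\|m_{x,W}^\alpha-m_{y,W}^\alpha\|_{\mathrm{TV}}$.

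\emph{Step 2 (computing the overlap).} The support of $m_{x,W}^\alpha$ lies in $\{x\}\cup\Gamma(x)$, and that of $m_{y,W}^\alpha$ lies in $\{y\}\cup\Gamma(y)$. Since $x\sim y$, their intersection is contained in $\{x,y\}\cup\Gamma_{xy}$. I will evaluate the minimum on each piece.
\begin{itemize}
\item At $v=x$: the minimum is $\min\{\alpha,\beta p_{y,W}(x)\}=\beta p_{y,W}(x)$, because $\beta p_{y,W}(x)\le\beta\le\alpha$ when $\alpha\ge 1/2$.
\item At $v=y$: symmetrically, the minimum is $\beta p_{x,W}(y)$.
\item At $z\in\Gamma_{xy}$: the minimum is $\beta\min\{p_{x,W}(z),p_{y,W}(z)\}$.
\item At every other vertex: the minimum vanishes, since at least one of the two measures is zero there.
\end{itemize}
Summing these contributions gives $\sum_v\min\{\cdot\}=\beta\operatorname{Ov}_{xy}(W)$. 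Combining with Step 1,
\[
\kappa_{\alpha,W}(x,y)=1-W_1(m_{x,W}^\alpha,m_{y,W}^\alpha)\le\beta\operatorname{Ov}_{xy}(W).
\]

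\emph{Step 3 (limit).} Divide by $\beta=1-\alpha$ to get $\kappa_{\alpha,W}(x,y)/(1-\alpha)\le\operatorname{Ov}_{xy}(W)$ for every $\alpha\in[1/2,1)$. Letting $\alpha\uparrow1$ in the definition of $\kappa_W$ yields the claim.

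The only delicate point is the existence of the LLY limit for weighted neighbor measures. If one does not want to rely on it, the argument still bounds $\limsup_{\alpha\uparrow1}\kappa_{\alpha,W}/(1-\alpha)$. That suffices whenever $\kappa_W$ is interpreted as the limit, or as the $\limsup$, of these ratios.

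The main thing to verify carefully is that the minimum at the atoms $x$ and $y$ is attained by the $\beta$-term. This is exactly why $\alpha\ge 1/2$ is imposed, and it is harmless as $\alpha\to1$. The rest of the argument is bookkeeping over supports.
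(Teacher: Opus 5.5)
Your proposal is correct and follows essentially the same route as the paper's proof. Both arguments lower-bound $W_1$ by one minus the diagonal mass of a coupling, identify the maximal diagonal mass as $\sum_v \min\{m^\alpha_{x,W}(v),m^\alpha_{y,W}(v)\}=\beta\,\operatorname{Ov}_{xy}(W)$ using $\alpha\ge 1/2$ at the atoms $x,y$, then divide by $1-\alpha$ and let $\alpha\uparrow 1$; your remark that the bound still controls the $\limsup$ without assuming the weighted limit exists is a harmless refinement the paper leaves implicit.
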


\begin{proof}
Fix $\alpha\in[1/2,1)$ and write $m^\alpha_{x,W}:=\alpha\delta_x+(1-\alpha)\sum_{z\sim x}p_{x,W}(z)\delta_z$.
For any coupling $\pi$ of $(m^\alpha_{x,W},m^\alpha_{y,W})$, since $d(u,v)\ge 1$ for $u\neq v$,
\[
\sum_{u,v}\pi(u,v)d(u,v)\ \ge\ 1-\sum_u\pi(u,u).
\]
Maximizing diagonal mass over couplings yields $\sum_u \min\{m^\alpha_{x,W}(u),m^\alpha_{y,W}(u)\}$.
For $\alpha\ge 1/2$ we compute this overlap explicitly. Write $\beta:=1-\alpha$.
At $x$, we have $m^\alpha_{x,W}(x)=\alpha$ and $m^\alpha_{y,W}(x)=\beta\,p_{y,W}(x)\le\beta\le\alpha$, hence
$\min\{m^\alpha_{x,W}(x),m^\alpha_{y,W}(x)\}=\beta\,p_{y,W}(x)$. Similarly,
$\min\{m^\alpha_{x,W}(y),m^\alpha_{y,W}(y)\}=\beta\,p_{x,W}(y)$.
For $z\in\Gamma(x)\cap\Gamma(y)$, $m^\alpha_{x,W}(z)=\beta p_{x,W}(z)$ and $m^\alpha_{y,W}(z)=\beta p_{y,W}(z)$, so
the contribution is $\beta\min\{p_{x,W}(z),p_{y,W}(z)\}$, and all other vertices contribute $0$.
Summing yields $\sum_u \min\{m^\alpha_{x,W}(u),m^\alpha_{y,W}(u)\}=\beta\,\operatorname{Ov}_{xy}(W)$.
Thus $\kappa_{\alpha,W}(x,y)=1-W_1(m^\alpha_{x,W},m^\alpha_{y,W})\le (1-\alpha)\operatorname{Ov}_{xy}(W)$.
Divide by $(1-\alpha)$ and let $\alpha\uparrow 1$.
\end{proof}

\begin{lemma}[Overlap lower bound from type matchings (two-level case)]\label{lem:overlap-lower-from-matchings-sec5}
Assume $W$ is of the \emph{two-level} form $W=W^\star(a,b)=K(a,b)\circ A$.
Let $x\sim y$ and let $U_{\rm in},U_{\rm out},V_{\rm in},V_{\rm out}$ be as in Section~\ref{subsec:good-event-sec5},
with $m(x,y)$ defined there. Assume that the bipartite graphs induced by $U_{\rm in}\cup V_{\rm in}$ and
$U_{\rm out}\cup V_{\rm out}$ contain matchings saturating the smaller side.
Then for every $\alpha\in[1/2,1)$,
\[
\kappa_{\alpha,W}(x,y)
\ \ge\ (1-\alpha)\operatorname{Ov}_{xy}(W)\ -\ 3(1-\alpha)\Big(p_{\max}(W;x,y)\,m(x,y)\ +\ 2\,\Delta_{\deg}(W;x,y)\Big),
\]
and hence, letting $\alpha\uparrow 1$,
\[
\kappa_W(x,y)\ \ge\ \operatorname{Ov}_{xy}(W)\ -\ 3\,p_{\max}(W;x,y)\,m(x,y)\ -\ 6\,\Delta_{\deg}(W;x,y).
\]
\end{lemma}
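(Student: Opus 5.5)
We prove the lower bound by exhibiting an explicit coupling $\pi$ of $(m^\alpha_{x,W},m^\alpha_{y,W})$ whose cost is at most $1-\beta\,\mathrm{Ov}_{xy}(W)+3\beta\big(p_{\max}m+2\Delta_{\deg}\big)$, with $\beta:=1-\alpha$. The construction mimics Lemma~\ref{lem:coupling}, with two changes. First, diagonal transport is done with weight-aware masses $\min\{\cdot,\cdot\}$ instead of the uniform $\beta/D_y$. Second, the exclusive neighborhoods are matched \emph{within weight type}, so that matched pairs carry comparable masses.

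\textbf{Step 1 (overlap part, cost $0$ or $1$).*} Set $\pi(x,x):=\beta p_{y,W}(x)$ and $\pi(y,y):=\beta p_{x,W}(y)$. For each $z\in\Gamma_{xy}$ set $\pi(z,z):=\beta\min\{p_{x,W}(z),p_{y,W}(z)\}$. This places total diagonal mass exactly $\beta\,\mathrm{Ov}_{xy}(W)$, at zero cost. The remaining mass at $x$ is $\alpha-\beta p_{y,W}(x)\ge0$ because $\alpha\ge1/2$. We send it to the residual mass at $y$ at cost $1$; this is the dominant term, of size at most $1-\beta(\ldots)$. All other leftover source and target masses total $O(\beta)$, and we route them in Steps 2 and 3.

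\textbf{Step 2 (matched exclusive neighbors).} The key structural fact for $W=K(a,b)\circ A$ is that $p_{x,W}(u)=a/d_x(W)$ for every $u\in U_{\rm in}$ and $b/d_x(W)$ for every $u\in U_{\rm out}$. Likewise $p_{y,W}(v)=a/d_y(W)$ on $V_{\rm in}$ and $b/d_y(W)$ on $V_{\rm out}$. (Here "in" for $x$ and for $y$ refer to the same weight level $a$, since the type is defined relative to each endpoint.)

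Using the saturating matchings $\phi_{\rm in},\phi_{\rm out}$, we transport $\beta\min\{p_{x,W}(u),p_{y,W}(\phi(u))\}$ along each matched pair $u\to\phi(u)$ at cost $1$. This step is the main accounting point. The exclusive mass of $x$ (and of $y$) equals $\beta(1-\mathrm{Ov}\text{-contribution})$ up to the asymmetry, so the exclusive transport cost must contribute at most $\beta\cdot(\text{exclusive mass})$. Together with Step 1, the total cost is then $\alpha-\beta p_{y,W}(x)+\beta\cdot(\text{exclusive})$, which has to be rearranged into $1-\beta\,\mathrm{Ov}$ using that $x$'s total mass is $1$. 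We do this carefully via duality-free bookkeeping: we write the cost as $1-(\text{diagonal mass})+(\text{extra cost beyond }1\text{ on off-diagonal pairs})$. Every pair at distance $1$ contributes no extra cost, so only mass sent at distance $\le3$ contributes extra, at most $2$ per unit.

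\textbf{Step 3 (leftover mass at distance $\le3$).} The mass not yet assigned has three sources:
\begin{itemize}
\item[(a)] the $m_{\rm in}+m_{\rm out}$ unmatched vertices on the larger side of each type, each carrying mass at most $\beta p_{\max}$;
\item[(b)] the per-vertex discrepancy $\beta\,|p_{x,W}(u)-p_{y,W}(\phi(u))|=\beta\,a\,|1/d_x-1/d_y|$ (or with $b$), whose sum over the matching is at most $\beta\Delta_{\deg}$ times the total exclusive mass, hence $\le\beta\Delta_{\deg}$;
\item[(c)] the discrepancies $|p_{x,W}(z)-p_{y,W}(z)|$ on common neighbors, bounded similarly by $\beta\Delta_{\deg}$, together with the corresponding imbalance of the $x$–$y$ atoms.
\end{itemize}
Any residual coupling of these leftovers has distance $\le3$ via paths $s\!-\!x\!-\!y\!-\!t$, as in Lemma~\ref{lem:coupling}. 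The marginal constraints are met by filling with a product coupling of the normalized leftovers. This gives extra cost $\le3\beta\big(p_{\max}m+2\Delta_{\deg}\big)$: the factor $2$ accounts for both the exclusive and the common-neighbor discrepancies.

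\textbf{Conclusion and main obstacle.} Since $\kappa_{\alpha,W}=1-W_1\ge1-\mathrm{cost}(\pi)$, the $\alpha$-bound follows. Dividing by $\beta$ and letting $\alpha\uparrow1$ gives the LLY statement. The main obstacle is the mass bookkeeping in Steps 2–3: one must check that every leftover atom is charged to either $p_{\max}m$ or $\Delta_{\deg}$, and that the row and column sums close exactly. I would verify this by computing total leftover source mass and total leftover target mass separately and showing that they are equal and both bounded by $\beta(p_{\max}m+2\Delta_{\deg})$.
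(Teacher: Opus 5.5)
\textbf{The gap is in item (c) of your Step 3.} Your skeleton matches the paper's: diagonal mass $\beta\operatorname{Ov}_{xy}(W)$, within-type matchings of exclusive neighbours, and the identity $\mathrm{cost}(\pi)=1-(\text{diagonal mass})+\sum_{u\neq v}\pi(u,v)\,(d(u,v)-1)$. Items (a) and (b) are fine; (b) even gives $\beta\Delta_{\deg}$ for both types together.

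Take a cross-block edge, $\sigma(x)\neq\sigma(y)$. A common neighbour $z\in\Gamma_{xy}\cap B_{\sigma(x)}$ has $p_{x,W}(z)=a/d_x(W)$ but $p_{y,W}(z)=b/d_y(W)$, and symmetrically on $\Gamma_{xy}\cap B_{\sigma(y)}$. So the common-neighbour discrepancy is $\beta\,|a/d_x(W)-b/d_y(W)|$ per vertex, which is not controlled by $\Delta_{\deg}$.

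\begin{itemize}
\item Take $d_x(W)=d_y(W)$ and $m(x,y)=0$, so the right-hand side carries no error terms. The discrepancy still totals $\beta(a-b)\,|\Gamma_{xy}\cap B_{\sigma(x)}|/d_x(W)>0$ whenever $a>b$.
\item Where the lemma is applied ($t\ge 1$, $a_t-b_t\asymp\bar p$), this is of order $\beta\bar p$. That is the same order as $\beta\operatorname{Ov}_{xy}(W)$ itself, and far larger than $\beta(p_{\max}m+\Delta_{\deg})\lesssim\beta\varepsilon_n$.
\item Your Step 1 already sends all of $x$'s residual atom to $y$. So this surplus and deficit can only be paired with each other or with exclusive leftovers, typically at distance $2$ or $3$.
\end{itemize}
Your product-coupling step therefore adds extra cost of that order, and the claimed inequality does not follow. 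Your argument is fine for within-block edges, or when $a=b$.

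\textbf{The missing idea: move common-neighbour imbalances at unit cost through the hub atoms.}
\begin{itemize}
\item Send the source surplus at each common neighbour directly to $y$.
\item Fill each target deficit at a common neighbour directly from $x$. Both moves use edges, so they cost $1$.
\item Absorb the small $x$/$y$ atom mismatch in the $x\to y$ flow.
\item Route exclusive leftovers as $u\to y$ for $u\in U_x$ and $x\to v$ for $v\in U_y$, at distance $2$.
\end{itemize}
Because $\alpha\ge 1/2$, the hub atoms have room. Let $T_c,T_e$ be the target deficits at common and exclusive neighbours. The mass sent $x\to y$ is $\alpha-\beta p_{y,W}(x)-T_c-T_e\ge\alpha-\beta\ge 0$, and the column sum at $y$ then closes by mass conservation.

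Only (a) and (b) now incur extra cost, which gives
\[
W_1(m^\alpha_{x,W},m^\alpha_{y,W})\le 1-\beta\operatorname{Ov}_{xy}(W)+\beta\bigl(p_{\max}m+2\Delta_{\deg}\bigr).
\]
This is stronger than the lemma. It is also how the paper's accounting should be read: its leftover $\ell$ contains only exclusive-neighbour mass. The factor $2\Delta_{\deg}$ comes from the two weight types (in and out), each contributing at most $\beta\Delta_{\deg}$, not from common neighbours as you suggest.
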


\begin{proof}
Fix $\alpha\in[1/2,1)$ and write $\beta:=1-\alpha$. We construct a coupling of
$m^\alpha_{x,W}:=\alpha\delta_x+\beta\sum_{z\sim x}p_{x,W}(z)\delta_z$ and
$m^\alpha_{y,W}:=\alpha\delta_y+\beta\sum_{z\sim y}p_{y,W}(z)\delta_z$.

\smallskip\noindent
\emph{Step 1 (diagonal mass equals overlap).}
Set $\pi(z,z):=\min\{m^\alpha_{x,W}(z),m^\alpha_{y,W}(z)\}$ for all $z$.
For $\alpha\ge 1/2$, the minima at $x$ and $y$ equal $\beta p_{y,W}(x)$ and $\beta p_{x,W}(y)$, and at common neighbors
equal $\beta\min\{p_{x,W}(z),p_{y,W}(z)\}$. Thus the total diagonal mass equals $\beta\,\operatorname{Ov}_{xy}(W)$.

\smallskip\noindent
\emph{Step 2 (match exclusive types; bound total leftover mass).}
Because $W$ is two-level, neighbor probabilities are constant on each type set:
on $U_{\rm in}$, $p_{x,W}(\cdot)=a/d_x(W)$; on $U_{\rm out}$, $p_{x,W}(\cdot)=b/d_x(W)$; and similarly for $y$.

Consider $(U_{\rm in},V_{\rm in})$.
Let $\phi_{\rm in}$ be an injection from the smaller side into the larger along matched edges.
For each matched pair $(u,\phi_{\rm in}(u))$, couple an amount
$\beta\,\min\{a/d_x(W),a/d_y(W)\}$ from the source vertex to its partner (cost $1$ since matched pairs are edges).
Unmatched vertices on the larger side contribute leftover mass at most $\beta\,p_{\max}(W;x,y)\,m_{\rm in}(x,y)$.

Additionally, even among matched vertices, a normalization mismatch $d_x(W)\neq d_y(W)$ causes leftover mass.
Its \emph{total} contribution is bounded by
\[
\beta\cdot \min\{|U_{\rm in}|,|V_{\rm in}|\}\cdot \Big|\frac{a}{d_x(W)}-\frac{a}{d_y(W)}\Big|
\ \le\ \beta\,\frac{|d_x(W)-d_y(W)|}{\min\{d_x(W),d_y(W)\}}
=\beta\,\Delta_{\deg}(W;x,y),
\]
since $\min\{|U_{\rm in}|,|V_{\rm in}|\}\le \min\{d_x(W),d_y(W)\}/a$ (because $a|U_{\rm in}|\le d_x(W)$ and $a|V_{\rm in}|\le d_y(W)$).
Perform the same construction on $(U_{\rm out},V_{\rm out})$ to obtain leftover mass at most
$\beta\,p_{\max}(W;x,y)m_{\rm out}(x,y)+\beta\,\Delta_{\deg}(W;x,y)$.

Therefore the \emph{total} leftover mass after Step 2 is at most
\[
\ell\ \le\ \beta\,p_{\max}(W;x,y)\,m(x,y)\ +\ 2\beta\,\Delta_{\deg}(W;x,y).
\]

\smallskip\noindent
\emph{Step 3 (route leftover at cost $\le 3$).}
All leftover mass is supported on $\Gamma(x)\cup\Gamma(y)$. Given any $s\in\Gamma(x)$ and $t\in\Gamma(y)$,
$d(s,t)\le 3$ via $s-x-y-t$. Hence we can complete the coupling by routing all leftover mass at per-unit cost at most $3$.

\smallskip\noindent
Step 4 (conclude). The mass placed on the diagonal equals $\beta\,\operatorname{Ov}_{xy}(W)$.
Among the remaining mass, all transport performed along matched edges costs $1$ per unit, while the leftover mass $\ell$
is routed at per-unit cost at most $3$. Therefore
\[
W_1(m^\alpha_{x,W},m^\alpha_{y,W})
\le (1-\beta\,\operatorname{Ov}_{xy}(W)-\ell)\cdot 1 + 3\ell
= 1-\beta\,\operatorname{Ov}_{xy}(W)+2\ell
\le 1-\beta\,\operatorname{Ov}_{xy}(W)+3\ell,
\]
Thus
\[
\kappa_{\alpha,W}(x,y)=1-W_1(m^\alpha_{x,W},m^\alpha_{y,W})
\ \ge\ \beta\,\operatorname{Ov}_{xy}(W)-3\ell,
\]
and substituting the bound on $\ell$ yields the $\alpha$-lazy inequality. Divide by $\beta$ and let $\alpha\uparrow 1$.
\end{proof}

\begin{lemma}[Kernel maximum for two-level benchmarks]\label{lem:pmax-benchmark-sec5}
Fix $T\in\mathbb N$ and assume $c_{T,\rho}\bar p\le b\le a\le C_{T,\rho}\bar p$.
On $\mathcal E_T$, for every edge $x\sim y$,
\[
p_{\max}\!\bigl(W^\star(a,b);x,y\bigr)\ \lesssim_{T,\rho}\ \frac{1}{n\bar p}.
\]
\end{lemma}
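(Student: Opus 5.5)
The plan is to bound the numerator and the denominator of $p_{x,W}(u)=W_{xu}/d_x(W)$ separately, uniformly over all vertices. Write $W=W^\star(a,b)=K(a,b)\circ A$.

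\emph{Numerator.} By construction, for every $u\sim x$ we have $W_{xu}\in\{a,b\}$. Since $b\le a$,
\[
\max_{u\sim x}W_{xu}\le a\le C_{T,\rho}\,\bar p .
\]
This step is deterministic.

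\emph{Denominator.} We need a uniform lower bound on the weighted degree. Split the neighbors of $x$ by block:
\[
d_x(W)=a\,D_x^{(B_{\sigma(x)})}+b\,D_x^{(B_{\sigma(x)}^c)}\ \ge\ b\,D_x\ \ge\ c_{T,\rho}\,\bar p\,D_x .
\]
On $\mathcal E_T$, part (i) of Lemma~\ref{lem:ET-sec5} gives blockwise degree concentration (Lemma~\ref{lem:blockwise-deg}), uniformly over all vertices:
\[
D_x^{(B_{\sigma(x)})}=(n-1)p_0\pm C\sqrt{n\bar p\log n},
\qquad
D_x^{(B_{\sigma(x)}^c)}=np_1\pm C\sqrt{n\bar p\log n}.
\]
Under Assumption~\ref{ass:MDT} we have $n\bar p\gg\log n$, so $\sqrt{n\bar p\log n}=o(n\bar p)$. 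Assumption~\ref{ass:BC} gives $p_0,p_1\asymp_\rho\bar p$. Together these yield, for all large $n$ and every $x$ simultaneously,
\[
D_x\ \ge\ \tfrac12\big((n-1)p_0+np_1\big)\ \gtrsim_\rho\ n\bar p .
\]
Hence $\min_x d_x(W)\gtrsim_{T,\rho} n\bar p^{\,2}$.

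\emph{Combining.} For every vertex $x$ and every $u\sim x$,
\[
p_{x,W}(u)=\frac{W_{xu}}{d_x(W)}\le\frac{C_{T,\rho}\,\bar p}{c'_{T,\rho}\,n\bar p^{\,2}}\lesssim_{T,\rho}\frac{1}{n\bar p}.
\]
The same bound holds with $y$ in place of $x$. Taking the maximum over both endpoints gives the claim for $p_{\max}(W^\star(a,b);x,y)$, uniformly over edges.

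There is no genuine obstacle. The only point needing care is that the degree lower bound must be uniform over all vertices and must hold on $\mathcal E_T$ itself, rather than on a separate event. This is why I route it through part (i) of Lemma~\ref{lem:ET-sec5} instead of invoking Lemma~\ref{lem:degree-concen} directly. I also absorb the finitely many small-$n$ cases into the constant, or restrict to "all sufficiently large $n$" as elsewhere in the paper.
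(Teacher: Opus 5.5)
Your proposal is correct and follows essentially the same route as the paper: bound each nonzero weight by $a\lesssim\bar p$, and use the blockwise degree concentration from $\mathcal E_T$(i) to get $d_x(W^\star(a,b))\gtrsim n\bar p^{\,2}$ uniformly over vertices. The only cosmetic difference is that you lower-bound $d_x$ by $b\,D_x$, whereas the paper writes $d_x=a\,|\Gamma(x)\cap B_{\sigma(x)}|+b\,|\Gamma(x)\cap B_{\sigma(x)}^c|\asymp n\bar p^{\,2}$ directly; both give the same bound.
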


\begin{proof}
On $\mathcal E_T$(i), blockwise degrees are $\Theta(n\bar p)$ uniformly, hence
$d_x(W^\star(a,b)) = a\,|\Gamma(x)\cap B_{\sigma(x)}| + b\,|\Gamma(x)\cap B_{\sigma(x)}^c|
\asymp_{T,\rho} n\bar p^2$.
Each nonzero weight is $\le a\lesssim \bar p$, so each neighbor probability is
$\lesssim \bar p/(n\bar p^2)=1/(n\bar p)$, uniformly.
\end{proof}

\begin{lemma}[Two-scale ratio expansion]\label{lem:ratio-expansion-np2-sec5}
Let $\bar p=\bar p(n)\in(0,1]$. Let $A=\mu_A+\delta_A$ and $B=\mu_B+\delta_B$ be real random variables.
Assume that for some constants $c,C_1,C_2,C_3>0$, on an event $\mathcal G$:
\begin{enumerate}
\item[(I)] (\emph{$n\bar p^2$--scale}) $\mu_B \ge c\,n\bar p^{\,2}$, $|\mu_A|\le C_1\,n\bar p^{\,3}$, and
$|\delta_B|\le C_2\sqrt{n\bar p^{\,3}\log n}$, $|\delta_A|\le C_3\sqrt{n\bar p^{\,4}\log n}$; or
\item[(II)] (\emph{$n\bar p$--scale}) $\mu_B \ge c\,n\bar p$, $|\mu_A|\le C_1\,n\bar p^{\,2}$, and
$|\delta_B|\le C_2\sqrt{n\bar p\log n}$, $|\delta_A|\le C_3\sqrt{n\bar p^{\,2}\log n}$.
\end{enumerate}
Then on $\mathcal G$, for all sufficiently large $n$,
\[
\left|\frac{A}{B}-\frac{\mu_A}{\mu_B}\right|\ \le\ C\,\varepsilon_n,
\qquad
\varepsilon_n=\sqrt{\frac{\log n}{n\bar p}},
\]
where $C$ depends only on $(c,C_1,C_2,C_3)$.
\end{lemma}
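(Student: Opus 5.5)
The plan is to reduce both cases to the elementary identity already used in Lemma~\ref{lem:ratio-expansion}:
\[
\frac{A}{B}-\frac{\mu_A}{\mu_B}=\frac{\delta_A\mu_B-\mu_A\delta_B}{B\,\mu_B}
=\frac{\delta_A}{B}-\frac{\mu_A}{\mu_B}\cdot\frac{\delta_B}{B}.
\]
Each term is then bounded separately on $\mathcal G$. The first task is to show that $B$ is comparable to $\mu_B$, i.e.\ $B\ge\tfrac12\mu_B$ for all large $n$. In case (I) this amounts to
\[
\frac{|\delta_B|}{\mu_B}\le \frac{C_2\sqrt{n\bar p^{3}\log n}}{c\,n\bar p^{2}}=\frac{C_2}{c}\sqrt{\frac{\log n}{n\bar p}}=\frac{C_2}{c}\,\varepsilon_n .
\]
In case (II) it amounts to
\[
\frac{|\delta_B|}{\mu_B}\le \frac{C_2}{c}\sqrt{\frac{\log n}{n\bar p}}=\frac{C_2}{c}\,\varepsilon_n .
\]
So in both cases I need $\varepsilon_n\to 0$. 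This is the one point where an implicit hypothesis enters: the lemma as stated does not assume $n\bar p\gg\log n$. I will invoke the standing window of Section~\ref{sec:iterated-ricci} (Assumption~\ref{ass:MDT}, which gives $\varepsilon_n=o(\bar p)$), or, more minimally, read ``for all sufficiently large $n$'' as including $\varepsilon_n\le c/(2C_2)$. Under either reading, $|\delta_B|\le\tfrac12\mu_B$ and hence $B\ge\tfrac12\mu_B>0$.

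Next I bound the two terms scale by scale. In case (I), the first term satisfies
\[
\frac{|\delta_A|}{B}\le \frac{2C_3\sqrt{n\bar p^{4}\log n}}{c\,n\bar p^{2}}=\frac{2C_3}{c}\sqrt{\frac{\log n}{n}}\le\frac{2C_3}{c}\,\varepsilon_n,
\]
using $\bar p\le 1$. For the second term, $|\mu_A|/\mu_B\le C_1\bar p/c\le C_1/c$ and $|\delta_B|/B\le 2C_2\varepsilon_n/c$, so it is at most $2C_1C_2\varepsilon_n/c^2$. In case (II), the first term is
\[
\frac{|\delta_A|}{B}\le\frac{2C_3\sqrt{n\bar p^{2}\log n}}{c\,n\bar p}=\frac{2C_3}{c}\sqrt{\frac{\log n}{n}}\le \frac{2C_3}{c}\,\varepsilon_n .
\]
The second term is again at most $(C_1\bar p/c)(2C_2\varepsilon_n/c)$. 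Summing the two terms gives the claim with $C=2C_3/c+2C_1C_2/c^2$, which depends only on $(c,C_1,C_2,C_3)$.

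I expect no real obstacle; the only delicate point is the denominator control noted above. In both cases the bound on $\delta_A$ actually yields the sharper rate $\sqrt{\log n/n}$, so $\varepsilon_n$ is a (loose) common envelope and no extra density assumption is needed beyond $\varepsilon_n\to0$.
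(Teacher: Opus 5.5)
Your proof is correct and follows the paper's argument. It uses the same identity, the same appeal to the standing window (Assumption~\ref{ass:MDT}) to get $|B|\ge\mu_B/2$, and the same term-by-term scale bookkeeping in cases (I) and (II). In case (I) your bound $|\delta_A|/\mu_B\lesssim\sqrt{\log n/n}$ is in fact the accurate one; the paper writes $\bar p\sqrt{\log n/n}$ there. Both quantities are dominated by $\varepsilon_n$, so the conclusion is unaffected.
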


\begin{proof}
Write
\[
\frac{A}{B}-\frac{\mu_A}{\mu_B}
=\frac{\delta_A\mu_B-\mu_A\delta_B}{B\mu_B}.
\]
On $\mathcal G$, in either regime (I) or (II), $|\delta_B|=o(\mu_B)$ under Assumption~\ref{ass:MDT},
so $|B|\ge \mu_B/2$ for large $n$. Hence
\[
\left|\frac{A}{B}-\frac{\mu_A}{\mu_B}\right|
\le \frac{2}{\mu_B^2}\Big(|\delta_A|\mu_B+|\mu_A||\delta_B|\Big)
=2\Big(\frac{|\delta_A|}{\mu_B}+\frac{|\mu_A||\delta_B|}{\mu_B^2}\Big).
\]
In case (I),
\[
\frac{|\delta_A|}{\mu_B}\ \lesssim\ \frac{\sqrt{n\bar p^4\log n}}{n\bar p^2}
=\bar p\sqrt{\frac{\log n}{n}}\ \le\ \varepsilon_n,
\]
and
\[
\frac{|\mu_A||\delta_B|}{\mu_B^2}
\ \lesssim\ \frac{n\bar p^3\cdot \sqrt{n\bar p^3\log n}}{n^2\bar p^4}
=\sqrt{\frac{\bar p\log n}{n}}\ \le\ \varepsilon_n.
\]
Case (II) is identical after replacing $\bar p^2$ by $\bar p$ in the scale bookkeeping.
\end{proof}

\begin{lemma}[Overlap concentration for benchmark iterates]\label{lem:ov-conc-sec5}
Fix $T\in\mathbb N$ and assume Assumptions~\ref{ass:BC} and~\ref{ass:MDT}. On $\mathcal E_T$, for all
$t\le T-1$ and all edges $x\sim y$,
\[
\Big|\operatorname{Ov}_{xy}(W^{\star,(t)})-K^{(t+1)}_{xy}\Big|\ \le\ C_{T,\rho}\,\varepsilon_n.
\]
\end{lemma}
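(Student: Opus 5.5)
The approach is to compute $\operatorname{Ov}_{xy}(W^{\star,(t)})$ explicitly in terms of blockwise degrees and co-degrees, and then compare it with $\Phi_n(w^{(t)}_{\rm in},w^{(t)}_{\rm out})$ via Lemma~\ref{lem:ratio-expansion-np2-sec5}.

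Fix $t\le T-1$ and write $(a,b)=(w_{\rm in}^{(t)},w_{\rm out}^{(t)})$. Then $W:=W^{\star,(t)}=K(a,b)\circ A$. For $t=0$ we have $a=b=1$, and the bound is exactly the argument of Theorem~\ref{thm:sbm-main} on the $n\bar p$ scale. For $t\ge1$, Lemma~\ref{lem:uniform-order-benchmark-sec5} gives $c\bar p\le b\le a\le C\bar p$, with constants depending on $(T,\rho)$.

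\emph{Step 1: bound the endpoint terms.} By Lemma~\ref{lem:pmax-benchmark-sec5}, or by degree concentration when $t=0$, on $\mathcal E_T$ we have $p_{x,W}(y)+p_{y,W}(x)\le 2p_{\max}\lesssim 1/(n\bar p)=o(\varepsilon_n)$. These terms can therefore be discarded.

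\emph{Step 2: decompose the common-neighbour sum by type.} For $z\in\Gamma(x)\cap\Gamma(y)$ we have $p_{x,W}(z)=K_{xz}/d_x(W)$ and $p_{y,W}(z)=K_{yz}/d_y(W)$. Since $K_{xz}$ depends only on $\sigma(z)$ relative to $\sigma(x)$, the sum splits into the two block pieces $z\in B_1$ and $z\in B_2$. On each piece the minimum is attained by the same formula for every $z$, up to the comparison of $d_x(W)$ with $d_y(W)$.

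To remove the minimum, bound it below by $K_{xz}K_{yz}$-weighted quantities over $\max\{d_x,d_y\}$, and above by the analogous quantity over $\min\{d_x,d_y\}$. On $\mathcal E_T$(i), $d_x(W)=a D_x^{(B_{\sigma(x)})}+bD_x^{(B_{\sigma(x)}^c)}$ concentrates at $\mu_d:=(n-1)p_0a+np_1b\asymp n\bar p^2$, with deviation $\lesssim \bar p\sqrt{n\bar p\log n}=\sqrt{n\bar p^3\log n}$. Hence $|d_x-d_y|/\mu_d\lesssim\varepsilon_n$, and replacing $d_x$ or $d_y$ by the maximum or minimum changes the overlap by at most $O(\varepsilon_n)\cdot\operatorname{Ov}=o(\varepsilon_n)$.

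\emph{Step 3: identify the numerator with $\Phi_n$.} Take $\sigma(x)=\sigma(y)$. Common neighbours in the own block carry weight $a$ for both endpoints, and those in the other block carry weight $b$ for both. The numerator is then $aD^{(B_{\sigma(x)})}_{xy}+bD^{(B^c_{\sigma(x)})}_{xy}$, whose mean is $(n-2)p_0^2a+np_1^2b$. Dividing by $\mu_d$ gives $\varphi^{(n)}_{\rm in}(a,b)$.

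Now take $\sigma(x)\ne\sigma(y)$. Every common neighbour is in-type for one endpoint and out-type for the other, so the minimum is $b/d$. The numerator is $bD_{xy}$, with mean $(2n-2)p_0p_1b$, which gives $\varphi^{(n)}_{\rm out}$.

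By Lemma~\ref{lem:blockwise-codeg}, the numerator deviation is $\lesssim\bar p\sqrt{n\bar p^2\log n}=\sqrt{n\bar p^4\log n}$, and its mean is $\lesssim n\bar p^3$. Lemma~\ref{lem:ratio-expansion-np2-sec5}(I) then yields an error of $C\varepsilon_n$; for $t=0$, regime (II) applies. These bounds hold uniformly over edges on $\mathcal E_T$, since all inputs are uniform there.

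The main obstacle is the bookkeeping in Step 2, because the minimum couples the two normalizers. I would handle it with the sandwich $\min\{u/d_x,u'/d_y\}\in[\min\{u,u'\}/\max d,\ \min\{u,u'\}/\min d]$. The difference between these two bounds equals $\min\{u,u'\}\,|d_x-d_y|/(d_x d_y)$; summed over common neighbours, this is at most $\operatorname{Ov}\cdot\Delta_{\deg}\lesssim\bar p\,\varepsilon_n$.
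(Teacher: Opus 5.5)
Your proposal is correct and follows the paper's proof. Like the paper, you discard the $O(1/(n\bar p))$ endpoint terms, write the common-neighbour sum as blockwise co-degrees over $d_x,d_y$, and apply the two-scale ratio expansion on $\mathcal E_T$(i), using regime (I) for $t\ge 1$ and regime (II) for $t=0$. The only difference is your sandwich $\min\{u/d_x,u'/d_y\}\in[\min\{u,u'\}/\max d,\ \min\{u,u'\}/\min d]$, which avoids the paper's use of $b_t/a_t\le 1-c(\rho)$ to pin down the minimizer on cross-block edges and covers $t=0$ uniformly. One small correction: the sandwich error is $O(\bar p\,\varepsilon_n)$, which is $O(\varepsilon_n)$ but not $o(\varepsilon_n)$ when $\bar p$ stays bounded below; this is harmless for the stated bound.
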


\begin{proof}
Fix $t\le T-1$ and edge $x\sim y$. Write $a_t:=w_{\mathrm{in}}^{(t)}$, $b_t:=w_{\mathrm{out}}^{(t)}$ and
$W^\star:=W^\star(a_t,b_t)$.

On $\mathcal E_T$(i), for $t\ge 1$ we have $a_t,b_t\asymp \bar p$ (Lemma~\ref{lem:uniform-order-benchmark-sec5}) and
\[
d_x(W^\star)=(n-1)p_0a_t+np_1b_t\ \pm\ O\!\big(\sqrt{n\bar p^3\log n}\big),
\]
uniformly in $x$, hence $d_x(W^\star)=\Theta(n\bar p^2)$ and $d_x(W^\star)/d_y(W^\star)=1+O(\varepsilon_n)$ uniformly.

\smallskip\noindent
\emph{Within-block edge.}
If $\sigma(x)=\sigma(y)$, then on common neighbors in the home block both kernels contribute $a_t/d_x$ and $a_t/d_y$,
and on the opposite block both contribute $b_t/d_x$ and $b_t/d_y$, so the min uses the smaller denominator.
Thus
\[
\operatorname{Ov}_{xy}(W^\star)
=
\frac{a_t\,D_{xy}^{(\text{home})}+b_t\,D_{xy}^{(\text{opp})}}{\max\{d_x(W^\star),d_y(W^\star)\}}
\ +\ O\!\Big(\frac{1}{n\bar p}\Big),
\]
where the endpoint term $p_{x,W}(y)+p_{y,W}(x)=O((n\bar p)^{-1})=o(\varepsilon_n)$ since $n\bar p\gg\log n$
under Assumption~\ref{ass:MDT}.
Apply Lemma~\ref{lem:ratio-expansion-np2-sec5} (regime (I) for $t\ge 1$, regime (II) for $t=0$) together with
blockwise co-degree concentration on $\mathcal E_T$(i). This yields
\[
\operatorname{Ov}_{xy}(W^{\star,(t)})
=
\varphi_{\mathrm{in}}^{(n)}(a_t,b_t)\ \pm\ C_{T,\rho}\varepsilon_n
=
K^{(t+1)}_{xy}\ \pm\ C_{T,\rho}\varepsilon_n.
\]

\smallskip\noindent
\emph{Cross-block edge.}
If $\sigma(x)\neq\sigma(y)$, then on common neighbors in $B_{\sigma(x)}$ the kernels contribute
$a_t/d_x$ versus $b_t/d_y$, and in $B_{\sigma(y)}$ they contribute $b_t/d_x$ versus $a_t/d_y$.
For $t\ge 1$, Lemma~\ref{lem:monotone-benchmark-contrast-sec5} implies $b_t/a_t\le 1-c(\rho)$, while
$d_x/d_y=1+O(\varepsilon_n)$ uniformly, so for all large $n$ we have
$a_t/d_x \ge b_t/d_y$ and $a_t/d_y \ge b_t/d_x$; hence the minima select $b_t/d_y$ on $B_{\sigma(x)}$
and $b_t/d_x$ on $B_{\sigma(y)}$.
Therefore $\operatorname{Ov}_{xy}(W^\star)$ is a sum of two ratios of the form
$b_t D_{xy}^{(B)}/d_{\cdot}(W^\star)$ plus endpoint terms $o(\varepsilon_n)$, and another application of
Lemma~\ref{lem:ratio-expansion-np2-sec5} yields
\[
\operatorname{Ov}_{xy}(W^{\star,(t)})
=
\varphi_{\mathrm{out}}^{(n)}(a_t,b_t)\ \pm\ C_{T,\rho}\varepsilon_n
=
K^{(t+1)}_{xy}\ \pm\ C_{T,\rho}\varepsilon_n.
\]
\end{proof}

\begin{proposition}[Two-level benchmark curvature concentrates around $\Phi_n$]\label{prop:benchmark-curv-conc-sec5}
Fix $T\in\mathbb N$ and assume Assumptions~\ref{ass:BC} and~\ref{ass:MDT}. On $\mathcal E_T$, for all
$t\le T-1$ and all edges $x\sim y$,
\[
\bigl|\kappa_{W^{\star,(t)}}(x,y) - K^{(t+1)}_{xy}\bigr|
\ \le\ C_{T,\rho}\,\varepsilon_n,
\qquad\text{equivalently}\qquad
\bigl\|\kappa_{W^{\star,(t)}}\circ A - W^{\star,(t+1)}\bigr\|_{\max}
\ \le\ C_{T,\rho}\,\varepsilon_n.
\]
\end{proposition}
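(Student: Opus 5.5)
The plan is to sandwich $\kappa_{W^{\star,(t)}}(x,y)$ between the overlap upper bound and the matching-based lower bound, and then to show that both sides equal $K^{(t+1)}_{xy}$ up to $O(\varepsilon_n)$ via Lemma~\ref{lem:ov-conc-sec5}. Throughout we work on $\mathcal E_T$ and fix $t\le T-1$ and an edge $x\sim y$. We write $W^\star:=W^{\star,(t)}=W^\star(a_t,b_t)$.

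\emph{Upper bound.} Lemma~\ref{lem:overlap-upper-sec5} gives $\kappa_{W^\star}(x,y)\le\operatorname{Ov}_{xy}(W^\star)$. Lemma~\ref{lem:ov-conc-sec5} gives $\operatorname{Ov}_{xy}(W^\star)\le K^{(t+1)}_{xy}+C_{T,\rho}\varepsilon_n$ uniformly in the edge and in $t\le T-1$. This direction is immediate.

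\emph{Lower bound.} The benchmark $W^\star$ is exactly of two-level form, and $\mathcal E_T$(ii) supplies the required type matchings. Lemma~\ref{lem:overlap-lower-from-matchings-sec5} therefore applies and gives
\[
\kappa_{W^\star}(x,y)\ \ge\ \operatorname{Ov}_{xy}(W^\star)-3\,p_{\max}(W^\star;x,y)\,m(x,y)-6\,\Delta_{\deg}(W^\star;x,y).
\]
We bound the two error terms separately.
\begin{itemize}
\item For the first term, Lemma~\ref{lem:pmax-benchmark-sec5} gives $p_{\max}\lesssim 1/(n\bar p)$ for $t\ge1$. For $t=0$ the weights are all $1$, so $p_{\max}=1/\min(D_x,D_y)\lesssim 1/(n\bar p)$ by degree concentration. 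Combined with $\mathcal E_T$(iii), $m(x,y)\lesssim\sqrt{n\bar p\log n}$, the product is $\lesssim\sqrt{\log n/(n\bar p)}=\varepsilon_n$.
\item For the second term, we write $d_x(W^\star)=a_t D_x^{(B_{\sigma(x)})}+b_t D_x^{(B_{\sigma(x)}^c)}$. By blockwise degree concentration ($\mathcal E_T$(i)), this equals $(n-1)p_0a_t+np_1b_t\pm O((a_t+b_t)\sqrt{n\bar p\log n})$. The mean is the same for every vertex, so $|d_x-d_y|/\min(d_x,d_y)\lesssim (a_t+b_t)\sqrt{n\bar p\log n}/(\bar p\,(a_t+b_t)\,n)\lesssim\varepsilon_n$. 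This uses $a_t\asymp b_t$, which holds by Lemma~\ref{lem:uniform-order-benchmark-sec5} for $t\ge1$ and trivially for $t=0$.
\end{itemize}
Combining these with the lower half of Lemma~\ref{lem:ov-conc-sec5} gives $\kappa_{W^\star}(x,y)\ge K^{(t+1)}_{xy}-C_{T,\rho}\varepsilon_n$.

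Together the two bounds give the claim. The max-norm reformulation follows because both matrices are supported on $A$ and $W^{\star,(t+1)}_{xy}=K^{(t+1)}_{xy}$ on edges. The constants depend only on $(T,\rho)$ because there are finitely many $t\le T-1$ and each input lemma is uniform over edges.

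The main obstacle is making sure the lemma inputs actually hold at $t=0$, where the weights are not of order $\bar p$. The degree-mismatch computation above is scale-invariant in $(a_t,b_t)$, and Lemma~\ref{lem:ov-conc-sec5} already covers $t=0$ via regime (II). The only extra check is therefore the $p_{\max}$ bound at $t=0$, handled directly above. If a subtlety arises in the cross-block overlap at $t=0$ (where $a_0=b_0$, so the minimum just picks the larger degree), I would note that the overlap then reduces to $D_{xy}/\max(D_x,D_y)$, which is exactly the setting of Theorem~\ref{thm:sbm-main}. Indeed, at $t=0$ the whole proposition is Theorem~\ref{thm:sbm-main} restricted to $\mathcal E_T$, and I would cite it as a fallback.
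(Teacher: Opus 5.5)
Your proposal is correct and follows the paper's proof essentially step for step: the overlap upper bound of Lemma~\ref{lem:overlap-upper-sec5}, then the type-matching lower bound of Lemma~\ref{lem:overlap-lower-from-matchings-sec5} with the $p_{\max}\,m(x,y)$ and $\Delta_{\deg}$ terms shown to be $O(\varepsilon_n)$, then Lemma~\ref{lem:ov-conc-sec5}. Your explicit treatment of $t=0$ (the scale-invariant $\Delta_{\deg}$ bound and $p_{\max}=1/\min\{D_x,D_y\}$) is a worthwhile tidy-up, because the paper's proof silently uses Lemma~\ref{lem:pmax-benchmark-sec5} and the $\Theta(n\bar p^2)$ degree scale, and both hold only for $t\ge 1$; your fallback remark is slightly loose, since Theorem~\ref{thm:sbm-main} holds on its own event (which uses a full matching saturating $U_x$) rather than on $\mathcal E_T$, but your main argument does not need it.
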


\begin{proof}
Fix $t\le T-1$ and an edge $x\sim y$.

\emph{Upper bound.} Lemma~\ref{lem:overlap-upper-sec5} gives
$\kappa_{W^{\star,(t)}}(x,y)\le \operatorname{Ov}_{xy}(W^{\star,(t)})$.

\emph{Lower bound.} On $\mathcal E_T$, Lemma~\ref{lem:ET-sec5}(ii) provides the type matchings, hence
Lemma~\ref{lem:overlap-lower-from-matchings-sec5} applies to $W^{\star,(t)}$:
\[
\kappa_{W^{\star,(t)}}(x,y)
\ge \operatorname{Ov}_{xy}(W^{\star,(t)})
- 3\,p_{\max}(W^{\star,(t)};x,y)\,m(x,y)
- 6\,\Delta_{\deg}(W^{\star,(t)};x,y).
\]
By Lemma~\ref{lem:pmax-benchmark-sec5} and Lemma~\ref{lem:ET-sec5}(iii),
$p_{\max}(W^{\star,(t)};x,y)\,m(x,y)\lesssim \varepsilon_n$ uniformly.
Also on $\mathcal E_T$(i), $d_x(W^{\star,(t)})=\Theta(n\bar p^2)$ and
$|d_x(W^{\star,(t)})-d_y(W^{\star,(t)})|\lesssim \sqrt{n\bar p^3\log n}$ uniformly, hence
$\Delta_{\deg}(W^{\star,(t)};x,y)\lesssim \varepsilon_n$ uniformly.
Therefore
\[
\kappa_{W^{\star,(t)}}(x,y)=\operatorname{Ov}_{xy}(W^{\star,(t)})\pm C_{T,\rho}\varepsilon_n.
\]
Combine with Lemma~\ref{lem:ov-conc-sec5}.
\end{proof}

\subsection{Iterate tracking in max norm}\label{subsec:tracking-sec5}

\begin{lemma}[Curvature stability via neighbor TV]\label{lem:curv-stability-sec5}
Fix $T\in\mathbb N$ and work on $\mathcal E_T$.
Let $t\in\{1,\dots,T\}$ and suppose $W$ is supported on $A$ and satisfies
\[
\|W-W^{\star,(t)}\|_{\max}\le \delta,
\qquad
\min_x d_x(W)\ \ge\ c_{T,\rho}\,n\bar p^2.
\]
Then for every edge $x\sim y$,
\[
\bigl|\kappa_W(x,y)-\kappa_{W^{\star,(t)}}(x,y)\bigr|
\ \le\ C_{T,\rho}\,\frac{\delta}{\bar p}.
\]
\end{lemma}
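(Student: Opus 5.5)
The plan is to show that the $\alpha$-lazy curvature is Lipschitz in the lazy measures with respect to total variation, with a constant given by the diameter of the supports. Then the perturbation of the neighbor distributions $p_{x,W}$ is controlled in terms of $\delta$.

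\emph{Step 1 (Lipschitz in TV).} Fix $\alpha\in[1/2,1)$ and set $\beta=1-\alpha$. For any probability measures supported on a set of diameter at most $D$, the triangle inequality for $W_1$ gives
\[
|W_1(\mu,\nu)-W_1(\mu',\nu')|\le W_1(\mu,\mu')+W_1(\nu,\nu').
\]
Each of the two terms on the right is bounded by $D\,\|\cdot\|_{\mathrm{TV}}$: keep the common mass on the diagonal and move the rest at cost at most $D$. In our setting the relevant measures are supported on $\{x\}\cup\Gamma(x)$, which has diameter at most $2$. Moreover, the lazy atom is identical for $W$ and $W^{\star,(t)}$, so
\[
\|m^\alpha_{x,W}-m^\alpha_{x,W^{\star,(t)}}\|_{\mathrm{TV}}=\beta\,\|p_{x,W}-p_{x,W^{\star,(t)}}\|_{\mathrm{TV}},
\]
and likewise at $y$. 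Dividing by $\beta$ and letting $\alpha\uparrow 1$ yields
\[
|\kappa_W(x,y)-\kappa_{W^{\star,(t)}}(x,y)|\le 2\big(\|p_{x,W}-p_{x,W^\star}\|_{\mathrm{TV}}+\|p_{y,W}-p_{y,W^\star}\|_{\mathrm{TV}}\big).
\]

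\emph{Step 2 (TV of normalized rows).} Write $W^\star=W^{\star,(t)}$ and $p=p_{x,W}$, $p^\star=p_{x,W^\star}$. Then
\[
\sum_{z}|p(z)-p^\star(z)|\le \frac{\sum_z|W_{xz}-W^\star_{xz}|}{d_x(W)}+d_x(W^\star)\,\Big|\frac1{d_x(W)}-\frac1{d_x(W^\star)}\Big|
\le \frac{2\sum_z|W_{xz}-W^\star_{xz}|}{d_x(W)}.
\]
Both $W$ and $W^\star$ are supported on $A$, so $\sum_z|W_{xz}-W^\star_{xz}|\le\delta D_x$. On $\mathcal E_T$(i) we have $D_x\lesssim n\bar p$, and by hypothesis $d_x(W)\ge c_{T,\rho}n\bar p^2$. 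Hence the TV distance is $\lesssim \delta n\bar p/(n\bar p^2)=\delta/\bar p$. The same bound holds at $y$, and combining with Step 1 gives the claim with a constant depending only on $(T,\rho)$.

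\emph{Main obstacle.} The main point needing care is the justification of the Lipschitz bound in Step 1 uniformly in $\alpha$: the TV-to-$W_1$ transfer must not pick up a cost from the mass $\alpha$ at $x$. This is exactly why the lazy atoms coincide and cancel, leaving a factor $\beta$ that survives the limit. A second point is that the degree lower bound is supplied by the hypothesis on $W$ rather than by $\mathcal E_T$; $\mathcal E_T$ is needed only for the upper bound on $D_x$. Note also that the ratio $\delta/\bar p$ comes from $D_x/d_x(W)\asymp 1/\bar p$, which is the source of the amplification factor discussed under \eqref{eq:MDT-T-sec5}.
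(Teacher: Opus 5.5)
Your proposal is correct and matches the paper's proof essentially step for step. Both use the $W_1$ triangle inequality and the diameter-$2$ bound $W_1\le 2\|\cdot\|_{\mathrm{TV}}$ on $\{x\}\cup\Gamma(x)$, where the common lazy atom cancels and leaves a factor $\beta$; both then apply the same row-normalization estimate $\|p_{x,W}-p_{x,W^{\star,(t)}}\|_{\mathrm{TV}}\le D_x\delta/d_x(W)\lesssim \delta/\bar p$, with $D_x\lesssim n\bar p$ from $\mathcal E_T$ and the assumed lower bound on $d_x(W)$.
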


\begin{proof}
Fix $x\sim y$ and $\alpha\in[1/2,1)$ and write $\beta:=1-\alpha$.
By the triangle inequality for $W_1$,
\[
\Big|W_1(m^\alpha_{x,W},m^\alpha_{y,W})-W_1(m^\alpha_{x,W^\star},m^\alpha_{y,W^\star})\Big|
\le W_1(m^\alpha_{x,W},m^\alpha_{x,W^\star})+W_1(m^\alpha_{y,W},m^\alpha_{y,W^\star}).
\]
Hence
\[
|\kappa_{\alpha,W}(x,y)-\kappa_{\alpha,W^\star}(x,y)|
\le W_1(m^\alpha_{x,W},m^\alpha_{x,W^\star})+W_1(m^\alpha_{y,W},m^\alpha_{y,W^\star}).
\]
The supports of $m^\alpha_{x,W}$ and $m^\alpha_{x,W^\star}$ are contained in $\{x\}\cup\Gamma(x)$, whose
graph diameter is at most $2$, so
\[
W_1(m^\alpha_{x,W},m^\alpha_{x,W^\star})
\le 2\,\|m^\alpha_{x,W}-m^\alpha_{x,W^\star}\|_{\mathrm{TV}}
=2\beta\,\|p_{x,W}-p_{x,W^\star}\|_{\mathrm{TV}},
\]
and similarly for $y$. Dividing by $\beta$ and letting $\alpha\uparrow 1$ yields
\[
|\kappa_W(x,y)-\kappa_{W^\star}(x,y)|
\le 2\Big(\|p_{x,W}-p_{x,W^\star}\|_{\mathrm{TV}}+\|p_{y,W}-p_{y,W^\star}\|_{\mathrm{TV}}\Big).
\]

To bound $\|p_{x,W}-p_{x,W^\star}\|_{\mathrm{TV}}$, write $d_x:=d_x(W)$ and $d_x^\star:=d_x(W^\star)$.
Then
\[
\sum_{u\sim x}\left|\frac{W_{xu}}{d_x}-\frac{W^\star_{xu}}{d_x^\star}\right|
\le \frac{1}{d_x}\sum_{u\sim x}|W_{xu}-W^\star_{xu}|
     +\left|\frac{1}{d_x}-\frac{1}{d_x^\star}\right|\sum_{u\sim x}W^\star_{xu}.
\]
Since $\sum_{u\sim x}W^\star_{xu}=d_x^\star$ and $|d_x-d_x^\star|\le \sum_{u\sim x}|W_{xu}-W^\star_{xu}|\le D_x\delta$,
we get
\[
\sum_{u\sim x}\left|\frac{W_{xu}}{d_x}-\frac{W^\star_{xu}}{d_x^\star}\right|
\le \frac{D_x\delta}{d_x}+\frac{|d_x-d_x^\star|}{d_x}
\le \frac{2D_x\delta}{d_x}.
\]
Therefore $\|p_{x,W}-p_{x,W^\star}\|_{\mathrm{TV}}\le D_x\delta/d_x$.
On $\mathcal E_T$, $D_x\lesssim n\bar p$ uniformly, and by assumption $d_x\ge c\,n\bar p^2$, hence
$\|p_{x,W}-p_{x,W^\star}\|_{\mathrm{TV}}\lesssim \delta/\bar p$ uniformly; similarly for $y$.
\end{proof}

\begin{theorem}[Iterates track the observed two-level benchmark]\label{thm:iterate-tracking-sec5}
Fix $T\in\mathbb N$ and assume Assumptions~\ref{ass:BC},~\ref{ass:MDT}, and \eqref{eq:MDT-T-sec5}.
There exists $C_{T,\rho}<\infty$ such that, with probability at least $1-C_{T,\rho}n^{-2}$, simultaneously for all
$t\le T$,
\[
\bigl\|W^{(t)}-W^{\star,(t)}\bigr\|_{\max}\ \le\ C_{T,\rho}\,\frac{\varepsilon_n}{\bar p^{\,T-1}}.
\]
More sharply, for each $t\in\{1,\dots,T\}$,
\[
\bigl\|W^{(t)}-W^{\star,(t)}\bigr\|_{\max}\ \le\ C_{t,\rho}\,\frac{\varepsilon_n}{\bar p^{\,t-1}}.
\]
\end{theorem}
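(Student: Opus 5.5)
We argue by induction on $t$, working throughout on the intersection of $\mathcal E_T$ (Lemma~\ref{lem:ET-sec5}) with the event of Theorem~\ref{thm:sbm-main} and the degree event of Lemma~\ref{lem:degree-concen}. This intersection has probability at least $1-C_{T,\rho}n^{-2}$. Set $\Delta_t:=\|W^{(t)}-W^{\star,(t)}\|_{\max}$. For $t=0$ we have $W^{(0)}=A=W^{\star,(0)}$, so $\Delta_0=0$. For $t=1$, note that $W^{(1)}=\kappa\circ A$ is the unweighted LLY curvature and $W^{\star,(1)}=K^{(1)}\circ A$ with $K^{(1)}=K(w_{\rm in}^{(n)},w_{\rm out}^{(n)})$, since $\Phi_n(1,1)=(w^{(n)}_{\rm in},w^{(n)}_{\rm out})$. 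Theorem~\ref{thm:sbm-main} therefore gives $\Delta_1\le C\varepsilon_n$, which is the claim with exponent $t-1=0$. Alternatively, Proposition~\ref{prop:benchmark-curv-conc-sec5} with $t=0$ gives the same bound.

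For the inductive step $t\to t+1$ with $1\le t\le T-1$, we use the triangle inequality
\[
\Delta_{t+1}\le \bigl\|(\kappa_{W^{(t)}}-\kappa_{W^{\star,(t)}})\circ A\bigr\|_{\max}+\bigl\|\kappa_{W^{\star,(t)}}\circ A-W^{\star,(t+1)}\bigr\|_{\max}.
\]
Proposition~\ref{prop:benchmark-curv-conc-sec5} bounds the second term by $C_{T,\rho}\varepsilon_n$. The first term is handled by Lemma~\ref{lem:curv-stability-sec5} with $\delta=\Delta_t$, which gives $C_{T,\rho}\Delta_t/\bar p$. Combining these with the inductive hypothesis $\Delta_t\le C_{t,\rho}\varepsilon_n/\bar p^{t-1}$ yields
\[
\Delta_{t+1}\le C\varepsilon_n/\bar p^{t}+C\varepsilon_n\le C_{t+1,\rho}\,\varepsilon_n/\bar p^{t},
\]
using $\bar p\le1$. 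Since $\bar p^{t-1}\ge\bar p^{T-1}$, the uniform bound over $t\le T$ follows by taking the maximum of the finitely many constants.

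The main obstacle is verifying the hypothesis of Lemma~\ref{lem:curv-stability-sec5}, namely $\min_x d_x(W^{(t)})\ge c_{T,\rho}n\bar p^2$, which is needed inside the induction. On $\mathcal E_T$(i) and by Lemma~\ref{lem:uniform-order-benchmark-sec5}, we have $d_x(W^{\star,(t)})\ge c'\,n\bar p^2$ uniformly for $t\ge1$. We also have $|d_x(W^{(t)})-d_x(W^{\star,(t)})|\le D_x\Delta_t\lesssim n\bar p\,\varepsilon_n/\bar p^{t-1}$, so the deviation is $o(n\bar p^2)$ exactly when $\varepsilon_n/\bar p^{t}=o(1)$. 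This holds for all $t\le T$ by \eqref{eq:MDT-T-sec5}, since $\varepsilon_n/\bar p^{t}\le\eta_{n,T}\to0$. With $c_{T,\rho}:=c'/2$ the hypothesis then holds for large $n$. The same estimate shows that all weights $W^{(t)}_{xy}$ are positive, because $\Delta_t=o(\bar p)$ while $K^{(t)}\ge c\bar p$. Hence every $W^{(t)}$ is a legitimate nonnegative weighting with positive degrees, and the curvatures $\kappa_{W^{(t)}}$ are well defined.

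Finally, no additional union bounds are required beyond those already contained in $\mathcal E_T$ and Theorem~\ref{thm:sbm-main}. The stability lemma and the inductive step are deterministic on this event, so the probability bound $1-C_{T,\rho}n^{-2}$ holds simultaneously for all $t\le T$.
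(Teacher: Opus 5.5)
Your proposal is correct and follows essentially the same route as the paper. The base case is $\delta_1\le C\varepsilon_n$ (the paper invokes Proposition~\ref{prop:benchmark-curv-conc-sec5} at $t=0$, which you list as an alternative to Theorem~\ref{thm:sbm-main}); the inductive step uses the same triangle-inequality split, with Lemma~\ref{lem:curv-stability-sec5} and Proposition~\ref{prop:benchmark-curv-conc-sec5}, and the same check that $(n\bar p)\delta_t=o(n\bar p^{2})$ under \eqref{eq:MDT-T-sec5}, while your explicit positivity check for the iterated weights is a small addition the paper defers to Corollary~\ref{cor:iter-positive-sec5}.
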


\begin{proof}
Work on $\mathcal E_T$. Set $\delta_t:=\|W^{(t)}-W^{\star,(t)}\|_{\max}$ so $\delta_0=0$.
Since $W^{(0)}=W^{\star,(0)}=A$, Proposition~\ref{prop:benchmark-curv-conc-sec5} at $t=0$ gives
$\delta_1\le C\varepsilon_n$.

Fix $t\in\{1,\dots,T-1\}$. For any edge $x\sim y$,
\[
|W^{(t+1)}_{xy}-W^{\star,(t+1)}_{xy}|
=
|\kappa_{W^{(t)}}(x,y)-K^{(t+1)}_{xy}|
\le
|\kappa_{W^{(t)}}(x,y)-\kappa_{W^{\star,(t)}}(x,y)|
+
|\kappa_{W^{\star,(t)}}(x,y)-K^{(t+1)}_{xy}|.
\]
The second term is $\le C\varepsilon_n$ by Proposition~\ref{prop:benchmark-curv-conc-sec5}.
For the first term, we apply Lemma~\ref{lem:curv-stability-sec5}.
It remains to verify the degree lower bound needed there for $W^{(t)}$.

On $\mathcal E_T$ and for $t\ge 1$, $\min_x d_x(W^{\star,(t)})\asymp n\bar p^2$ (since $w_{\rm in}^{(t)},w_{\rm out}^{(t)}\asymp\bar p$
and degrees are $\Theta(n\bar p)$). Also,
\[
\min_x d_x(W^{(t)})
\ \ge\ \min_x d_x(W^{\star,(t)})-(\max_x D_x)\,\delta_t
\ \ge\ c\,n\bar p^2 - C(n\bar p)\delta_t.
\]
Under the induction bound $\delta_t\le C_{t,\rho}\varepsilon_n/\bar p^{t-1}$ and \eqref{eq:MDT-T-sec5},
we have $(n\bar p)\delta_t=o(n\bar p^2)$ uniformly for $t\le T$, hence for all large $n$,
$\min_x d_x(W^{(t)})\ge c' n\bar p^2$ and Lemma~\ref{lem:curv-stability-sec5} applies.

Therefore, uniformly over edges,
\[
|\kappa_{W^{(t)}}(x,y)-\kappa_{W^{\star,(t)}}(x,y)|\le C\,\frac{\delta_t}{\bar p}.
\]
Taking maxima over edges gives the recursion
\[
\delta_{t+1}\ \le\ C\frac{\delta_t}{\bar p} + C\varepsilon_n.
\]
Iterating this recursion for fixed $T$ yields $\delta_t\le C_{t,\rho}\varepsilon_n/\bar p^{t-1}$, and hence
$\sup_{t\le T}\delta_t\le C_{T,\rho}\varepsilon_n/\bar p^{T-1}$.
\end{proof}

\begin{corollary}[Uniform positivity of iterated weights and degrees]\label{cor:iter-positive-sec5}
Fix $T\in\mathbb N$ and assume Assumptions~\ref{ass:BC},~\ref{ass:MDT}, and \eqref{eq:MDT-T-sec5}.
Then with probability at least $1-C_{T,\rho}n^{-2}$, simultaneously for all $t\le T$ and all edges $x\sim y$,
\[
W^{(t)}_{xy}\ \ge\ c_{T,\rho}\,\bar p,
\]
and moreover $\min_x d_x(W^{(t)})\ge c_{T,\rho}\,n\bar p^2$ for all $t\ge 1$.
\end{corollary}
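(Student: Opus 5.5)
The plan is to intersect the event $\mathcal E_T$ of Lemma~\ref{lem:ET-sec5} with the tracking event of Theorem~\ref{thm:iterate-tracking-sec5}. Both have probability at least $1-C_{T,\rho}n^{-2}$, so their intersection does too. On this event we will transfer the deterministic lower bounds for the benchmark weights to the random iterates.

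\emph{Case $t=0$.} Here $W^{(0)}=A$, so every edge carries weight $1$, and $1\ge c\,\bar p$ holds trivially. No degree claim is made at $t=0$.

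\emph{Case $t\ge 1$.} Lemma~\ref{lem:uniform-order-benchmark-sec5} gives $W^{\star,(t)}_{xy}\ge w_{\mathrm{out}}^{(t)}\ge c_{T,\rho}\bar p$ on every edge. Theorem~\ref{thm:iterate-tracking-sec5} then gives, for every edge,
\[
W^{(t)}_{xy}\ \ge\ c_{T,\rho}\bar p-C_{T,\rho}\,\frac{\varepsilon_n}{\bar p^{\,T-1}} .
\]
By the remark after \eqref{eq:MDT-T-sec5}, $\varepsilon_n/\bar p^{\,T-1}=\bar p\,\eta_{n,T}=o(\bar p)$. Hence for all large $n$ the error term is at most $\tfrac12 c_{T,\rho}\bar p$, which gives $W^{(t)}_{xy}\ge \tfrac12 c_{T,\rho}\bar p$ uniformly over edges and over $t\le T$.

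\emph{Degree bound.} Summing the edge bound over neighbours gives $d_x(W^{(t)})\ge \tfrac12 c_{T,\rho}\bar p\,D_x$. On $\mathcal E_T$(i), blockwise degree concentration combined with Assumption~\ref{ass:BC} gives $\min_x D_x\ge c\,n\bar p$. It follows that $\min_x d_x(W^{(t)})\ge c'_{T,\rho}\,n\bar p^2$ for $1\le t\le T$.

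The only delicate point is making the error term $o(\bar p)$ rather than merely $o(1)$. This is exactly where \eqref{eq:MDT-T-sec5} is needed instead of Assumption~\ref{ass:MDT}, and it must be applied with the worst-case exponent $T-1$ so that the bound holds uniformly in $t\le T$. After adjusting constants the statement holds with a single $c_{T,\rho}$.
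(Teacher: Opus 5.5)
Your proposal is correct and matches the paper's proof: you lower-bound the benchmark weights via Lemma~\ref{lem:uniform-order-benchmark-sec5}, subtract the tracking error from Theorem~\ref{thm:iterate-tracking-sec5} (which is $o(\bar p)$ by \eqref{eq:MDT-T-sec5}), and sum over the $\Theta(n\bar p)$ neighbours on $\mathcal E_T$. Your explicit treatment of $t=0$, where $W^{(0)}=A$ and one only needs $c_{T,\rho}\le 1$, covers a case the paper's proof leaves implicit.
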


\begin{proof}
On $\mathcal E_T$, Lemma~\ref{lem:uniform-order-benchmark-sec5} gives
$\min\{w_{\mathrm{in}}^{(t)},w_{\mathrm{out}}^{(t)}\}\ge c_{T,\rho}\bar p$ for $t\ge 1$.
By Theorem~\ref{thm:iterate-tracking-sec5},
$\|W^{(t)}-W^{\star,(t)}\|_{\max}\le C_{T,\rho}\varepsilon_n/\bar p^{T-1}$.
Under \eqref{eq:MDT-T-sec5}, $\varepsilon_n/\bar p^{T-1}=o(\bar p)$, hence for all large $n$,
$C_{T,\rho}\varepsilon_n/\bar p^{T-1}\le \tfrac12 c_{T,\rho}\bar p$. Therefore
$W^{(t)}_{xy}\ge \tfrac12 c_{T,\rho}\bar p$ on every edge.

Summing over $\Theta(n\bar p)$ neighbors on $\mathcal E_T$ yields $\min_x d_x(W^{(t)})\gtrsim n\bar p^2$ for $t\ge 1$.
\end{proof}

\subsection{Normalized Laplacian tracking and eigengap stability}\label{subsec:lap-gap-sec5}

Recall $L(W):=I-\Deg(W)^{-1/2}W\Deg(W)^{-1/2}$.

\begin{corollary}[Laplacian tracking]\label{cor:laplacian-tracking-sec5}
Fix $T\in\mathbb N$ and assume Assumptions~\ref{ass:BC},~\ref{ass:MDT}, and \eqref{eq:MDT-T-sec5}.
With probability at least $1-C_{T,\rho}n^{-2}$, simultaneously for all $t\le T$,
\[
\bigl\|L(W^{(t)})-L(W^{\star,(t)})\bigr\|_{\op}\ \le\ C_{T,\rho}\,\eta_{n,T}.
\]
Consequently, for each fixed $k$,
\[
\bigl|\lambda_k(L(W^{(t)}))-\lambda_k(L(W^{\star,(t)}))\bigr|\ \le\ C_{T,\rho}\,\eta_{n,T}.
\]
\end{corollary}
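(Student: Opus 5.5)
We work on the intersection of $\mathcal E_T$ with the events of Theorem~\ref{thm:iterate-tracking-sec5} and Corollary~\ref{cor:iter-positive-sec5}; this intersection has probability at least $1-C_{T,\rho}n^{-2}$. The case $t=0$ is trivial, since $W^{(0)}=W^{\star,(0)}=A$ and the difference vanishes. For $1\le t\le T$ we compare $L(W^{(t)})$ and $L(W^{\star,(t)})$ deterministically. We use the same polarization decomposition as in Lemma~\ref{lem:proxy-decomp-fixed} and Step~1 of Lemma~\ref{lem:conc-L-fixed}. Write $W=W^{(t)}$, $W^\star=W^{\star,(t)}$, $D=\Deg(W)$, $D^\star=\Deg(W^\star)$. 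Then
\[
D^{-1/2}WD^{-1/2}-D^{\star-1/2}W^\star D^{\star-1/2}
=(D^{-1/2}-D^{\star-1/2})WD^{-1/2}+D^{\star-1/2}(W-W^\star)D^{-1/2}+D^{\star-1/2}W^\star(D^{-1/2}-D^{\star-1/2}).
\]

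The inputs are as follows. Let $\delta_t:=\|W-W^\star\|_{\max}\le C_{T,\rho}\varepsilon_n/\bar p^{\,T-1}$, by Theorem~\ref{thm:iterate-tracking-sec5}.

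\textbf{Degrees.} On $\mathcal E_T$ we have $D_x\lesssim n\bar p$. Corollary~\ref{cor:iter-positive-sec5} and Lemma~\ref{lem:uniform-order-benchmark-sec5} give $\min_x d_x(W),\ \min_x d_x(W^\star)\ge c\,n\bar p^2$.

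\textbf{Difference of weights.} Lemma~\ref{lem:op-1inf-fixed} applied to the symmetric matrix $W-W^\star$, which is supported on $A$, gives $\|W-W^\star\|_{\op}\le \max_x D_x\,\delta_t\lesssim n\bar p\,\delta_t$.

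\textbf{Difference of degrees.} Similarly, $\|d(W)-d(W^\star)\|_\infty\le \max_x D_x\,\delta_t\lesssim n\bar p\,\delta_t$. Lemma~\ref{lem:inv-sqrt-diag-fixed} with $m\asymp n\bar p^2$ then yields $\|D^{-1/2}-D^{\star-1/2}\|_{\op}\lesssim (n\bar p^2)^{-3/2}\,n\bar p\,\delta_t$.

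\textbf{Size of each weight matrix.} $\|W\|_{\op}\le\max_x d_x(W)\lesssim n\bar p^2$, and the same bound holds for $W^\star$. This uses the upper bound $W_{xy}\lesssim\bar p$, which follows from tracking plus Lemma~\ref{lem:uniform-order-benchmark-sec5}.

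Plugging these into the decomposition, each of the three terms is
\[
\lesssim \frac{n\bar p\,\delta_t}{n\bar p^2}=\frac{\delta_t}{\bar p}\ \lesssim\ \frac{\varepsilon_n}{\bar p^{\,T}}=\eta_{n,T}.
\]
In more detail, the middle term is $(n\bar p^2)^{-1}\,n\bar p\,\delta_t$. Each outer term is $(n\bar p^2)^{-3/2}\,n\bar p\,\delta_t\cdot n\bar p^2\cdot(n\bar p^2)^{-1/2}$, which equals the same quantity. Hence $\|L(W^{(t)})-L(W^{\star,(t)})\|_{\op}\le C_{T,\rho}\eta_{n,T}$ uniformly in $t\le T$. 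One may use the sharper bound $\delta_t\lesssim\varepsilon_n/\bar p^{t-1}$ for each $t$, but it only needs to be bounded by the $T$ version. The eigenvalue statement then follows from Weyl's inequality.

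The only step needing care is verifying the degree lower bound for $W^{(t)}$ uniformly in $t$, so that the inverse square roots are controlled at scale $n\bar p^2$. This is exactly where \eqref{eq:MDT-T-sec5} enters: it guarantees $n\bar p\,\delta_t=o(n\bar p^2)$. This is already supplied by Corollary~\ref{cor:iter-positive-sec5}, so the remaining work is bookkeeping of powers of $\bar p$.
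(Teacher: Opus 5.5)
Your proof is correct and follows essentially the same route as the paper. The paper invokes Lemma~\ref{lem:normlap-perturb} with $m\asymp n\bar p^{2}$, whose proof is exactly the three-term polarization you write out inline, and it uses the same inputs: max-norm tracking from Theorem~\ref{thm:iterate-tracking-sec5}, the row-sum bound $\|W^{(t)}-W^{\star,(t)}\|_{\op}\le(\max_x D_x)\,\delta_t$, degree lower bounds from Corollary~\ref{cor:iter-positive-sec5}, $\|W^{(t)}\|_{\op}+\|W^{\star,(t)}\|_{\op}\lesssim n\bar p^{2}$, and then Weyl. Your explicit handling of the trivial case $t=0$ and your justification of $\|W^{(t)}\|_{\op}\lesssim n\bar p^{2}$ via the entrywise upper bound fill in two small points the paper leaves implicit.
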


\begin{proof}
Fix $t\in\{1,\dots,T\}$. By Corollary~\ref{cor:iter-positive-sec5}, on $\mathcal E_T$ we have
$\min_x d_x(W^{(t)})\asymp n\bar p^2$ and $\min_x d_x(W^{\star,(t)})\asymp n\bar p^2$ for large $n$.
Apply Lemma~\ref{lem:normlap-perturb} with $m\asymp n\bar p^2$.

Moreover, on $\mathcal E_T$(i),
\[
\|W^{(t)}-W^{\star,(t)}\|_{\op}
\le \|W^{(t)}-W^{\star,(t)}\|_{\infty}
\le (\max_x D_x)\,\|W^{(t)}-W^{\star,(t)}\|_{\max}
\ \lesssim\ (n\bar p)\cdot \frac{\varepsilon_n}{\bar p^{T-1}}.
\]
Dividing by $n\bar p^2$ yields the first perturbation term of order $\varepsilon_n/\bar p^T=\eta_{n,T}$.
The degree-difference term in Lemma~\ref{lem:normlap-perturb} is controlled similarly since
$\|d(W^{(t)})-d(W^{\star,(t)})\|_\infty\le (\max_x D_x)\|W^{(t)}-W^{\star,(t)}\|_{\max}$
and $\|W^{(t)}\|_{\op}+\|W^{\star,(t)}\|_{\op}\lesssim n\bar p^2$ for $t\ge 1$.
Eigenvalue control follows from Weyl.
\end{proof}

\begin{corollary}[Empirical eigengap is monotone up to vanishing errors]\label{cor:iterate-gap-monotone-sec5}
Fix $T\in\mathbb N$ and assume Assumptions~\ref{ass:BC},~\ref{ass:MDT}, and \eqref{eq:MDT-T-sec5}. Let
\[
\Gamma^{(t)} := \lambda_3(L(W^{(t)}))-\lambda_2(L(W^{(t)})).
\]
Then with probability at least $1-C_{T,\rho}n^{-2}$, for all $1\le t\le T$,
\[
\Gamma^{(t)} \ \ge\ \Gamma^{(t-1)}\ -\ C_{T,\rho}\eta_{n,T}\ -\ O\!\Big(\frac{1}{n}\Big).
\]
\end{corollary}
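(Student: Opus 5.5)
The plan is to compare each empirical gap $\Gamma^{(t)}$ with the population benchmark gap $\Gamma^{(t)}_{\pop}$ through the observed benchmark $L(W^{\star,(t)})$. Then we use the monotonicity of $\Gamma^{(t)}_{\pop}$ from Lemmas~\ref{lem:monotone-benchmark-contrast-sec5} and~\ref{lem:pop-gap-monotone-sec5}. All estimates are placed on a single event: the intersection of $\mathcal E_T$, the event of Corollary~\ref{cor:laplacian-tracking-sec5}, and the events of Lemma~\ref{lem:conc-twolevel-sec5} applied to the $T+1$ deterministic pairs $(w^{(t)}_{\rm in},w^{(t)}_{\rm out})$, $t\le T$. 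Since $T$ is fixed and each pair is deterministic, a union bound gives probability at least $1-C_{T,\rho}n^{-2}$.

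\emph{Step 1 (two-sided sandwich).} For each $t\le T$, Corollary~\ref{cor:laplacian-tracking-sec5} gives $|\lambda_k(L(W^{(t)}))-\lambda_k(L(W^{\star,(t)}))|\le C\eta_{n,T}$ for $k=2,3$. Lemma~\ref{lem:conc-twolevel-sec5} gives $|\lambda_k(L(W^{\star,(t)}))-\lambda_k(L^{\star,(t)}_{\pop})|\le C\varepsilon_n$. These combine to
\[
|\Gamma^{(t)}-\Gamma^{(t)}_{\pop}|\le 2C(\eta_{n,T}+\varepsilon_n)\le C'\eta_{n,T},
\]
using $\varepsilon_n\le\eta_{n,T}$ because $\bar p\le1$. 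The case $t=0$ needs separate handling. There $W^{(0)}=W^{\star,(0)}=A$, so the first comparison is trivial. Lemma~\ref{lem:conc-twolevel-sec5} formally requires $a,b\asymp\bar p$ while $(1,1)$ is of order one, so for $t=0$ we invoke Lemma~\ref{lem:conc-L-fixed}, eq.~\eqref{eq:delta0-fixed}, instead; note that $L^{\star,(0)}_{\pop}=L^{\pop}_0$.

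\emph{Step 2 (population monotonicity).} By Lemma~\ref{lem:pop-gap-monotone-sec5}, $\Gamma^{(t)}_{\pop}=\Delta^{(n)}_t+O(1/n)$. By Lemma~\ref{lem:monotone-benchmark-contrast-sec5}, $\Delta^{(n)}_t\ge\Delta^{(n)}_{t-1}$. Hence $\Gamma^{(t)}_{\pop}\ge\Gamma^{(t-1)}_{\pop}-O(1/n)$. The $O(1/n)$ constant is uniform in $t\le T$, because $\lambda_3(L^{\star,(t)}_{\pop})-1=a_t/((n-1)a_t+nb_t)\le 1/(n-1)$ regardless of $t$.

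\emph{Step 3 (chain).} Combining the steps,
\[
\Gamma^{(t)}\ge\Gamma^{(t)}_{\pop}-C'\eta_{n,T}\ge\Gamma^{(t-1)}_{\pop}-C'\eta_{n,T}-O(1/n)\ge\Gamma^{(t-1)}-2C'\eta_{n,T}-O(1/n).
\]

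The main obstacle is making sure the ingredients genuinely apply uniformly in $t$. Corollary~\ref{cor:laplacian-tracking-sec5} is stated for all $t\le T$, but its proof treats $t\ge1$; at $t=0$ it is trivial. Lemma~\ref{lem:conc-twolevel-sec5} needs $b\le a$ with both of order $\bar p$, which Lemma~\ref{lem:uniform-order-benchmark-sec5} supplies for $1\le t\le T$. I would also check that Lemma~\ref{lem:normlap-perturb} is used with degree scale $m\asymp n\bar p^2$, with constants depending only on $(T,\rho)$. Everything else is bookkeeping of constants.
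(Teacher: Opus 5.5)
Your proposal is correct and follows essentially the same chain as the paper's proof. The steps are: empirical gap to observed benchmark via Corollary~\ref{cor:laplacian-tracking-sec5}; observed benchmark to population benchmark via Lemma~\ref{lem:conc-twolevel-sec5}; then monotonicity of $\Delta_t^{(n)}$ via Lemmas~\ref{lem:pop-gap-monotone-sec5} and~\ref{lem:monotone-benchmark-contrast-sec5}, absorbing $\varepsilon_n\le\eta_{n,T}$. Your separate treatment of the index $t-1=0$ through Lemma~\ref{lem:conc-L-fixed}, using $L^{\star,(0)}_{\pop}=L^{\pop}_0$, correctly patches a point the paper glosses over. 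The paper invokes Lemma~\ref{lem:conc-twolevel-sec5} only for $t\ge 1$, yet compares $t$ with $t-1$. Alternatively, scale invariance $L(cW)=L(W)$ lets you replace $(1,1)$ by $(\bar p,\bar p)$ there.
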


\begin{proof}
By Corollary~\ref{cor:laplacian-tracking-sec5}, uniformly for $t\le T$,
\[
\Gamma^{(t)}=\Gamma^{\star,(t)}+O(\eta_{n,T}),
\qquad
\Gamma^{\star,(t)}:=\lambda_3(L(W^{\star,(t)}))-\lambda_2(L(W^{\star,(t)})).
\]
For each fixed $t\ge 1$, Lemma~\ref{lem:conc-twolevel-sec5} with
$a=w_{\mathrm{in}}^{(t)}$, $b=w_{\mathrm{out}}^{(t)}$ gives $\Gamma^{\star,(t)}=\Gamma_{\pop}^{(t)}+O(\varepsilon_n)$.
Finally, Lemma~\ref{lem:pop-gap-monotone-sec5} gives $\Gamma_{\pop}^{(t)}=\Delta_t^{(n)}+O(1/n)$ and
$\Delta_t^{(n)}$ is nondecreasing by Lemma~\ref{lem:monotone-benchmark-contrast-sec5}.
Combine these relations for $t$ and $t-1$ and use $\varepsilon_n\le \eta_{n,T}$ (since $\bar p\le 1$ and $T\ge 1$).
\end{proof}

\begin{remark}[Strict monotonicity]\label{rem:strict-monotone-sec5}
For fixed $T$, Corollary~\ref{cor:iterate-gap-monotone-sec5} gives monotonicity up to $O(\eta_{n,T})+O(1/n)$.
Uniform \emph{strict} increments $\Gamma^{(t)}>\Gamma^{(t-1)}$ would require a deterministic lower bound on
$\Delta_t^{(n)}-\Delta_{t-1}^{(n)}$ that dominates $\eta_{n,T}$, which is not uniform under Assumption~\ref{ass:BC} alone.
\end{remark}

\section{Appendix}\label{sec:appendix-weighted} 
\addcontentsline{toc}{section}{Appendix}

We collect standard matrix perturbation tools used in Section~\ref{sec:iterated-ricci}.
\begin{theorem}[Weyl’s inequality {\cite[Cor.~III.2.6]{bhatia1997matrix}}]\label{thm:weyl} If $A,E$ are symmetric and $B=A+E$, with eigenvalues ordered $\lambda_1(\cdot)\le\cdots\le\lambda_d(\cdot)$, then $\big|\lambda_j(B)-\lambda_j(A)\big|\le \|E\|_{\op}$ for all $j$. Consequently, for any $i<j$, \[ \lambda_j(B)-\lambda_i(B)\ \ge\ \lambda_j(A)-\lambda_i(A)-2\|E\|_{\op}. \] \end{theorem}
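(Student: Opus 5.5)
This is Weyl's inequality. I would prove it from the Courant--Fischer min--max characterization, which reduces it to comparing Rayleigh quotients. For symmetric $M$ with increasing eigenvalues,
\[
\lambda_j(M)=\min_{\dim S=j}\ \max_{x\in S,\ \|x\|_2=1} x^\top M x .
\]
For any unit vector $x$ we have $x^\top B x = x^\top A x + x^\top E x$. Since $E$ is symmetric, $|x^\top E x|\le \|E\|_{\op}$, and therefore
\[
x^\top A x - \|E\|_{\op}\ \le\ x^\top B x\ \le\ x^\top A x + \|E\|_{\op}.
\]

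Next I would push this pointwise bound through the max over $x\in S$ and then the min over subspaces $S$ of dimension $j$. Both operations are monotone and commute with adding a constant, so
\[
\lambda_j(A)-\|E\|_{\op}\ \le\ \lambda_j(B)\ \le\ \lambda_j(A)+\|E\|_{\op},
\]
which is the first claim.

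For the gap statement with $i<j$, I would write $\lambda_j(B)-\lambda_i(B)$ and use the lower bound for index $j$ together with the upper bound for index $i$. This gives
\[
\lambda_j(B)-\lambda_i(B)\ \ge\ \bigl(\lambda_j(A)-\|E\|_{\op}\bigr)-\bigl(\lambda_i(A)+\|E\|_{\op}\bigr)=\lambda_j(A)-\lambda_i(A)-2\|E\|_{\op}.
\]
Note that $i<j$ is not actually used; the inequality holds for any pair of indices.

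There is no real obstacle. The only ingredient beyond bookkeeping is the variational characterization itself, which is standard: it follows by diagonalizing and counting dimensions, since any $j$-dimensional subspace meets the span of the top $d-j+1$ eigenvectors nontrivially. I would cite it rather than reprove it, consistent with the reference to \citet{bhatia1997matrix}.
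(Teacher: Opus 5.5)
Your proof is correct and is the standard argument: the paper gives no proof of its own, only the citation to Bhatia (Cor.~III.2.6), where this bound is likewise obtained from the Courant--Fischer minimax principle by comparing Rayleigh quotients. The gap inequality then follows exactly as you wrote it.
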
 \begin{theorem}[Davis--Kahan sin$\Theta$]\label{thm:davis-kahan} Let $A,B$ be symmetric and $u$ (resp.\ $v$) a unit eigenvector of $A$ (resp.\ $B$) associated with a simple eigenvalue $\lambda$ (resp.\ $\mu$). If $\gap:=\min\{\lambda_+-\lambda,\lambda-\lambda_-\}$ is the spectral gap around $\lambda$ in $A$, then \[ \sin\angle(u,v)\ \le\ \frac{\|A-B\|_{\op}}{\gap}. \] \end{theorem}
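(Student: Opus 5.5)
We read the statement with the implicit convention used wherever it is applied (Theorem~\ref{thm:main-gap-error-fixed}(ii)): $v$ is the eigenvector of $B$ whose eigenvalue $\mu$ has the same index as $\lambda$. Without this matching the inequality fails: if $\mu$ were the eigenvalue of $B$ near $\lambda_+$, then $v$ could be nearly orthogonal to $u$ while $\|A-B\|_{\op}$ is tiny. Write $E:=B-A$, $\epsilon:=\|E\|_{\op}$ and $g:=\gap$. If $\epsilon\ge g$ the claim is trivial since $\sin\angle(u,v)\le 1$, so we assume $\epsilon<g$.

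\textbf{Step 1: two-component residual identity.} Write $v=\cos\theta\,u+\sin\theta\,w$ with $w\perp u$ a unit vector, $\theta=\angle(u,v)\in[0,\pi/2]$, and let $P^\perp:=I-uu^\top$. From $Bv=\mu v$ we get $(A-\mu I)v=-Ev$. Taking the component along $u$ gives $(\lambda-\mu)\cos\theta=-u^\top Ev$. Taking the component orthogonal to $u$ gives $(A-\mu I)\sin\theta\,w=-P^\perp Ev$, and $w$ lies in the span of the other eigenvectors of $A$. Set $s:=|\lambda-\mu|$ and $\delta:=\min_{\lambda'\in\mathrm{spec}(A)\setminus\{\lambda\}}|\lambda'-\mu|$. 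Pythagoras for $Ev=(u^\top Ev)u+P^\perp Ev$ then yields the single inequality
\[
s^2\cos^2\theta+\delta^2\sin^2\theta\ \le\ \|Ev\|_2^2\ \le\ \epsilon^2 .
\]

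\textbf{Step 2: locate $\mu$.} By Weyl (Theorem~\ref{thm:weyl}) and index matching, $s\le\epsilon$. Because $\lambda$ is simple with neighbours $\lambda_\pm$ at distance at least $g$, we get $\delta\ge g-s$, which is positive since $s\le \epsilon<g$. Substituting, we obtain
\[
F(s):=s^2\cos^2\theta+(g-s)^2\sin^2\theta\ \le\ \epsilon^2,\qquad s\in[0,\epsilon].
\]
The plan is to minimize $F$ over the admissible $s$ and derive $\sin\theta\le\epsilon/g$.

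\textbf{Step 3 (main obstacle).} The unconstrained minimum of $F$ is at $s=g\sin^2\theta$, with value $g^2\sin^2\theta\cos^2\theta$. This only gives $\sin 2\theta\le 2\epsilon/g$, which is weaker than the target near $\theta=\pi/4$. I would therefore split into two cases.
\begin{itemize}
\item If $g\sin^2\theta\ge\epsilon$, the constrained minimum is at $s=\epsilon$. The bound $F(\epsilon)\le\epsilon^2$ forces $(g-\epsilon)^2\sin^2\theta\le\epsilon^2\sin^2\theta$, hence $g\le 2\epsilon$.
\item Otherwise $\sin^2\theta<\epsilon/g$, and we combine this with $\sin^2\theta\cos^2\theta\le\epsilon^2/g^2$.
\end{itemize}
Closing the remaining window $\epsilon/g\in[\tfrac12,1)$ and the residual factor $\cos\theta$ is the delicate part. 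My first attempt will be to also exploit the eigen-equation for $u$ against $B$, namely $(B-\lambda)u=Eu$, together with the symmetry of $E$. This supplies a second inequality, $(\mu-\lambda)\langle u,v\rangle=\langle Eu,v\rangle$, coupled with the orthogonal part, which should let us replace $g-s$ by $g$ in the $\sin\theta$ term. This is the classical route to the constant-one $\sin\Theta$ bound when the relevant spectral separation is measured between $\mu$ and $\mathrm{spec}(A)\setminus\{\lambda\}$.

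\textbf{Fallback.} If this does not close, we use the bound $\delta\ge g-\epsilon$ to obtain $\sin\theta\le\epsilon/(g-\epsilon)$. This suffices in the regime $\delta_k\le\Gamma_k^{\pop}/4$ where the theorem is used, at the cost of an absolute constant that is absorbed into $C$ in \eqref{eq:err-bnd-fixed}.
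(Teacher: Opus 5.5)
\paragraph{The gap is in Step~3, and it cannot be closed.} The statement as written, with constant $1$ and the gap measured in $A$, is false even under your index-matching convention. Your inequality $F(s)\le\epsilon^2$ is already sharp.

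Take $A=\diag(-1,0,1)$, $\lambda=0$ and $u=e_2$, so $\gap=1$. Let $B$ agree with $A$ except that its lower-right $2\times2$ block is $\begin{psmallmatrix}0.32&-0.24\\-0.24&0.68\end{psmallmatrix}$. The corresponding block of $E=B-A$ is $\begin{psmallmatrix}0.32&-0.24\\-0.24&-0.32\end{psmallmatrix}$, so $\|E\|_{\op}=0.4$. The matrix $B$ has simple eigenvalues $-1<0.2<0.8$, and the index-matched eigenvector is $v=(0,2,1)^\top/\sqrt5$. Hence $\sin\angle(u,v)=1/\sqrt5\approx0.447>0.4$.

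In this example $s=g\sin^2\theta=0.2$, $\delta=g-s=0.8$ and $\|Ev\|_2=\epsilon$. So every inequality in your Steps~1--2 holds with equality, and the unconstrained minimizer of $F$ is admissible. In other words, the bound $\sin\theta\cos\theta\le\epsilon/g$ is attained.

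More generally, rotate the eigenbasis of the block $\diag(0,g)$ inside $\diag(-g,0,g)$ by $\theta\in(0,\pi/4)$, with eigenvalues $g\sin^2\theta<g\cos^2\theta$. This gives $\|E\|_{\op}=g\sin\theta\cos\theta$. The claimed inequality therefore fails by the factor $1/\cos\theta$ at every scale, not only in the window $\epsilon/g\in[\tfrac12,1)$.

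Your proposed extra input does not help. Pairing $(B-\lambda)u=Eu$ with $v$ gives $(\mu-\lambda)\cos\theta=u^\top Ev$. By symmetry of $E$, this is exactly the $u$-component identity of Step~1. So there is no second inequality, and $g-s$ cannot be replaced by $g$.

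\paragraph{What is true, and what your work already proves.} The constant-one bound holds when the separation is measured from $\mu$, namely $\delta=\mathrm{dist}(\mu,\mathrm{spec}(A)\setminus\{\lambda\})$. Your Step~1 gives it immediately, since $\delta\sin\theta\le\|P^\perp Ev\|_2\le\epsilon$.

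With the gap of $A$ one must lose a constant, and your Steps~1--3 already deliver it without any further idea:
\begin{itemize}
\item If $\theta\le\pi/4$, then $\sin\theta\cos\theta\le\epsilon/g$ gives $\sin\theta\le\sqrt2\,\epsilon/g$.
\item If $\theta>\pi/4$, both of your cases force $\epsilon/g\ge\tfrac12$, so $\sin\theta\le1\le2\epsilon/g$.
\end{itemize}
This is the Yu--Wang--Samworth form $\sin\angle(u,v)\le 2\|A-B\|_{\op}/\gap$, and it needs no smallness assumption. Your fallback $\epsilon/(g-\epsilon)$ is also correct, but it degenerates as $\epsilon\to g$.

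The paper quotes this theorem as a standard tool without proof. Its one application is the Davis--Kahan step in the one-step misclustering theorem, which uses index-matched eigenvectors for $\lambda_2$ and already carries an unspecified constant $C$. So the theorem should be restated in one of these corrected forms; in the form written, it cannot be proved.
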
 \begin{theorem}[Matrix Bernstein for self-adjoint sums {\cite[Thm.~1.6.2]{tropp2012}}]\label{thm:tropp-bernstein} Let $\{X_k\}_{k=1}^m$ be independent, mean--zero, self-adjoint random matrices in $\mathbb{R}^{d\times d}$. Assume the uniform bound $\|X_k\|_{\op}\le L$ almost surely. Define the variance proxy \[ v:=\left\|\sum_{k=1}^m \mathbb{E}\big[X_k^2\big]\right\|_{\op}. \] Then for all $t\ge 0$, \[ \Pr\!\left(\Big\|\sum_{k=1}^m X_k\Big\|_{\op} \ge t\right) \ \le\ 2d\cdot\exp\!\left(\!-\frac{t^2/2}{v + Lt/3}\right). \] In particular, with probability at least $1-d^{-c}$, \[ \Big\|\sum_{k=1}^m X_k\Big\|_{\op} \ \le\ C\Big(\sqrt{v\,\log d}+L\,\log d\Big), \] for absolute constants $c,C>0$. \end{theorem}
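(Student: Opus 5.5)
The plan follows Tropp's matrix Laplace transform method: a scalar Bernstein moment-generating-function bound, lifted to matrices via spectral calculus, and combined across summands with Lieb's concavity theorem.

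Write $Y:=\sum_{k}X_k$. Since $\|Y\|_{\op}=\max\{\lambda_{\max}(Y),\lambda_{\max}(-Y)\}$, and $\{-X_k\}$ satisfies the same hypotheses with the same $L$ and $v$, it suffices to prove the one-sided bound
\[
\Pr(\lambda_{\max}(Y)\ge t)\le d\exp\!\Big(-\frac{t^2/2}{v+Lt/3}\Big);
\]
a union bound then supplies the factor $2$.

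\emph{Step 1 (Laplace transform).} For every $\theta>0$, Markov's inequality together with $e^{\theta\lambda_{\max}(Y)}=\lambda_{\max}(e^{\theta Y})\le \operatorname{tr}e^{\theta Y}$ gives
\[
\Pr(\lambda_{\max}(Y)\ge t)\le e^{-\theta t}\,\E\operatorname{tr}e^{\theta Y}.
\]

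\emph{Step 2 (subadditivity of the matrix cgf).} Using independence and Lieb's theorem (concavity of $A\mapsto\operatorname{tr}\exp(H+\log A)$ on positive definite matrices), condition on all but one summand and apply Jensen's inequality repeatedly. This yields
\[
\E\operatorname{tr}\exp\Big(\sum_k\theta X_k\Big)\le\operatorname{tr}\exp\Big(\sum_k\log\E e^{\theta X_k}\Big).
\]
I expect this to be the main obstacle. It is the only genuinely noncommutative ingredient, because $e^{A+B}\neq e^Ae^B$ and Golden--Thompson does not extend to more than two factors. I would therefore invoke Lieb's theorem as a black box, as in the cited reference, rather than reprove it.

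\emph{Step 3 (Bernstein mgf bound).} For $0<\theta<3/L$, set $g(\theta):=\frac{\theta^2/2}{1-L\theta/3}$. The scalar inequality $e^{\theta x}\le 1+\theta x+g(\theta)x^2$ holds for $|x|\le L$; it follows from the series $\sum_{q\ge2}\theta^qx^q/q!\le \frac{\theta^2x^2}{2}\sum_{q\ge0}(L\theta/3)^q$, using $q!\ge 2\cdot3^{q-2}$. Spectral calculus transfers this to $e^{\theta X_k}\preceq I+\theta X_k+g(\theta)X_k^2$. Taking expectations and using $\E X_k=0$ gives $\E e^{\theta X_k}\preceq I+g(\theta)\E X_k^2\preceq\exp(g(\theta)\E X_k^2)$. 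Since $\log$ is operator monotone, $\log\E e^{\theta X_k}\preceq g(\theta)\E X_k^2$. Since $\operatorname{tr}\exp$ is monotone in the semidefinite order,
\[
\operatorname{tr}\exp\Big(\sum_k\log\E e^{\theta X_k}\Big)\le \operatorname{tr}\exp\Big(g(\theta)\sum_k\E X_k^2\Big)\le d\,e^{g(\theta)v}.
\]

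\emph{Step 4 (optimize $\theta$).} Combining Steps 1--3 gives $\Pr(\lambda_{\max}(Y)\ge t)\le d\exp(-\theta t+g(\theta)v)$. Choosing $\theta=t/(v+Lt/3)$, which is $<3/L$, gives exponent exactly $-\frac{t^2/2}{v+Lt/3}$.

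\emph{Step 5 (the ``in particular'' form).} Take $t=C(\sqrt{v\log d}+L\log d)$. If $v\ge Lt/3$, the exponent is at least $t^2/(4v)\ge C^2\log d/4$. Otherwise it is at least $3t/(4L)\ge 3C\log d/4$. Choosing $C$ large relative to $c$ makes $2d\cdot d^{-\min(C^2/4,\,3C/4)}\le d^{-c}$ for $d\ge2$.
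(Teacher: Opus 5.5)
Your proposal is correct and is essentially the proof from the cited source (Tropp's matrix Laplace transform argument: Lieb's concavity for subadditivity of the matrix cgf, the Bernstein bound $\log\E e^{\theta X_k}\preceq g(\theta)\E X_k^2$, and the choice $\theta=t/(v+Lt/3)$); the paper itself only quotes the theorem and gives no proof. The one loose end is the degenerate case $t=0$ or $v=0$, where your $\theta$ is not in $(0,3/L)$, but there the bound holds trivially: for $t=0$ the right-hand side is $2d\ge 1$, and for $v=0$ every $X_k=0$ a.s.
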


\begin{lemma}[Normalized Laplacian perturbation for bounded degrees]\label{lem:normlap-perturb}
Let $W,\widetilde W$ be symmetric nonnegative matrices and let
$D:=\Deg(W)$, $\widetilde D:=\Deg(\widetilde W)$ with $D_{ii}=d_i$, $\widetilde D_{ii}=\widetilde d_i$.
Assume $\min_i d_i\ge m$ and $\min_i \widetilde d_i\ge m$ for some $m>0$.
Then
\[
\|L(W)-L(\widetilde W)\|_{\op}
\ \le\
\frac{1}{m}\,\|W-\widetilde W\|_{\op}
\ +\
\frac{\|d-\widetilde d\|_\infty}{m^2}\,\big(\|W\|_{\op}+\|\widetilde W\|_{\op}\big).
\]
\end{lemma}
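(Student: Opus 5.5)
The plan is to write the difference of the two normalized Laplacians as a sum of three terms, each containing exactly one "difference" factor, and bound every factor in operator norm. Since $L(W)=I-\Deg(W)^{-1/2}W\Deg(W)^{-1/2}$, and the hypothesis $\min_i d_i,\min_i\widetilde d_i\ge m>0$ makes both $D^{-1/2}$ and $\widetilde D^{-1/2}$ well defined, we have $L(W)-L(\widetilde W)=\widetilde D^{-1/2}\widetilde W\widetilde D^{-1/2}-D^{-1/2}WD^{-1/2}$. I would use the telescoping identity
\[
D^{-1/2}WD^{-1/2}-\widetilde D^{-1/2}\widetilde W\widetilde D^{-1/2}
= D^{-1/2}(W-\widetilde W)D^{-1/2}
+\big(D^{-1/2}-\widetilde D^{-1/2}\big)\widetilde W D^{-1/2}
+\widetilde D^{-1/2}\widetilde W\big(D^{-1/2}-\widetilde D^{-1/2}\big).
\]
It can be checked by expanding the right-hand side: the cross terms $\widetilde D^{-1/2}\widetilde W D^{-1/2}$ cancel.

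\emph{First term.} The diagonal factors satisfy $\|D^{-1/2}\|_{\op}=\max_i d_i^{-1/2}\le m^{-1/2}$, and the same bound holds for $\widetilde D^{-1/2}$. Submultiplicativity then gives
\[
\|D^{-1/2}(W-\widetilde W)D^{-1/2}\|_{\op}\le \frac{\|W-\widetilde W\|_{\op}}{m}.
\]

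\emph{Second and third terms.} Lemma~\ref{lem:inv-sqrt-diag-fixed}, applied with the common lower bound $m$, gives
\[
\|D^{-1/2}-\widetilde D^{-1/2}\|_{\op}\le \tfrac12 m^{-3/2}\|d-\widetilde d\|_\infty .
\]
Multiplying by $\|\widetilde W\|_{\op}$ and by a remaining inverse square-root degree factor of norm at most $m^{-1/2}$, each of the two terms is at most $\tfrac12\|d-\widetilde d\|_\infty\|\widetilde W\|_{\op}/m^2$. Their sum is therefore at most
\[
\frac{\|d-\widetilde d\|_\infty}{m^2}\,\|\widetilde W\|_{\op}\le \frac{\|d-\widetilde d\|_\infty}{m^2}\big(\|W\|_{\op}+\|\widetilde W\|_{\op}\big).
\]
If one prefers a symmetric-looking argument, one can instead split the middle terms so that one carries $W$ and the other $\widetilde W$; this yields exactly the stated sum.

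Adding the three bounds gives the claimed inequality. There is no real obstacle here. The only points needing care are getting the algebraic decomposition right, so that each term has exactly one difference factor, and recording that the positivity assumption on the degrees is what makes $D^{-1/2}$ and $\widetilde D^{-1/2}$ well defined. Nonnegativity and symmetry of $W$ and $\widetilde W$ are not used beyond that.
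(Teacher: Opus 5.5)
Your proof is correct and follows the paper's argument: a three-term telescoping of $D^{-1/2}WD^{-1/2}-\widetilde D^{-1/2}\widetilde W\widetilde D^{-1/2}$ with one difference factor per term, the bound $\|D^{-1/2}\|_{\op},\|\widetilde D^{-1/2}\|_{\op}\le m^{-1/2}$, and Lemma~\ref{lem:inv-sqrt-diag-fixed} for the inverse square-root difference. The only difference is the order of telescoping: the paper's middle term is $\widetilde D^{-1/2}(W-\widetilde W)D^{-1/2}$ and its outer terms carry $W$ and $\widetilde W$, giving $\tfrac12(\|W\|_{\op}+\|\widetilde W\|_{\op})$ where you get $\|\widetilde W\|_{\op}$; both are dominated by the stated bound.
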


\begin{proof}
Write $S(W):=D^{-1/2}WD^{-1/2}$ so $L(W)=I-S(W)$. Add and subtract:
\[
S(W)-S(\widetilde W)
=(D^{-1/2}-\widetilde D^{-1/2})WD^{-1/2}
+\widetilde D^{-1/2}(W-\widetilde W)D^{-1/2}
+\widetilde D^{-1/2}\widetilde W(D^{-1/2}-\widetilde D^{-1/2}).
\]
Take operator norms and use $\|D^{-1/2}\|_{\op},\|\widetilde D^{-1/2}\|_{\op}\le m^{-1/2}$ to get
\[
\|S(W)-S(\widetilde W)\|_{\op}
\le
\|D^{-1/2}-\widetilde D^{-1/2}\|_{\op}\,(\|W\|_{\op}+\|\widetilde W\|_{\op})\,m^{-1/2}
+\frac{1}{m}\|W-\widetilde W\|_{\op}.
\]
By the scalar mean value theorem applied entrywise to $t\mapsto t^{-1/2}$,
\[
\|D^{-1/2}-\widetilde D^{-1/2}\|_{\op}
=\max_i \big|d_i^{-1/2}-\widetilde d_i^{-1/2}\big|
\le \tfrac12\,m^{-3/2}\,\|d-\widetilde d\|_\infty.
\]
Combine these bounds and absorb the factor $1/2$ into constants. Since
$\|L(W)-L(\widetilde W)\|_{\op}=\|S(W)-S(\widetilde W)\|_{\op}$, the claim follows.
\end{proof}

\bibliographystyle{plainnat}
\bibliography{references}

\end{document}